\def\Re{{\operatorname{Re}}}
\def\openone{\leavevmode\hbox{\small1\kern-3.8pt\normalsize1}}
\def\RR{\mathbb{R}}
\def\NN{\mathbb{N}}
\def\11{\mathbb{I}}
\def\sln{\succeq_{\operatorname{l.n.}}}
\newtheorem{definition}{Definition}[section]
\newtheorem{proposition}[definition]{Proposition}
\newtheorem{lemma}[definition]{Lemma}
\newtheorem{theorem}[definition]{Theorem}
\newtheorem{corollary}[definition]{Corollary}
\newcommand{\tr}{\mathop{\rm Tr}\nolimits}
\newcommand{\bra}[1]{\langle#1|}
\newcommand{\ket}[1]{|#1\rangle}
\newcommand{\cA}{{\cal A}}
\newcommand{\cD}{{\cal D}}
\newcommand{\cF}{{\mathcal{F}}}
\newcommand{\cN}{{\cal N}}
\newcommand{\cI}{{\cal I}}
\newcommand{\cL}{{\cal L}}
\newcommand{\cM}{{\mathcal{M}}}
\def\d{\mathrm{d}}
\def\sln{\succeq_{\operatorname{l.n.}}}
\numberwithin{equation}{section}
\DeclareRobustCommand\openone{\leavevmode\hbox{\small1\normalsize\kern-.33em1}}
\newcommand{\id}{{\rm{id}}}
\newcommand{\be}{\begin{equation}}
	\newcommand{\ee}{\end{equation}}
\newcommand{\bea}{\begin{eqnarray}}
	\newcommand{\eea}{\end{eqnarray}}
\newcommand{\beas}{\begin{eqnarray*}}
	\newcommand{\eeas}{\end{eqnarray*}}
\DeclareFontFamily{U}{mathx}{\hyphenchar\font45}
\DeclareFontShape{U}{mathx}{m}{n}{<-> mathx10}{}
\DeclareSymbolFont{mathx}{U}{mathx}{m}{n}
\DeclareMathAccent{\widebar}{0}{mathx}{"73}
\newcommand{\Renyi}{R{\'e}nyi~}
\newcommand{\Hell}{H}
\renewcommand{\d}{\textnormal{d}}
\newcommand{\cX}{{\cal X}}
\renewcommand{\Re}{D}
\newcommand{\Id}{{\mathds{1}}}
\DeclareMathAccent{\widehat}{0}{mathx}{"70}
\DeclareMathAccent{\widecheck}{0}{mathx}{"71}
\title{Quantum R\'enyi and $f$-divergences from integral representations}
\author[1,2]{Christoph Hirche\thanks{\href{mailto:christoph.hirche@gmail.com}{christoph.hirche@gmail.com}}}
\affil[1]{Zentrum Mathematik, Technical University of Munich, 85748 Garching, Germany}
\affil[2]{Centre for Quantum Technologies, National University of Singapore, Singapore}
\author[2,3]{Marco Tomamichel}
\affil[3]{Department of Electrical and Computer Engineering, National University of Singapore, Singapore}
\begin{document}

\maketitle

\begin{abstract}
Smooth Csisz\'ar $f$-divergences can be expressed as integrals over so-called hockey stick divergences. This motivates a natural quantum generalization in terms of quantum Hockey stick divergences, which we explore here. Using this recipe, the Kullback-Leibler divergence generalises to the Umegaki relative entropy, in the integral form recently found by Frenkel. 
We find that the R\'enyi divergences defined via our new quantum $f$-divergences are not additive in general, but that their regularisations surprisingly yield the Petz R\'enyi divergence for $\alpha < 1$ and the sandwiched R\'enyi divergence for $\alpha > 1$, unifying these two important families of quantum R\'enyi divergences. Moreover, we find that the contraction coefficients for the new quantum $f$ divergences collapse for all $f$ that are operator convex, mimicking the classical behaviour and resolving some long-standing conjectures by Lesniewski and Ruskai. We derive various inequalities, including new reverse Pinsker inequalites with applications in differential privacy and explore various other applications of the new divergences.
\end{abstract}

\section{Introduction}

Divergences are measures of dissimilarity between two statistical or quantum mechanical models and play a central role in information theory~\cite{csiszar2011information,liese2006divergences}. One reason for this is that most information measures with operational significance, like entropy and mutual information, can be expressed in terms of an underlying divergence~\cite{tomamichel2015quantum}. Divergences also appear directly in many mathematical arguments where their prominence derives from the fact that they are monotones under data processing. Probably the most prominent example of such a quantity is the relative entropy, with many applications including optimal rates in hypothesis testing or information transmission~\cite{hiai1991proper}. However, plenty of generalizations and other divergences like \Renyi divergences play an important role in information theory~\cite{mosonyi2015quantum}. As such, the study of a wide range of divergences and their quantum generalizations is of fundamental interest in quantum information theory.

In order to prove certain mathematical properties of divergences and relations amongst them, it often helps to find a convenient representation that allows one to look at the quantity from a new angle. For example, it has proven valuable to give integral representations of divergences. A large class of divergences, the Csisz\'ar $f$-divergences~\cite{csiszar63}, are parameterized by a twice differentiable convex function $f$ on non-negative reals with $f(1) = 0$ and defined as $D_f(P\|Q) = \sum_{x \in \cX} Q(x)\, f\left( P(x) / Q(x) \right)$.
Sason and Verd\'u recently gave an alternative representation for $f$-divergences as follows~\cite[Proposition 3]{sason2016f}. For two discrete probability mass functions $P$ and $Q$ with equal support on a set $\mathcal{X}$, we have
\begin{align}
    D_f(P\|Q) 
    &= \int_1^\infty  f''(\gamma) E_\gamma(P\|Q) + \gamma^{-3} f''(\gamma^{-1})E_\gamma(Q\|P) \, \d\gamma \,,
    \label{eq:integralformula}
\end{align}
where $E_\gamma$ is the so-called hockey-stick divergence, $E_{\gamma}(P\|Q) = \sum_{x \in \cX} \max \{  P(x) - \gamma Q(x) , 0 \}$, defined for all $\gamma \geq 1$~\cite{polyanskiy2010channel}. The above shows that every $f$-divergence can be written as an integral over hockey-stick divergences, establishing these as a fundamental quantity when investigating $f$-divergences and their properties. Consequently, as we will see, many properties of $f$-divergences can be derived directly from properties of the hockey-stick divergence in this representation.

Important examples of $f$-divergences are the Kullback-Leibler divergence, denoted by $D$, for $f: x \mapsto  x\log x$ and the Hellinger divergence, denoted by $\Hell_\alpha$, for $f: x \mapsto \frac{x^\alpha-1}{\alpha-1}$, respectively. Namely,
\begin{align}
    D(P\|Q) &=\int_1^\infty \frac{1}{\gamma} E_\gamma(P\|Q) + \frac{1}{\gamma^2} E_\gamma(Q\|P) \,\d\gamma, \\
    \Hell_\alpha(P\|Q) &= \alpha\int_1^\infty \gamma^{\alpha-2} E_\gamma(P\|Q) + \gamma^{-\alpha-1} E_\gamma(Q\|P) \,\d\gamma. \label{eq:Hellinger}
\end{align}
The latter is closely related to \Renyi divergences~\cite{renyi61}, denoted by $D_{\alpha}$, via
\begin{align}
    D_\alpha(P\|Q) = \frac{1}{\alpha-1} \log \big(1+(\alpha-1)\Hell_\alpha(P\|Q) \big) \label{eq:renyitransform}
\end{align}
An important special case of the Hellinger divergence is the $\chi^2$-divergence, which coincides with $H_2$. We refer to~\cite{{sason2016f}} for more examples and properties of classical $f$-divergences.

In this work we investigate a straight-forward quantum generalization of the $f$-divergence, denoted $D_f(\rho\|\sigma)$, using the expression in Equation~\eqref{eq:integralformula}, where we simply substitute the classical hockey-stick divergence with its quantum counterpart given by $E_{\gamma}(\rho\|\sigma) = \tr ( \rho - \gamma \sigma )_+$.\footnote{Here, $\tr (\cdot )_+$ and $\tr (\cdot)_-$ denote the sum over positive and negative eigenvalues of a Hermitian matrix, respectively.} The Hellinger distances $H_{\alpha}(\rho\|\sigma)$ and the corresponding \Renyi divergence $D_{\alpha}(\rho\|\sigma)$ are then defined via Equations~\eqref{eq:Hellinger} and~\eqref{eq:renyitransform}. These quantities are certainly well-defined when the supports of $\rho$ and $\sigma$ coincide, an assumption that we are willing to make in this introduction. We review properties of the quantum hockey-stick divergence and formally introduce our quantum generalization of the $f$-divergence for general states in Section~\ref{sec:fdivergence}. The quantities clearly obey many of the properties that we would expect from a quantum $f$-divergence. For example, joint convexity and the data-processing inequality (DPI) follow directly from the corresponding properties of the quantum hockey-Stick divergence. Clearly it also reduces to the classical $f$-divergence for commuting $\rho$ and $\sigma$. However, given the multitude of possible generalizations of classical $f$-divergences (see, e.g.,~\cite{petz1985quasi,petz1986quasi,petz1998contraction,matsumoto2018new,wilde2018optimized} for some prominent examples) what makes this one special? We will attempt to convince you that it deserves our attention because it possesses some remarkable properties. 

This investigation is inspired by recent investigations of integral formulae for the Umegaki relative entropy by Frenkel~\cite{frenkel2022integral} and Jen\v{c}ov\'a~\cite{jenvcova2023recoverability}. In particular, we show that the integral formula in~\cite[Theorem 6]{frenkel2022integral}, namely,
\begin{align}
    D(\rho\|\sigma) = \tr \big( \rho (\log \rho - \log \sigma) \big) = \int_{-\infty}^{\infty} \frac{1}{|t|(t-1)^2}\tr \big( (1-t)\rho+t\sigma \big)_{\!-}\, \d t
\end{align}
coincides with the new quantum $f$-divergence for $f: x \mapsto x \log x$. That is, we find a new integral representation for the Umegaki relative entropy:
\begin{align}
    D(\rho\|\sigma) = \int_1^\infty \frac{1}{\gamma} E_\gamma(\rho\|\sigma) + \frac{1}{\gamma^2}E_\gamma(\sigma\|\rho)  \, \d\gamma \,.
\end{align}
One of the questions this immediately raises is whether our generalization also agrees with known quantum $f$-divergences for other functions. We answer this partially in the affirmative by showing that the $\chi^2$-divergence emerging from the new definition has a form that was previously investigated, see e.g.~\cite[Section 4]{petz1998contraction} and~\cite[Example 1]{lesniewski1999monotone}, i.e., we find 
\begin{align}  
    \chi^2(\rho\|\sigma) \equiv H_2(\rho\|\sigma) 
    = \int_0^\infty \ \tr \big( \rho (\sigma + s\Id)^{-1} \rho (\sigma + s\Id)^{-1} \big)  \d s - 1 \,.
\end{align}
We can see already from this result that the corresponding \Renyi divergences, defined via Equation~\ref{eq:renyitransform}, do not generally concur with any of the known families of quantum \Renyi divergences. Moreover, to our distress, the new definitions are not generally additive for tensor product states.
However, one of our main results, discussed in Section~\ref{sec:renyi}, is that operationally meaningful quantum \Renyi divergences naturally emerge when we regularize this quantity. That is, we establish that
\begin{align}
     \lim_{n\to\infty} \frac{1}{n} D_\alpha \big(\rho^{\otimes n} \big\|\sigma^{\otimes n} \big) = \begin{cases}
         \widebar D_\alpha(\rho\|\sigma) &0<\alpha<1 \\
         \widetilde D_\alpha(\rho\|\sigma) &1<\alpha<\infty
         \end{cases}, \label{Eq:DI-closed-formula-intro}
\end{align}
where $\widebar D_\alpha$ is the Petz \Renyi divergence~\cite{petz1985quasi} and $\widetilde D_\alpha$ is the sandwiched \Renyi divergence~\cite{muller2013quantum,wilde2014strong}. Interestingly, the above equality mirrors the range of $\alpha$ for which these divergences most commonly find applications, in particular, in the description of error and strong converse exponents in asymmetric binary quantum hypothesis testing~\cite{mosonyi2015quantum}. There, perhaps surprisingly, the corresponding optimal expressions are given using the two different \Renyi divergences depending on the error regime. This is the first instance where both R\'enyi divergences emerge naturally, via regularization, from the same smooth parent quantity.\footnote{It has previously been suggested  that the right hand side of Equation~\eqref{Eq:DI-closed-formula-intro} is the operationally most meaningful generalization of the classical \Renyi divergence~\cite{mosonyi2015quantum}. It is worth noting, however, that this patched up quantity is not twice continuously differentiable~\cite{lintomamichel14}. More importantly, recent results, e.g.\ in~\cite{liyao23}, indicate that the sandwiched R\'enyi divergence has operational meaning for $\alpha < 1$ as well.}

For our second main application, we turn to contraction coefficients. Since contraction coefficients are intrinsically linked to the strong data-processing properties of their underlying divergence, they play an important role in many areas of quantum information processing. In general, for any $f$-divergence the strong data processing inequality (SDPI) constant is defined as
\begin{align}
     \eta_f(\cA; \sigma)  =  \sup_{\rho} \frac{D_f(\cA(\rho) \| \cA(\sigma))}{D_f(\rho\|\sigma)} \,. 
\end{align} 
and the corresponding contraction coefficient as $\eta_f(\cA) = \sup_\sigma \eta_f(\cA;\sigma)$. 
Important examples include the SDPI constant and contraction coefficient for the relative entropy, denoted by $\eta_{\Re}$, and for the $\chi^2$-divergence, denoted by $\eta_{\chi^2}$. Additionally we consider the contraction coefficient for the trace distance, denoted by $\eta_{\tr}$.  Naturally, it is of interest to find meaningful bounds between different contraction coefficients. In fact, in the classical setting such bounds are fairly well understood and perhaps surprisingly it is known that for a large class of functions $f$, namely operator convex functions, the corresponding contraction coefficient is indeed the same. The quantum equivalents of these questions have been intensely studied as well, however the structure emerging appears to be significantly more complicated then in the classical setting and many relationships remain unknown~\cite{temme2010chi,hiai2016contraction}.  

In Section~\ref{Sec:ContractionCoef} we will discuss relationships between contraction coefficients based on our integral $f$-divergences. Notably, we will show that most of the structure present in the classical case transfers also to the quantum setting. In particular, for $f$ with $0<f''(1)<\infty$, we have
\begin{align}
     \eta_{x^2}(\cA) \leq \eta_f(\cA) \leq \eta_{\tr}(\cA) \,, 
 \end{align}
which includes the useful inequality $\eta_{\Re}(\cA) \leq \eta_{\tr}(\cA)$ and hence resolves an important special case of~\cite[Conjecture 4.8]{lesniewski1999monotone}. Additionally, we also show that, analogous to the classical setting, for operator convex functions the contraction coefficients collapse. Important examples of this include the relative entropy and the Hellinger divergence for $0\leq\alpha\leq 2$. In particular, for operator convex $f$, we find
\begin{align}
    \eta_f(\cA) = \eta_{x^2}(\cA) = \eta_{\Re}(\cA) \,. \label{Eq:contraction-summary-Intro}
\end{align}
This can again be compared to~\cite[Conjecture 4.8]{lesniewski1999monotone}, where an analogous result was conjectured for the Petz $f$-divergence. At the end of the section we discuss the implications of our findings on partial orders between quantum channels. In particular, we give equivalences between the well known less noisy order and different criteria based on the new $f$-divergences.

Finally, in Section~\ref{sec:apps} we discuss further properties of the $f$-divergences that go beyond the questions discussed above, aiming to exploit the integral representation. Here we give the main results for the Umegaki relative entropy. We can exploit the representation to give reverse Pinsker-type inequalities, which take the form
\begin{align}
     D(\rho\|\sigma)  &\leq \left( \frac{e^{D_{\max}(\rho\|\sigma)}D_{\max}(\rho\|\sigma)}{e^{D_{\max}(\rho\|\sigma)}-1} + \frac{D_{\max}(\sigma\|\rho)}{1-e^{D_{\max}(\sigma\|\rho)}} \right)  \|\rho-\sigma\|_{\tr} \\ 
     &\leq \Xi(\rho\|\sigma)  \|\rho-\sigma\|_{\tr},
\end{align}
where $\Xi(\rho,\sigma) := \max\{D_{\max}(\rho\|\sigma), D_{\max}(\sigma\|\rho)\}$ is the Thompson metric, $D_{\max}(\rho\|\sigma)$ is the $\max$-relative entropy, and $\|\rho-\sigma\|_{\tr} := \frac12 \|\rho-\sigma\|_1$ is the trace distance. The first inequality is in fact an equality under a particular linearity condition that is satisfied for commuting qubit states.
More generally, our technique yields relatively simple upper bounds on $f$-divergences in terms of the trace distance. We will also see that these bounds have direct applications when discussing quantum differential privacy~\cite{zhou2017differential,hirche2022quantum}. As a second example, we give continuity bounds on the divergences of the form
\begin{align}
     D(\rho\|\sigma) - D(\tau\|\sigma) &\leq  \Omega(\rho,\sigma) \|\rho-\tau\|_{\tr} \,, \label{Eq:D-continuity-2-intro}
\end{align} 
where $\Omega(\rho,\sigma) := D_{\max}(\rho\|\sigma) + D_{\max}(\sigma\|\rho)$ is the Hilbert projective metric.
Note that these bounds, in contrast to most bounds found in the literature (see, e.g., \cite{bluhm2023continuity} and references therein), do not explicitly depend on the smallest eigenvalue of $\sigma$, but instead on the Hilbert projective metric. This can be advantageous especially when $\rho$ and $\sigma$ have eigenvalues that are suitably aligned.

The hockey-stick divergence is also closely related to the concept of differrential privacy~\cite{hirche2022quantum}. As our next application we leverage this connection to give bounds on divergences under differrentially private noise with an application to hypothesis testing. In particular we give a converse bound for a Stein's Lemma under differential privacy, generalizing a classical result in~\cite{asoodeh2021local}. Finally, we give upper bounds on amortized channel divergences for different $f$-divergences in terms of the diamond norm. 

\emph{Note added:} During the completion of this work we became aware of a recent update to~\cite{frenkel2022integral}, proposing a \Renyi divergence generalizing the integral representation in~\cite[Theorem 6]{frenkel2022integral}, see~\cite[Section 9]{frenkel2022integral}. Following the same proof as in Corollary~\ref{Cor:D-integral} one can show that their definition is equal to ours. The definition was not  further explored in~\cite{frenkel2022integral}.

\section{Quantum $f$ divergences from hockey-stick divergence}
\label{sec:fdivergence}

\subsection{Notation}

Throughout this work we consider only finite dimensional Hilbert spaces and linear operators acting on them. Amongst them, positive operators are those that have only non-negative eigenvalues and quantum states are positive operators with unit trace. We use standard notation in the field of quantum information theory. Quantum states are denoted by lower-case greek letters $\rho$, $\sigma$, $\tau$, etc. For a general hermitian operator $X$ we denote by $X_+$ and $X_-$ the positive and negative part of the eigen-decomposition of $X$, such that $X=X_+ - X_-$. We use $(\cdot)^{-1}$ to denote the generalized (Penrose) inverse. We write $\rho\ll\sigma$ if the support of $\rho$ is contained in the support of $\sigma$ and we write $\rho\ll\gg\sigma$ if the supports of $\rho$ and $\sigma$ coincide. By $\lambda_{\min}(\cdot)$ we denote the smallest and by $\lambda_{\max}(\cdot)$ the largest eigenvalue. 
Quantum channels are completely positive trace preserving maps from linear operators to linear operators. We use $\log(x)$ to denote the natural logarithm. 

\subsection{Quantum hockey-stick divergence}

The quantum generalization of the hockey-stick divergence first explicitly appeared in~\cite{sharma2012strong}. The $\gamma$-hockey-stick divergence is usually defined for any two quantum states $\rho$ and $\sigma$ and $\gamma\geq 1$ as 
$E_\gamma(\rho\|\sigma) := \tr(\rho-\gamma\sigma)_+$. Our definition is an extension to positive operators $A,B$ and all $\gamma\geq0$, 
\begin{align}\label{Eq:Def-HockeyStick}
    E_\gamma(A\|B) := \tr(A-\gamma B)_+ - (\tr(A-\gamma B))_+
\end{align}
Similar extension to $0\leq\gamma\leq 1$ have been considered in the classical setting, e.g. in~\cite{Zamanlooy2023}. 
Besides the application in strong converses for which the divergence was used in~\cite{sharma2012strong}, it has also recently been proven useful in the context of quantum differential privacy in~\cite{hirche2022quantum}. As the hockey-stick divergence is central to this work, we will now list its most important properties, which were mostly proven in~\cite{sharma2012strong, hirche2022quantum}. Here we also generalize some of the results to positive operators and the extended range of $\gamma$. 

\begin{lemma}\label{Lem:HS-Properties}
Let $A, B, C$ be positive semidefinite operators and $\rho, \sigma, \tau$ quantum states. The following properties hold. 
\begin{enumerate}
    
    \item (Variational formula) For $\gamma\geq0$, we have
    \begin{align}
        E_\gamma(A\|B) &= \frac12\|A-\gamma B\|_1 +\frac12|\tr(A-\gamma B)| \\
        &= \sup_{0\leq M\leq \Id} \tr M(A-\gamma B)-(\tr(A-\gamma B))_+. 
    \end{align}    
    
    \item (Orthogonality) For $\gamma>0$, we have $E_{\gamma}(A\|B) = \tr A - (\tr(A-\gamma B))_+$ if and only if $A \perp B$, i.e., if $A$ and $B$ are orthogonal.
    
    \item (Positivity) We have for all $\gamma\geq 0$ that $E_\gamma(A\| B) \geq 0$. 
    \item (Monotonicity in $\gamma$) The function $\gamma \mapsto E_{\gamma}(A\|B)$ is monotonically decreasing for $\gamma \geq 1$
    with $E_1(\rho\|\sigma) = \frac12 \| \rho - \sigma\|_1$. 
    
    \item (Relation to max-divergence) For $\gamma\geq1$ we have, 
    \begin{align}\label{equ:dmax-hockey}
      D_{\max}(A\|B)\leq\log\gamma \quad\iff\quad E_\gamma(A\|B)=0, 
    \end{align}
    where $D_{\max}(\rho\|\sigma) := \inf\{\lambda : \rho\leq e^{\lambda}\sigma\}$.
    
    \item (DPI) For any positive trace preserving map $\cA$ and $\gamma \geq 0$, we have
    \begin{align}
        E_\gamma(\cA(A)\|\cA(B)) \leq E_\gamma(A\|B). 
    \end{align}

    \item{(Exchange of arguments} For $\gamma\geq 0$, we have
    \begin{align}
    E_\gamma(A\|B) = \gamma E_{\frac1\gamma}(B\|A).\label{Eq:HS-exchange}
    \end{align}

    \item{(Stability)} For $\gamma\geq 0$, we have
    \begin{align}  E_\gamma(A\otimes C\| B\otimes C) = \tr(C) E_\gamma(A\|B).
    \end{align}
    
    \item{(Strong convexity)} Let $\gamma_1,\gamma_2\geq 1$ and consider states $\rho = \sum_x p(x)\rho_x$ and  $\sigma = \sum_x q(x)\sigma_x$ for some probability mass functions $p$ and $q$. Then, we have
    \begin{align}
    E_{\gamma_1\gamma_2}(\rho\|\sigma) \leq \sum_x p(x) E_{\gamma_1}(\rho_x\|\sigma_x) + \gamma_1  E_{\gamma_2}(p\| q). 
    \end{align}
    This in particular also implies convexity and joint convexity.

     \item (Triangle inequality)   For $\gamma_1, \gamma_2 \geq 1$, we have
    \begin{align}
    E_{\gamma_1\gamma_2}(\rho\|\sigma) \leq E_{\gamma_1}(\rho\|\tau) + \gamma_1 E_{\gamma_2}(\tau\|\sigma). \label{Eq:Triangle}
    \end{align}

    \item{(Subadditivity)} For $\gamma_1,\gamma_2\geq 1$, we have
    \begin{align}\label{equ:subadditivity_gamma}
    E_{\gamma_1\gamma_2}(\rho_1\otimes\rho_2\|\sigma_1\otimes\sigma_2)
    &\leq E_{\gamma_1}(\rho_1\|\sigma_1) + \gamma_1 E_{\gamma_2}(\rho_2\|\sigma_2), \\
    E_{\gamma_1\gamma_2}(\rho_1\otimes\rho_2\|\sigma_1\otimes\sigma_2) 
    &\leq E_{\gamma_1}(\rho_2\|\sigma_2) + \gamma_1 E_{\gamma_2}(\rho_1\|\sigma_1) .
    \end{align}

    \item{(Trace distance bound)} For $\gamma\geq 1$, we have
    \begin{align}
    E_\gamma(\rho\|\sigma) + E_\gamma(\rho\|\tau) \geq \frac{\gamma}{2}\|\tau - \sigma\|_1 +(1-\gamma).
    \end{align}

\end{enumerate}
\end{lemma}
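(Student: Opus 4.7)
The plan is to use the variational formula from part 1 of the lemma, applied with two complementary test operators. Note that since $\rho,\sigma,\tau$ are states and $\gamma \geq 1$, the correction term $(\tr(\rho-\gamma\sigma))_+$ in the variational formula vanishes (and similarly with $\tau$), so
\begin{align}
E_\gamma(\rho\|\sigma) = \sup_{0 \leq M \leq \Id} \tr M(\rho - \gamma \sigma), \qquad E_\gamma(\rho\|\tau) = \sup_{0 \leq N \leq \Id} \tr N(\rho - \gamma \tau).
\end{align}
Hence for any $0 \leq M \leq \Id$, we may take $N = \Id - M$ in the second supremum to obtain the lower bound
\begin{align}
E_\gamma(\rho\|\sigma) + E_\gamma(\rho\|\tau) \geq \tr M(\rho - \gamma \sigma) + \tr (\Id - M)(\rho - \gamma \tau).
\end{align}

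Next I would simplify the right-hand side. Expanding and using $\tr \rho = 1$, $\tr \tau = 1$, the terms $\tr M \rho$ and $-\tr M \rho$ cancel, and one is left with
\begin{align}
E_\gamma(\rho\|\sigma) + E_\gamma(\rho\|\tau) \geq 1 - \gamma + \gamma \,\tr M(\tau - \sigma).
\end{align}
Finally, I would choose $M$ to be the projector onto the positive part of $\tau - \sigma$, which by the standard variational formula for the trace norm satisfies $\tr M(\tau - \sigma) = \tfrac{1}{2}\|\tau - \sigma\|_1$. Substituting yields the claimed inequality
\begin{align}
E_\gamma(\rho\|\sigma) + E_\gamma(\rho\|\tau) \geq \frac{\gamma}{2}\|\tau - \sigma\|_1 + (1-\gamma).
\end{align}

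The only subtlety is making sure the variational identity (without the $(\tr(\cdot))_+$ correction) applies; this is why the hypothesis $\gamma \geq 1$ together with $\rho,\sigma,\tau$ being states is essential, since it guarantees $\tr(\rho - \gamma\sigma) = 1-\gamma \leq 0$. Beyond that, the argument is a short two-line manipulation, so I do not anticipate any genuine obstacle.
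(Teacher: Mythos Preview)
Your argument for item 12 is correct. The paper does not actually reprove this item; it simply cites \cite[Proposition II.5]{hirche2022quantum} for items 9--12, so there is no in-paper proof to compare against beyond noting that your variational-formula approach (choosing the complementary test $\Id - M$ and then specializing $M$ to the projector onto $(\tau-\sigma)_+$) is exactly the natural route and matches the spirit of the cited reference.
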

\begin{proof}
1. The first equality follows~\cite[Section II]{sharma2012strong} using $2(x-y)_+=|x-y|+(x-y)_+$. The second follows the proof in~\cite[Lemma II.1]{hirche2022quantum}. 
2. Follows from the definition and the assumption that $A,B$ are positive. 
3. This holds because $\tr(A-B)_+\geq (\tr(A-B))_+$. 
4. This follows from the definition and~\cite[Section II]{sharma2012strong}. 
5. This is by defintion, see also~\cite[Lemma II.6]{hirche2022quantum}. 
6. This is an extension of the proof of~\cite[Lemma 4]{sharma2012strong}, which we give here for completeness. Let $A-\gamma B=X_+-X_-$ for $X_+,X_-\geq0$ and define the projector $P_+=\{\cA(A)\geq \gamma\cA(B)\}$. First note that $\tr(\cA(A)-\gamma \cA(B))=\tr(A-\gamma B)$, because $\cA$ is trace preserving. Then we only need to show, 
\begin{align}
    &\tr(A-\gamma B)_+ = \tr X_+ = \tr\cA(X_+) \geq \tr P_+\cA(X_+) \\
    &\geq \tr P_+\cA(X_+-X_-) = \tr P_+\cA(A-\gamma B) = \tr(\cA(A)-\gamma \cA(B))_+
\end{align}
where all steps are either by definition or follow from the properties of the channel. This completes the proof. 
7. We give a brief proof that follows~\cite[Proposition II.5]{hirche2022quantum}), 
\begin{align}
    E_\gamma(A\|B) &= \sup_{0\leq M\leq \Id} \tr M(A-\gamma B)-(\tr(A-\gamma B))_+ \\
    &= \sup_{0\leq M\leq \Id} \tr (\Id-M)(A-\gamma B)-(\tr(A-\gamma B))_+ \\
    &= \tr(A-\gamma B) + \sup_{0\leq M\leq \Id} \tr M(\gamma B-A)-(\tr(A-\gamma B))_+ \\
    &=  \gamma\sup_{0\leq M\leq \Id} \tr M(B-\frac{1}{\gamma}A)  -\gamma(\tr(B-\frac1{\gamma} A))_+ \\
    &= \gamma E_{\frac1\gamma}(B\|A). 
\end{align}
8. This was proven in~\cite[Proposition II.5]{hirche2022quantum} for $\gamma\geq1$ and extends to all $\gamma\geq0$ because we showed above that also data processing holds for that range. 9.-12. These were all proven as given in~\cite[Proposition II.5]{hirche2022quantum}). 
\end{proof}
With these properties in mind we are all set up for the following sections.

\subsection{Integral representations of the Umegaki quantum relative entropy}\label{Sec:IntRep-RelEnt}

The quantum relative entropy is defined as, 
\begin{align}
    D(\rho\|\sigma) = \begin{cases}
        \tr\rho(\log\rho-\log\sigma) &\rho\ll\sigma \\
        \infty &\text{otherwise}
    \end{cases}. 
\end{align}
Recently, Frenkel~\cite{frenkel2022integral} gave the following integral representation of the relative entropy.  
\begin{theorem}[Theorem 6 in~\cite{frenkel2022integral}]
Let $\rho,\sigma$ be quantum states. Then
\begin{align}
    D(\rho\|\sigma) = \int_{-\infty}^{\infty} \frac{\d t}{|t|(t-1)^2}\tr((1-t)\rho+t\sigma)_-. 
\end{align}
\end{theorem}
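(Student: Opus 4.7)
The plan is a three-step reduction: apply a change of variables to turn Frenkel's integral into a one-sided hockey-stick representation, verify the latter in the commuting case by an eigenvalue-wise calculation, and then lift to arbitrary $\rho,\sigma$ via an operator-integral representation of the logarithm.

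For the change of variables, observe first that the integrand vanishes on $t\in[0,1]$ because $(1-t)\rho+t\sigma$ is a convex combination of positive operators and hence positive. On $t>1$, set $\gamma=t/(t-1)\in(1,\infty)$ and factor $(1-t)\rho+t\sigma=-(t-1)(\rho-\gamma\sigma)$, so $\tr((1-t)\rho+t\sigma)_-=(t-1)\tr(\rho-\gamma\sigma)_+=(t-1)E_\gamma(\rho\|\sigma)$ (the extended $E_\gamma$ agrees with the positive-part trace for states when $\gamma\geq 1$). Using $|t|(t-1)^2=\gamma/(\gamma-1)^3$ and $|dt/d\gamma|=1/(\gamma-1)^2$ turns this piece of the integral into $\int_1^\infty \gamma^{-1}E_\gamma(\rho\|\sigma)\,d\gamma$. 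The symmetric substitution $\gamma=(t-1)/t\in(1,\infty)$ on $t<0$, using $(1-t)\rho+t\sigma=-t(\gamma\rho-\sigma)$, yields $\int_1^\infty \gamma^{-2}E_\gamma(\sigma\|\rho)\,d\gamma$. Hence it suffices to prove
\begin{equation*}
D(\rho\|\sigma)=\int_1^\infty \frac{E_\gamma(\rho\|\sigma)}{\gamma}+\frac{E_\gamma(\sigma\|\rho)}{\gamma^2}\,d\gamma.
\end{equation*}

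For commuting $\rho,\sigma$, simultaneous diagonalization decomposes both hockey-stick divergences into sums over eigenvalue pairs, and an elementary primitives computation yields, for any $p,q>0$, $\int_1^\infty \gamma^{-1}(p-\gamma q)_+ + \gamma^{-2}(q-\gamma p)_+\,d\gamma = p\log(p/q)-p+q$. Summing over indices and using $\tr\rho=\tr\sigma=1$ recovers $D(\rho\|\sigma)$ exactly; this is the classical Sason--Verd\'u identity~\eqref{eq:integralformula} applied to $f(x)=x\log x$.

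For general $\rho,\sigma$, the plan is to combine the operator integral $\log\rho-\log\sigma=\int_0^\infty[(\sigma+sI)^{-1}-(\rho+sI)^{-1}]\,ds$ (valid for positive definite states, with the general case by continuity once $\rho\ll\sigma$) with the Danskin/envelope identity $\frac{d}{d\gamma}\tr(\rho-\gamma\sigma)_+=-\tr(\sigma P_\gamma)$, where $P_\gamma$ is the spectral projector of $\rho-\gamma\sigma$ onto $[0,\infty)$. Integrating the hockey-stick integral by parts (the boundary terms vanish because $E_\gamma(\rho\|\sigma)$ is compactly supported in $\gamma$ by Lemma~\ref{Lem:HS-Properties}) rewrites it as $\int_1^\infty \tr(\sigma P_\gamma)\log\gamma\,d\gamma$ plus an analogous expression from the second term; a Fubini exchange with the resolvent representation of $\log$, using the resolvent identity $(\sigma+sI)^{-1}-(\rho+sI)^{-1}=(\sigma+sI)^{-1}(\rho-\sigma)(\rho+sI)^{-1}$, should then identify the total with $\tr\rho(\log\rho-\log\sigma)$. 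The hard part will be precisely this operator-theoretic identification, because the family $\{P_\gamma\}$ does not form the spectral resolution of any single operator when $[\rho,\sigma]\neq 0$, so the eigenvalue-wise reasoning from the commuting case cannot be transplanted directly. I expect a delicate double-integral manipulation in $(s,\gamma)$ to be the substantive technical ingredient; a clean alternative is to appeal directly to Frenkel's spectral argument in~\cite{frenkel2022integral}, after which the change of variables above converts his formula into the hockey-stick representation used throughout this paper.
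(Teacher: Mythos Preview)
The paper does not supply its own proof of this statement: it is quoted verbatim as Theorem~6 of Frenkel~\cite{frenkel2022integral} and used as an input, with the change-of-variables computation you carry out in your first step being precisely the content of the paper's Corollary~\ref{Cor:D-integral}. So there is no ``paper's proof'' to compare against, only Frenkel's original argument (which the paper defers to).

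Your change-of-variables step is correct and matches the paper's derivation of the hockey-stick representation from Frenkel's formula. Your commuting-case verification is also fine and is essentially the classical identity recorded in Appendix~\ref{App:Direct-Proof}. The gap is the non-commuting case: you do not actually carry out the ``delicate double-integral manipulation in $(s,\gamma)$'' that you say should identify the integrated hockey-stick expression with $\tr\rho(\log\rho-\log\sigma)$; you only outline a hoped-for mechanism (integration by parts, Danskin-type derivative $\tfrac{d}{d\gamma}E_\gamma=-\tr(\sigma P_\gamma)$, resolvent identity, Fubini) and explicitly flag that the projectors $\{P_\gamma\}$ do not come from a single spectral resolution, which is exactly the obstruction to making the commuting argument go through. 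As written this is a sketch, not a proof. Moreover, your proposed fallback---``appeal directly to Frenkel's spectral argument''---is circular, since Frenkel's argument is the proof of the very statement you are asked to establish. If you want a self-contained proof along your lines, you must either complete the operator-integral identification (this is nontrivial and is essentially the substance of Frenkel's paper) or find a different route that does not reduce to citing the result you are proving.
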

This representation was then used in~\cite{jenvcova2023recoverability} to investigate the recoverability of quantum channels in the context of hypothesis testing. To that end~\cite[Corollary 1]{jenvcova2023recoverability} gave an alternative expression of the above integral representation. Here we take a similar route and give a slightly different new expression in terms of the quantum hockey-stick divergence. This generalizes the integral representation for the classical relative entropy given in~\cite{sason2016f} to the quantum setting. 
\begin{corollary}\label{Cor:D-integral}
Let $\rho,\sigma$ be quantum states. Then 
\begin{align}
    D(\rho\|\sigma) = \int_1^\infty \left(\frac{1}{s} E_s(\rho\|\sigma) + \frac{1}{s^2}E_s(\sigma\|\rho)\right) \d s. 
\end{align}
\end{corollary}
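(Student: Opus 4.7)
The plan is to derive the corollary directly from Frenkel's integral representation by a change of variables. Throughout, I would use the elementary identity $\tr(X)_- = \tr(-X)_+$ to rewrite the Frenkel integrand as $\tr((1-t)\rho+t\sigma)_- = \tr((t-1)\rho - t\sigma)_+$, which already has the shape of the quantum hockey-stick divergence once an appropriate scalar factor is pulled out.

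First I would observe that the range $0 \leq t \leq 1$ contributes nothing, since $(1-t)\rho + t\sigma$ is a convex combination of positive semidefinite operators, hence positive, and so its negative part vanishes. This splits Frenkel's integral cleanly into two pieces over $t \in (1,\infty)$ and $t \in (-\infty, 0)$, which I will convert respectively into the $\rho\|\sigma$ and $\sigma\|\rho$ terms on the right-hand side.

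For $t > 1$, I would factor
\begin{equation*}
(t-1)\rho - t\sigma = (t-1)\bigl(\rho - s\sigma\bigr), \qquad s := \tfrac{t}{t-1} \in (1,\infty),
\end{equation*}
so that $\tr((t-1)\rho - t\sigma)_+ = (t-1)\tr(\rho - s\sigma)_+$. Since $\rho,\sigma$ are states we have $\tr(\rho - s\sigma) = 1-s \leq 0$ for $s \geq 1$, so the correction term in the definition \eqref{Eq:Def-HockeyStick} vanishes and $\tr(\rho - s\sigma)_+ = E_s(\rho\|\sigma)$. The substitution $s = t/(t-1)$, whose Jacobian is $dt = -ds/(s-1)^2$, together with the identity $\frac{1}{t(t-1)} = \frac{(s-1)^2}{s}$, converts this portion of the integral into $\int_1^\infty s^{-1}E_s(\rho\|\sigma)\,ds$ after reversing the orientation. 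Analogously, for $t < 0$ I would set $u := -t > 0$ and write
\begin{equation*}
(t-1)\rho - t\sigma = u\sigma - (1+u)\rho = u\bigl(\sigma - s\rho\bigr), \qquad s := \tfrac{1+u}{u} = 1 + \tfrac{1}{u} \in (1,\infty),
\end{equation*}
yielding $\tr((t-1)\rho - t\sigma)_+ = u\,E_s(\sigma\|\rho)$ by the same state-trace argument. The substitution $u = 1/(s-1)$, with $du = -ds/(s-1)^2$ and $|t|(t-1)^2 = u(1+u)^2 = s^2/(s-1)^3$, transforms this portion into $\int_1^\infty s^{-2} E_s(\sigma\|\rho)\,ds$.

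There is no real analytic obstacle here; the proof is essentially a bookkeeping exercise, and the substitutions $s = t/(t-1)$ and $s = 1 + 1/|t|$ are forced by matching the integrand to the hockey-stick form. The only subtlety worth highlighting is the need to verify $\tr(\rho - s\sigma) \leq 0$ for $s \geq 1$ so that the extended definition \eqref{Eq:Def-HockeyStick} of $E_s$ on positive operators reduces to the familiar $\tr(\rho - s\sigma)_+$ on states; once this is noted the two changes of variables yield exactly the two terms in the claimed identity.
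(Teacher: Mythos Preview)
Your proposal is correct and follows essentially the same route as the paper: both split Frenkel's integral at $t=0$ and $t=1$, discard the middle piece, and use the substitution $s=t/(t-1)$ on the $t>1$ branch and (equivalently) $s=(t-1)/t$ on the $t<0$ branch to land on the two hockey-stick integrals. The only cosmetic differences are that you apply $\tr(X)_-=\tr(-X)_+$ once at the outset rather than at the end of the $t<0$ computation, and you factor the $t<0$ substitution through the intermediate variable $u=-t$; neither changes the argument.
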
 
\begin{proof}
    We start off similarly as~\cite{frenkel2022integral,jenvcova2023recoverability} by noting that whenever $0\leq t\leq 1$ then $\tr((1-t)\rho+t\sigma)_-=0$ and we can split the integral into two parts, integrating over $t\leq 0$ and $t\geq 1$. For the latter, we can entirely follow the argument in~\cite{jenvcova2023recoverability} where it was shown that
    \begin{align}
        \int_1^\infty \frac{\d t}{t(t-1)^2}\tr((1-t)\rho+t\sigma)_- = \int_1^\infty \frac{\d s}{s}\tr(\rho-s\sigma)_+ = \int_1^\infty \frac{\d s}{s} E_s(\rho\|\sigma), 
    \end{align}
    where~\cite{jenvcova2023recoverability} essentially substituted $s=\frac{t}{t-1}$ which shows the first equality. The second equality then follows from the definition of the hockey-stick divergence. 

    For the integral over $t\leq 0$, we take a slightly different route than~\cite{jenvcova2023recoverability}.
    \begin{align}
        \int_{-\infty}^0 \frac{\d t}{-t(t-1)^2}\tr((1-t)\rho+t\sigma)_- &=
        \int_{-\infty}^0 \frac{\d t}{(t-1)^2}\tr(\frac{(t-1)}{t}\rho-\sigma)_- \\
        &= \int_{1}^\infty \frac{\d s}{s^2}\tr(s\rho-\sigma)_- \\
        &= \int_{1}^\infty \frac{\d s}{s^2}\tr(\sigma-s\rho)_+ \\
        &= \int_{1}^\infty \frac{\d s}{s^2}E_s(\sigma\|\rho),
    \end{align}
    where the first equality follows because $-t\geq0$, the second is substituting $s=\frac{t-1}{t}$, the third uses $\tr A_-=\tr(-A)_+$. Putting both integrals back together concludes the proof. 
\end{proof}

%%%%%%%%%%%%%%%%%%%%%%%%%%%%%%%%%%%%%%%%%%%%%%%%%%%

\subsection{A new quantum $f$-divergence via integral representation} 

As generalization of the classical $f$-divergence, quantum $f$-divergences define a wide class of divergences including many commonly used divergences as special cases. However, due to the non-commutativity of quantum theory, there are many possible ways of defining quantum $f$-divergences that all reduce to the classical quantity. We will not discuss the plethora of definitions available in the literature in detail, but instead refer the reader to~\cite{petz1985quasi,petz1986quasi,petz1998contraction,matsumoto2018new,wilde2018optimized} for further reading. Instead, we will only discuss two quantum generalizations of the $f$-divergence that are known to be extremal  in the sense that they are smallest or largest amongst all quantum generalization that satisfy the DPI (see~\cite{gour2020optimal} for a general theory on minimal and maximal quantum extensions of information measures).

The minimal quantum $f$-divergence, also known as the measured $f$-divergence, is defined via
\begin{align}
     \widecheck{D}_{f}(\rho\|\sigma) := \sup_M D_f( P_{M,\rho} \| P_{M,\sigma})
    \label{eq:Df_measured}
\end{align}
where the supremum is over all quantum measurements and $P_{M,\rho}$ is the pdf induced by this measurement on the state $\rho$ using Born's rule. On the other hand, the maximal $f$-divergence~\cite{matsumoto2018new} is defined as
\begin{align}
     \widehat{D}_f(\rho\|\sigma) = \inf\{ D_f(p\|q) \,:\, \Phi(p)=\rho, \Phi(q)=\sigma \}, 
\end{align}
where the infimum is over all probability distributions $p, q$ and classical-quantum channels $\Phi$. It is easy to verify that any quantum generalization of $D_f$ must lie between $\widecheck{D}_{f}$ and $\widehat{D}_f$.

Motivated by our earlier discussion, we now make the following definition for a new quantum $f$-divergence. 
\begin{definition}
We denote by $\cF$ the set of functions $f:(0,\infty)\rightarrow\RR$ that are convex and twice differentiable with $f(1)=0$.
Let $f \in \cF$. Then, for any quantum states $\rho$ and $\sigma$, we define
\begin{align}
    D_f(\rho\|\sigma) := \int_1^\infty f''(\gamma) E_\gamma(\rho\|\sigma) + \gamma^{-3} f''(\gamma^{-1})E_\gamma(\sigma\|\rho)\, \d \gamma ,
\end{align}
whenever the integral is finite and $D_f(\rho\|\sigma) := +\infty$ otherwise.
\end{definition}
This is a valid quantum generalization of the classical $f$-divergence in the sense that for commuting states it reduces to the classical $f$-divergence.
We will see that it has many desirable properties and start with some fundamental observations collected in the following proposition. In the main text we only consider normalized quantum states and refer to Appendix~\ref{App:Scaling} for a discussion regarding general positive semi-definite hermitian operators. First note that one can give a simpler expression using Equation~\eqref{Eq:HS-exchange}.
\begin{lemma}
Let $f \in \cF$ and $\rho, \sigma$ two quantum states, then
 \begin{align}
    D_f(\rho\|\sigma) = \int_0^\infty f''(\gamma) E_\gamma(\rho\|\sigma)\, \d \gamma ,
\end{align}   
\end{lemma}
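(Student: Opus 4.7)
The plan is to transform the second summand in the original definition via the exchange property (Lemma~\ref{Lem:HS-Properties}, part 7) and then apply a simple change of variables to combine the two integrals into one.

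Starting from
\begin{align}
D_f(\rho\|\sigma) = \int_1^\infty f''(\gamma) E_\gamma(\rho\|\sigma) \, \d\gamma + \int_1^\infty \gamma^{-3} f''(\gamma^{-1}) E_\gamma(\sigma\|\rho) \, \d\gamma,
\end{align}
I would first apply Equation~\eqref{Eq:HS-exchange} to the second integrand, substituting $E_\gamma(\sigma\|\rho) = \gamma\, E_{1/\gamma}(\rho\|\sigma)$. This turns the second integral into $\int_1^\infty \gamma^{-2} f''(\gamma^{-1}) E_{1/\gamma}(\rho\|\sigma) \, \d\gamma$. The exchange property is valid for all $\gamma \geq 0$, so there are no domain issues, and this step is essentially the content of part 7 of the lemma.

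Next I would apply the change of variables $u = \gamma^{-1}$, which gives $\d u = -\gamma^{-2}\,\d\gamma$, so that $\gamma^{-2}\,\d\gamma = -\d u$. The limits transform as $\gamma = 1 \mapsto u = 1$ and $\gamma = \infty \mapsto u = 0$, so the second integral becomes
\begin{align}
\int_1^\infty \gamma^{-2} f''(\gamma^{-1}) E_{1/\gamma}(\rho\|\sigma) \, \d\gamma = \int_0^1 f''(u) E_u(\rho\|\sigma) \, \d u.
\end{align}
Combining with the first integral, which already runs over $[1,\infty)$, yields the claimed identity.

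There is no real obstacle here: the step that might in principle require care is ensuring integrability so that the substitution is justified and the two pieces can be combined without ambiguity, but since we are working under the assumption that the defining integral is finite (and the integrand is nonnegative on each subinterval by positivity of $f''$ and of $E_\gamma$), Tonelli/Fubini-type considerations are trivial. The only remark worth making is that the convention $D_f(\rho\|\sigma) := +\infty$ when the integral diverges is preserved, because divergence of either representation occurs simultaneously on the same set of $\gamma$-values up to the change of variables.
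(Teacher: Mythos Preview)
Your proof is correct and follows essentially the same approach as the paper: apply the exchange property~\eqref{Eq:HS-exchange} to rewrite $E_\gamma(\sigma\|\rho) = \gamma E_{1/\gamma}(\rho\|\sigma)$, then substitute $\widehat\gamma = \gamma^{-1}$ to convert the second integral into $\int_0^1 f''(\widehat\gamma) E_{\widehat\gamma}(\rho\|\sigma)\,\d\widehat\gamma$, and combine. The paper's proof is slightly terser but identical in substance.
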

\begin{proof}
    Observe, 
    \begin{align}
        \int_1^\infty  \gamma^{-3} f''(\gamma^{-1})E_\gamma(\sigma\|\rho)\, \d \gamma 
        &= \int_1^\infty  \gamma^{-2} f''(\gamma^{-1})E_{\frac1\gamma}(\rho\|\sigma)\, \d \gamma \\
         &= \int_0^1  f''(\widehat\gamma)E_{\widehat\gamma}(\rho\|\sigma)\, \d \widehat\gamma, 
    \end{align} 
    where the first equality is by Equation~\eqref{Eq:HS-exchange} and the second by substituting $\widehat\gamma=\gamma^{-1}$. Giving this back into the definition of the $f$-divergence completes the proof. 
\end{proof}
However, it will often prove convenient to use the longer expression, because the Hockey-Stick divergence has particularly nice properties for $\gamma\geq1$. We continue by listing some additional properties of our $f$-divergence. 
\begin{proposition}\label{Prop:Df-properties}
    Let $f \in \cF$ and $\rho, \sigma$ two quantum states. The following properties hold:
    \begin{enumerate}
        
        \item (Trivial function) If $f(x) = b (x - 1)$ for some constant $b$ then $D_f(\rho\|\sigma) = 0$. 

        \item (Linear combinations) Define $h(x) = \alpha f(x) + \beta g(x)$ for $\alpha,\beta \geq 0$ and a function $g \in \cF$. Then,
        \begin{align}
                D_h(\rho\|\sigma) = \alpha D_f(\rho\|\sigma) + \beta D_g(\rho\|\sigma).
        \end{align}
        
        \item (Exchange of arguments) Define the function $g(t)=t f(\frac1t)$, then
            \begin{align}
                D_f(\rho\|\sigma) = D_g(\sigma\|\rho) .
            \end{align}
            The function $g$ is sometimes referred to as the $\star$-conjugate of $f$, see~\cite{liese1987convex}. Moreover, if $f$ itself is such that $f(t)=t f(\frac1t)$, then
            \begin{align}
                D_f(\rho\|\sigma) = D_f(\sigma\|\rho) =\int_1^\infty f''(\gamma) \left( E_\gamma(\rho\|\sigma) + E_\gamma(\sigma\|\rho)\right) \d \gamma \,. 
            \end{align}
         We say, $f$ is $\star$-self conjugate.   
        \item (Faithfullness) We have
        \begin{align}
            D_f(\rho\|\sigma) \geq 0,
        \end{align}
        with equality if $f''(x) = 0$ for all $x\in(e^{-D_{\max}(\sigma\|\rho)},  e^{D_{\max}(\rho\|\sigma)})$. In particular, equality always holds when $\rho = \sigma$.

        \item (Upper bound) We have
        \begin{align}
            D_f(\rho\|\sigma) \leq f(x) - x f'(x) \big|_{x = e^{-D_{\max}(\sigma\|\rho)}} + f' (x) \big|_{x = e^{D_{\max}(\rho\|\sigma)}} ,
        \end{align}
        where, in case $D_{\max}$ is not finite, the respective limits of the function at $0$ and $\infty$ are to be used. Furthermore, equality holds if and only if $\rho \perp \sigma$ (i.e., when $\rho$ and $\sigma$ are orthogonal). 
        
        \item (Data-processing inequality) $D_f(\rho\|\sigma)$ obeys the DPI, i.e., for any quantum channel $\cN$, we have
        \begin{align}
            D_f\big(\cN(\rho) \big\| \cN(\sigma)\big) \leq D_f(\rho\|\sigma).
        \end{align}
        
        \item (Joint convexity) $D_f$ is jointly convex, i.e., for states $\rho=\sum_x p(x) \rho_x$ and $\sigma=\sum_x p(x)\sigma_x$, we have
        \begin{align}
            D_f(\rho\|\sigma) \leq \sum_x p(x) D_f(\rho_x\|\sigma_x) \,.
        \end{align}

        \item (Relation to minimal and maximal quantum $f$-divergence) We have
        \begin{align}
            \widecheck{D}_{f}(\rho\|\sigma) \leq D_f(\rho\|\sigma) \leq \widehat{D}_f(\rho\|\sigma).
        \end{align}
    \end{enumerate}
\end{proposition}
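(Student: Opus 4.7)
The plan is to dispatch items 1--2 immediately, reduce items 3, 6, 7, 8 to corresponding properties of the hockey-stick divergence already established in Lemma~\ref{Lem:HS-Properties}, and handle items 4 and 5 by exploiting the characterization $E_\gamma(\rho\|\sigma)=0 \iff \gamma \geq e^{D_{\max}(\rho\|\sigma)}$ together with the crude bound $E_\gamma \leq 1$ for states.

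Items 1 and 2 follow directly from linearity of integration combined with the facts $f(x)=b(x-1) \Rightarrow f''\equiv 0$ and $(\alpha f+\beta g)''=\alpha f''+\beta g''$. For item 3, I compute $g'(t)=f(1/t)-t^{-1}f'(1/t)$ and then $g''(t)=t^{-3}f''(t^{-1})$, so that $\gamma^{-3}g''(\gamma^{-1})=f''(\gamma)$. Substituting these identities into $D_g(\sigma\|\rho)$ produces exactly the defining integrand of $D_f(\rho\|\sigma)$ after swapping the roles of the two summands. If $f$ is $\star$-self conjugate, then $f''=g''$ forces $\gamma^{-3}f''(\gamma^{-1})=f''(\gamma)$ and the two summands merge. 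Items 6 (DPI) and 7 (joint convexity) both follow by pulling the channel or convex combination inside the integral, which is legitimate since $f''\geq 0$ and $E_\gamma\geq 0$ make the integrand non-negative, and then applying DPI (Lemma~\ref{Lem:HS-Properties} item 6) or joint convexity (a consequence of strong convexity, item 9) of the hockey-stick divergence pointwise in $\gamma$. Item 8 is a sandwich argument from DPI: for any measurement $M$, the outcomes are classical and hence $D_f(P_{M,\rho}\|P_{M,\sigma})$ coincides with our definition on the post-measurement commuting states, which by item 6 is bounded above by $D_f(\rho\|\sigma)$; conversely, for any realisation $\rho=\Phi(p)$, $\sigma=\Phi(q)$ via a classical--quantum channel $\Phi$, DPI gives $D_f(\rho\|\sigma) \leq D_f(p\|q)$, and taking the infimum yields the second inequality.

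For item 4 (faithfulness), non-negativity is immediate from $f''\geq 0$ and $E_\gamma\geq 0$. By Lemma~\ref{Lem:HS-Properties} item 5, the first integrand is supported on $\gamma\in[1,e^{D_{\max}(\rho\|\sigma)})$ and the second on $\gamma^{-1}\in(e^{-D_{\max}(\sigma\|\rho)},1]$, so the hypothesis $f''\equiv 0$ on $(e^{-D_{\max}(\sigma\|\rho)},e^{D_{\max}(\rho\|\sigma)})$ kills both integrands; when $\rho=\sigma$ both max-divergences vanish and every $E_\gamma$ is zero. For item 5 (upper bound), I use $E_\gamma(\rho\|\sigma)\leq 1$ (equality iff some $M$ satisfies $\tr(M\rho)=1$ and $\tr(M\sigma)=0$, i.e., $\rho\perp\sigma$, by the variational formula in Lemma~\ref{Lem:HS-Properties} item 1) to bound the first integrand by $f''(\gamma)\mathbf{1}_{[1,e^{D_{\max}(\rho\|\sigma)}]}$. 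Integration yields $f'(e^{D_{\max}(\rho\|\sigma)})-f'(1)$. For the second integrand I apply the same $\leq 1$ bound and then substitute $x=1/\gamma$ to obtain $\int_{e^{-D_{\max}(\sigma\|\rho)}}^1 xf''(x)\,dx$; integration by parts using $\int xf''(x)\,dx=xf'(x)-f(x)$ and $f(1)=0$ produces $f'(1)+[f(x)-xf'(x)]|_{x=e^{-D_{\max}(\sigma\|\rho)}}$. The $\pm f'(1)$ cancel and the sum is the stated bound. Equality simultaneously in both integrands requires $\rho\perp\sigma$.

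The main subtlety lies in item 5 when $D_{\max}(\rho\|\sigma)$ or $D_{\max}(\sigma\|\rho)$ is infinite: the expressions $f'(e^{D_{\max}(\rho\|\sigma)})$ and $[f(x)-xf'(x)]|_{x=e^{-D_{\max}(\sigma\|\rho)}}$ must be interpreted as limits as $\gamma\to\infty$ and $x\to 0^+$, and one must verify that the improper integrals and the integration by parts remain valid in these limits (with the bound becoming $+\infty$ when the relevant limits diverge, in which case the inequality is vacuous). A secondary technical point is the Fubini-type interchange implicit in items 6 and 7, which is justified by the non-negativity of the integrand via Tonelli's theorem.
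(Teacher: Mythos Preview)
Your proof is correct and follows essentially the same approach as the paper's: items 1--2 by linearity, item 3 via the identity $g''(t)=t^{-3}f''(t^{-1})$, items 4--5 via the support characterisation $E_\gamma(\rho\|\sigma)=0\iff\gamma\geq e^{D_{\max}(\rho\|\sigma)}$ together with $E_\gamma\leq 1$, items 6--7 by pointwise application of the corresponding hockey-stick properties, and item 8 from the classical reduction on commuting states plus DPI. Your explicit cancellation of the $\pm f'(1)$ terms in item 5 and the spelled-out sandwich for item 8 are exactly what the paper leaves implicit; the additional remarks on Tonelli and the $D_{\max}=\infty$ limits are sound housekeeping that the paper omits.
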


\begin{proof}
    We establish the items separately.
    
    (1) In this case $f''(x) = 0$ and, thus, $D_f(\rho\|\sigma) = 0$.
    
    (2) This follows by the linearity of the derivatives and the integral.
    
    (3) This follows from the fact that
    \begin{align}
        g''(t) = \frac{1}{t^3}f''\left(\frac1t\right) \qquad \textrm{and} \quad
        \frac{1}{t^3}g''\left(\frac{1}{t}\right) = f''(t) \,.
    \end{align}
  
   (4) The fact that $D_f(\rho\|\sigma) \geq 0$ follows from the convexity of $f$, which ensures that $f''(x) \geq 0$ everywhere. To establish the equality condition we recall that the hockey-stick divergence satisfies
   \begin{align}
        E_\gamma(\rho\|\sigma) > 0 \implies \gamma \leq e^{D_{\max}(\rho\|\sigma)}, 
        \quad \textnormal{and} \quad 
        E_\gamma(\sigma\|\rho) > 0 \implies \gamma^{-1} \geq e^{-D_{\max}(\sigma\|\rho)} \,.
   \end{align}
   Therefore, as long as $f''(\gamma)$ vanishes on any such open interval the integral will be zero. The special case $\rho = \sigma$ can be seen more directly, since then $E_\gamma(\rho\|\sigma) = E_\gamma(\sigma\|\rho) = 0$ and, thus, $D_f(\rho\|\sigma) = 0$.
    
    (5) Recall that the hockey-stick divergence is bounded by $E_\gamma(\rho\|\sigma) \leq 1$ with equality if and only if $\rho \perp \sigma$. Moreover, it is zero when $\gamma \geq D_{\max}(\rho\|\sigma)$. Now let $\mu = e^{D_{\max}(\rho\|\sigma)}$ and $\lambda = e^{D_{\max}(\sigma\|\rho)}$.
    It follows that 
    \begin{align}
        D_f(\rho\|\sigma) &\leq \int_1^\mu f''(\gamma) \d\gamma + \int_1^\lambda \gamma^{-3} f''(\gamma^{-1})\, \d\gamma 
        = \int_1^\mu f''(\gamma) \,\d\gamma + \int_{\frac{1}{\lambda}}^1 \mu f''(\mu) \,\d\mu \\
        &=  f'(\gamma) \Big|_{1}^{\mu} + \mu f'(\mu) - f(\mu) \Big|_{\frac{1}{\lambda}}^1 
        = f(x) - x f'(x) \big|_{x = \frac{1}{\lambda}} + f' (x) \big|_{x = \mu}, 
    \end{align}
    where we performed a change of variable and partial integration and used that $f(1)=0$ for our functions.
    
    (6) and (7) These properties follow directly from DPI and joint convexity of the hockey-stick divergence, respectively, as well as the fact that $D_f(\rho\|\sigma)$ is a linear combination with positive coefficients of hockey-stick divergences $E_{\gamma}(\rho\|\sigma)$ and $E_{\gamma}(\sigma\|\rho)$.
    
    (8) This follows from the fact that the new definition coincides with the classical $f$-divergence for commuting states and the defining property of the minimal and maximal quantum extension~\cite{matsumoto2018new,gour2020optimal}.
\end{proof}

This gives us the tools to explore further properties of our $f$-divergences in the following sections.

\subsection{Quantum DeGroot statistical information} 

A classical integral representation in~\cite{liese2006divergences}, which serves as the starting point of the proof of~\cite[Proposition 3]{sason2016f}, was originally given in terms of the deGroot statistical information~\cite{degroot1962uncertainty}. We construct a quantum equivalent of this function based on symmetric hypothesis testing and show that it can also be used to give an integral representation of our quantum $f$-divergence. 

As previously described, in quantum state discrimination we attempt to minimize the error in determining a given quantum state. 
Here we are interested in the symmetric setting, where the goal is to minimize the error probability
\begin{align}
    p_e^*(p,\rho,\sigma) &= \inf_M p_e(p,\rho,\sigma, M) \\
    &=\inf_M p \tr M_1\rho + (1-p)\tr M_0\sigma \\
    & =\inf_M p \tr (1-M_0)\rho + (1-p)\tr M_0\sigma \\
    & =\inf_M p - p\tr M_0 \rho + (1-p)\tr M_0\sigma \sigma \\
    & =\inf_M p - p\tr \left[M_0 \left(\rho -\frac{1-p}{p} \sigma\right)\right] \\
    & = p - p E_\frac{1-p}{p}(\rho\|\sigma). 
\end{align}
Similarly, we could have rewritten the above as 
\begin{align}
    p_e^*(p,\rho,\sigma) = (1-p) - (1-p) E_\frac{p}{1-p}(\sigma\|\rho). 
\end{align}
This gives two equivalent expressions for the minimum error probability. We remark that alternatively this equivalence follows also from the symmetry of the hockey-stick divergence shown in~\cite[Proposition II.5]{hirche2022quantum} that is $E_\gamma(\rho\|\sigma) = \gamma E_{\frac1\gamma}(\sigma\|\rho) + (1-\gamma)$. We can now write the symmetric error probability as
\begin{align}
    B_p(\rho\|\sigma) := p^*_e(p,\rho,\sigma) = \begin{cases} 
    p - p E_\frac{1-p}{p}(\rho\|\sigma) & 0\leq p \leq \frac12 \\
    (1-p) - (1-p) E_\frac{p}{1-p}(\sigma\|\rho) & \frac12 \leq p \leq 1 \end{cases}, 
\end{align}
expressing it solely in terms of hockey-stick divergences $E_\gamma$ with $\gamma\geq 1$.  
Further, consider the case when $\rho=\sigma$, meaning the setting where the states do not give any additional information and the best guess is to assume the hypothesis with the higher prior, 
\begin{align}
    B_p := B_p(\rho\|\rho) =  \begin{cases} 
    p  & 0\leq p \leq \frac12 \\
    (1-p)  & \frac12 \leq p \leq 1 \end{cases} = \min\{p,1-p\}. 
\end{align}
Analogous to the classical setting we now define the quantum deGroot statistical information as the difference of minimum error between selecting the most likely a-priori hypothesis and when the most likely a posteriori hypothesis is selected, 
\begin{align}
    \cI_p(\rho\|\sigma) := B_p - B_p(\rho\|\sigma) = \begin{cases} 
     p E_\frac{1-p}{p}(\rho\|\sigma) & 0\leq p \leq \frac12 \\
    (1-p) E_\frac{p}{1-p}(\sigma\|\rho) & \frac12 \leq p \leq 1 \end{cases}.
\end{align}
Note as a special case that
\begin{align}
    \cI_\frac12(\rho\|\sigma) = \frac14 \|\rho-\sigma\|_1. 
\end{align}
Given this new operationally motivated quantity we can state the following integral representation of the relative entropy. 
\begin{corollary}
Let $f \in \cF$ and $\rho,\sigma$ be quantum states. Then 
\begin{align}
    D_f(\rho\|\sigma) = \int_0^1 \frac{1}{p^3} f''\left(\frac{1-p}{p}\right) \cI_p(\rho\|\sigma) \, \d p. \label{Eq:deGroot-rep}
\end{align}
\end{corollary}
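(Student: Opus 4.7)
The plan is to split the integral on the right-hand side of \eqref{Eq:deGroot-rep} at $p = \tfrac{1}{2}$ and apply the two cases of the definition of the quantum deGroot statistical information $\cI_p$ to each piece. Then, on each piece, I will perform a change of variables that turns the $p$-integral over the appropriate half-interval into a $\gamma$-integral from $1$ to $\infty$, matching respectively the two terms in the definition of $D_f(\rho\|\sigma)$.

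More concretely, on the range $p \in [0,\tfrac12]$, I would substitute $\gamma = \tfrac{1-p}{p}$, so that $p = \tfrac{1}{1+\gamma}$ and $\d p = -\tfrac{\d\gamma}{(1+\gamma)^2}$. Using $\cI_p(\rho\|\sigma) = p\, E_{(1-p)/p}(\rho\|\sigma)$ on this range, the factor $\tfrac{1}{p^3}\cdot p = \tfrac{1}{p^2} = (1+\gamma)^2$ cancels exactly with the Jacobian, so that
\begin{align}
\int_0^{1/2} \frac{1}{p^3} f''\!\left(\tfrac{1-p}{p}\right) \cI_p(\rho\|\sigma)\, \d p = \int_1^{\infty} f''(\gamma)\, E_\gamma(\rho\|\sigma)\, \d\gamma.
\end{align}
On the range $p \in [\tfrac12, 1]$, I would instead substitute $\gamma = \tfrac{p}{1-p}$, so that $1-p = \tfrac{1}{1+\gamma}$, $\tfrac{1-p}{p} = \tfrac{1}{\gamma}$ and $\d p = \tfrac{\d\gamma}{(1+\gamma)^2}$. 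Using $\cI_p(\rho\|\sigma) = (1-p)\, E_{p/(1-p)}(\sigma\|\rho)$ here, a similar bookkeeping of the factors $\tfrac{1}{p^3}$, $(1-p)$, and the Jacobian collapses to $\gamma^{-3}$, giving
\begin{align}
\int_{1/2}^{1} \frac{1}{p^3} f''\!\left(\tfrac{1-p}{p}\right) \cI_p(\rho\|\sigma)\, \d p = \int_1^{\infty} \gamma^{-3}\, f''(\gamma^{-1})\, E_\gamma(\sigma\|\rho)\, \d\gamma.
\end{align}
Summing the two pieces reproduces the definition of $D_f(\rho\|\sigma)$, which is the claim.

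There is no real obstacle here — the identity is essentially bookkeeping of two elementary substitutions, parallel to the classical derivation based on the deGroot statistical information in \cite{liese2006divergences,sason2016f}. The one point worth double-checking is that $\cF$-functions and the finiteness assumption on $D_f(\rho\|\sigma)$ allow the integral over $[0,1]$ to be split at $\tfrac12$ and rearranged; since $f'' \geq 0$ and $E_\gamma \geq 0$ by Lemma~\ref{Lem:HS-Properties}, Tonelli's theorem applies and the manipulation is unconditionally valid in $[0,\infty]$.
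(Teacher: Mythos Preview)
Your proposal is correct and essentially identical to the paper's own proof: the paper also splits the integral at $p=\tfrac12$, inserts the two branches of $\cI_p$, and then performs the substitutions $\gamma=\tfrac{1-p}{p}$ and $\gamma=\tfrac{p}{1-p}$ on the respective pieces to recover the defining integral of $D_f(\rho\|\sigma)$. Your version is slightly more detailed (explicit Jacobians and the Tonelli remark), but the argument is the same.
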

\begin{proof}
    Starting from the integral representation above one can derive that given in~\ref{Cor:D-integral} using the same steps as done in the classical case in the proof of~\cite[Proposition 3]{sason2016f}. We only briefly give the main steps here. By definition of the quantum deGroot statistical information, we get from Equation~\eqref{Eq:deGroot-rep}, 
    \begin{align}
        D_f(\rho\|\sigma) = \int_0^{\frac12} \frac{1}{p^2} f''\left(\frac{1-p}{p}\right) E_\frac{1-p}{p}(\rho\|\sigma) \, \d p\, + \int_{\frac12}^1 \frac{1-p}{p^3} f''\left(\frac{1-p}{p}\right) E_\frac{p}{1-p}(\sigma\|\rho) \, \d p.
    \end{align}
    The claim then follows by substituting $\gamma=\frac{1-p}{p}$ and $\gamma=\frac{p}{1-p}$ in the first and second integral respectively, which gives the representation in Equation~\ref{Cor:D-integral}. 
\end{proof}
This gives a quantum version of the integral representation derived in~\cite{liese2006divergences}. For the case of the relative entropy, Equation~\eqref{Eq:deGroot-rep} specializes to,
\begin{align}
    D(\rho\|\sigma) = \int_0^1 \frac{1}{p^2(1-p)} \cI_p(\rho\|\sigma) \, \d p. 
\end{align}

\subsection{The $\chi^2$ divergence}

In this section we collect useful relationships between different $f$-divergences. First, we show that every $f$-divergence is connected to the $\chi^2$ divergence locally. 
\begin{theorem}\label{Thm:H2-is-limit-of-f}
    Let $f \in \cF$ and $\rho \ll\gg \sigma$. We have
    \begin{align}
        \frac{\partial^2}{\partial \lambda^2} D_f(\lambda\rho+(1-\lambda)\sigma\|\sigma) \Big|_{\lambda = 0} = f''(1) \,\chi^2(\rho\|\sigma). 
    \end{align}
    where $\chi^2(\rho\|\sigma) \equiv H_2(\rho\|\sigma)$ is the Hellinger divergence of order $2$.
\end{theorem}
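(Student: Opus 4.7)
The plan is to work directly with the short integral form $D_f(\tau\|\sigma) = \int_0^\infty f''(\gamma)\, E_\gamma(\tau\|\sigma)\,d\gamma$ established in the lemma preceding Proposition~\ref{Prop:Df-properties}, applied to the interpolation $\omega_\lambda := \lambda\rho + (1-\lambda)\sigma$. Set $F(\lambda) := D_f(\omega_\lambda\|\sigma)$. I aim to show that $F(\lambda) = \tfrac{1}{2}f''(1)\chi^2(\rho\|\sigma)\,\lambda^2 + o(\lambda^2)$ near $\lambda = 0$, which immediately gives the claimed identity for $F''(0)$.

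The heart of the argument is an explicit scaling identity for the hockey-stick divergence against a fixed second argument. From the factorisation $\omega_\lambda - \gamma\sigma = \lambda\rho - (\gamma + \lambda - 1)\sigma = \lambda\bigl(\rho - \bigl(1 + \tfrac{\gamma-1}{\lambda}\bigr)\sigma\bigr)$, and by carefully tracking the $(1-\gamma)_+$ correction in the definition~\eqref{Eq:Def-HockeyStick} of $E_\gamma$, I expect to obtain for $\lambda \in (0, 1)$ the clean identity
\begin{align*}
E_\gamma(\omega_\lambda\|\sigma) = \begin{cases} 0 & 0 \le \gamma \le 1-\lambda,\\ \lambda\, E_{1+(\gamma-1)/\lambda}(\rho\|\sigma) & \gamma \ge 1-\lambda. \end{cases}
\end{align*}
The substitution $\mu = 1 + (\gamma-1)/\lambda$ (so that $d\gamma = \lambda\,d\mu$ and the new range is $[0,\infty)$) then converts the integral representation of $F$ into
\begin{align*}
F(\lambda) = \lambda^2 \int_0^\infty f''\bigl(1 + \lambda(\mu-1)\bigr)\, E_\mu(\rho\|\sigma)\,d\mu \;=:\; \lambda^2\, G(\lambda).
\end{align*}

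From here the result drops out by identifying $G(0)$. Since $\chi^2 \equiv H_2$ is itself an instance of the new $f$-divergence with $f(x) = x^2 - 1$, so that $f''(x) = 2$, the very same representation gives $\chi^2(\rho\|\sigma) = 2\int_0^\infty E_\mu(\rho\|\sigma)\,d\mu$. Hence $G(0) = f''(1)\int_0^\infty E_\mu(\rho\|\sigma)\,d\mu = \tfrac{1}{2}f''(1)\chi^2(\rho\|\sigma)$, and combining $F(0) = F'(0) = 0$ with the continuity of $G$ at $0$ yields $F''(0) = 2G(0) = f''(1)\chi^2(\rho\|\sigma)$ as a Peano second derivative (in fact the strong one whenever $f''$ is itself continuous at $1$, which is granted by $f \in \cF$).

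The main technical step is justifying $G(\lambda) \to G(0)$ as $\lambda \to 0$, which I would handle by dominated convergence. Property~5 of Lemma~\ref{Lem:HS-Properties} ensures that $E_\mu(\rho\|\sigma)$ vanishes for $\mu > e^{D_{\max}(\rho\|\sigma)}$, and the assumption $\rho \ll\gg \sigma$ makes this threshold finite; on the resulting compact support, $E_\mu(\rho\|\sigma) \le 1$ while $f''(1 + \lambda(\mu-1))$ is uniformly bounded for small $\lambda$ by continuity of $f''$ at $1$, providing a constant dominating function and simultaneously guaranteeing that $\chi^2(\rho\|\sigma)$ is finite.
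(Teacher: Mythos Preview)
Your proof is correct and follows essentially the same route as the paper: both reduce to the limit $\lim_{\lambda\to 0}\frac{2}{\lambda^2}D_f(\omega_\lambda\|\sigma)$, establish the rescaling identity $E_\gamma(\omega_\lambda\|\sigma)=\lambda\,E_{1+(\gamma-1)/\lambda}(\rho\|\sigma)$, substitute $\mu=1+(\gamma-1)/\lambda$ to extract the $\lambda^2$ prefactor, and then pass the limit inside the integral. Your version is slightly more streamlined because you work with the single integral $\int_0^\infty f''(\gamma)E_\gamma\,d\gamma$, whereas the paper uses the split form over $[1,\infty)$ and must additionally handle $E_\gamma(\sigma\|\omega_\lambda)$ via a second rescaling identity before reaching the same conclusion.
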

As a direct consequence of this we can find alternative expressions for $\chi_2(\rho\|\sigma)$ and $D_2(\rho\|\sigma)$. For this, consider $f: t \mapsto t \log t$ such that $f''(1) = 1$ and, thus,
\begin{align}
    \chi^2(\rho\|\sigma) = \frac{\partial^2}{\partial \lambda^2} D(\lambda\rho+(1-\lambda)\sigma\|\sigma) \Big|_{\lambda = 0}
\end{align}
This second derivative was to the best of our knowledge first computed in~\cite{petz1998contraction}, and yields
\begin{align}
    \chi^2(\rho\|\sigma) &= \int_0^\infty \mathrm{d}s \ \tr \left( (\sigma + s \Id)^{-1} (\rho - \sigma) (\sigma + s \Id)^{-1} (\rho - \sigma) \right) \\
    &= \int_0^\infty \mathrm{d}s \ \tr \left( \rho (\sigma + s \Id)^{-1} \rho (\sigma + s \Id)^{-1} \right) - 1 \, \qquad\qquad\qquad \textrm{and} \\
    D_2(\rho\|\sigma) &= \log \int_0^\infty \mathrm{d}s \ \tr \left( \rho (\sigma + s \Id)^{-1} \rho (\sigma + s \Id)^{-1} \right) .
\end{align}

\begin{proof}[Proof of Theorem~\ref{Thm:H2-is-limit-of-f}]
   Since the $f$-divergence vanishes at $\lambda = 0$ and has a minimum there (implying that also the first derivative in $\lambda$ vanishes), the desired relation can be rewritten as
    \begin{align}
        \lim_{\lambda\rightarrow 0} \frac{2}{\lambda^2} D_f(\lambda\rho+(1-\lambda)\sigma\|\sigma) = f''(1) \,\chi^2(\rho\|\sigma). 
    \end{align}

    To prove this statement let us first rewrite the hockey-stick divergences for convex combinations of the above form. We start with the following observation, 
    \begin{align}
        E_\gamma(\lambda\rho+(1-\lambda)\sigma \| \sigma) &= \tr(\lambda\rho+(1-\lambda)\sigma - \gamma\sigma)_+ \\
        &= \lambda \tr\left(\rho - \frac{\gamma+\lambda-1}{\lambda}\sigma\right)_+ \\
        &= \lambda E_{\frac{\gamma+\lambda-1}{\lambda}}(\rho\|\sigma). \label{Eq:HS-identities-cc-1} 
    \end{align}
    For the second observation we assume $\gamma\leq\frac{1}{1-\lambda}$, 
    \begin{align}
        E_\gamma(\sigma\| \lambda\rho+(1-\lambda)\sigma) 
        &= \tr(\sigma-\gamma\lambda\rho-\gamma (1-\lambda)\sigma)_+ \\
        &= (1-\gamma(1-\lambda)) \tr\left(\sigma - \frac{\gamma\lambda}{1-\gamma(1-\lambda)}\rho\right)_+ \\
        &= (1-\gamma(1-\lambda)) E_{\frac{\gamma\lambda}{1-\gamma(1-\lambda)}}(\sigma\|\rho), \label{Eq:HS-identities-cc-2} 
    \end{align}
    where the second equality holds because $\gamma\leq\frac{1}{1-\lambda}$ ensures that $(1-\gamma(1-\lambda))\geq 0$. Note that if $\gamma\geq\frac{1}{1-\lambda}$ it can easily be checked that $E_\gamma(\sigma\| \lambda\rho+(1-\lambda)\sigma)=0$. 
    Using the above we can now simplify the integral representation of the $f$-divergences. We split the task into two steps. For the first, choose $\hat\gamma=\frac{\gamma+\lambda-1}{\lambda}$ and we get
    \begin{align}
        \int_1^\infty f''(\gamma) E_\gamma(\lambda\rho+(1-\lambda)\sigma \| \sigma) \,\d\gamma 
        &= \int_1^\infty f''(\gamma) \lambda E_{\hat\gamma}(\sigma\|\rho) \,\d\gamma \\
        &=\int_1^\infty \lambda^2 f''(\lambda(\hat\gamma-1)+1) E_{\hat\gamma}(\sigma\|\rho) \,\d\hat\gamma. 
    \end{align}
    For the second, choose $\hat\gamma=\frac{\gamma\lambda}{1-\gamma(1-\lambda)}$ and get
    \begin{align}
        \int_1^\infty \frac{1}{\gamma^3} f''\left(\frac{1}{\gamma}\right) E_\gamma(\sigma\| \lambda\rho+(1-\lambda)\sigma) \,\d\gamma 
        &= \int_1^{\frac{1}{1-\lambda}} \frac{1}{\gamma^3} f''\left(\frac{1}{\gamma}\right) (1-\gamma(1-\lambda)) E_{\hat\gamma}(\sigma\|\rho) \,\d\gamma \\ 
        &= \int_1^{\infty} \frac{\lambda^2}{\hat\gamma^3} f''\left(\frac{\lambda+\hat\gamma(1-\lambda)}{\hat\gamma}\right)  E_{\hat\gamma}(\sigma\|\rho) \,\d\hat\gamma, 
    \end{align}
    where in the first equality we limited the integral by using the vanishing hockey-stick divergence for $\gamma\geq\frac{1}{1-\lambda}$. Interestingly, this gets somewhat reversed by our chosen substitution. 
    
    Taking both integrals together, we see that
    \begin{align}
        \frac{2}{\lambda^2} D_f(\lambda\rho+(1-\lambda)\sigma\|\sigma) = 2 \int_1^\infty  f''(\lambda(\hat\gamma-1)+1) E_{\hat\gamma}(\sigma\|\rho)  +  \frac{1}{\hat\gamma^3} f''\left(\frac{\lambda+\hat\gamma(1-\lambda)}{\hat\gamma}\right)  E_{\hat\gamma}(\sigma\|\rho)  \,\d\hat\gamma. 
    \end{align}
    Taking the limit $\lambda\rightarrow 0$ we finally get
    \begin{align}
        \lim_{\lambda\rightarrow 0} \frac{2}{\lambda^2} D_f(\lambda\rho+(1-\lambda)\sigma\|\sigma) &= 2 \int_1^\infty f''(1) E_{\hat\gamma}(\sigma\|\rho)  +  \frac{1}{\hat\gamma^3} f''\left(1\right)  E_{\hat\gamma}(\sigma\|\rho) \,\d\hat\gamma \\
        &= f''(1) \Hell_2(\rho\|\sigma), 
    \end{align}
    which concludes the proof. 
    \end{proof}

\subsection{Generalized skew-divergences}

Skew-divergences~\cite{lee1999measures,audenaert2014quantum} are another way of generalizing a given family of divergences by introducing an additional parameter, say $\lambda$, that interpolates between the given states, $\lambda\rho+(1-\lambda)\sigma$. These divergences often have interesting properties, e.g. they are more easily bounded from above. Here, we will see that the simple structure of the Hockey-Stick divergence lends itself well to investigating skew-divergences and give some properties that will find applications later on. 

Given a function $f \in \cF$ and $0\leq\lambda,\mu\leq 1$, define the family of functions
\begin{align}
    F_{\lambda,\mu} : x \mapsto (1-\lambda+\lambda x) \left[ \mu f\left( \frac{1}{1-\lambda+\lambda x}\right) + (1-\mu) f\left(\frac{x}{1-\lambda+\lambda x}\right) \right]. \label{Eq:def-F-lambda-mu}
\end{align}
Clearly $F_{\lambda,\mu}(1) = 0$ and the second derivative evaluates to
\begin{align}
    F_{\lambda,\mu}''(x) = \frac{\mu \lambda^2 f''(\frac{1}{1-\lambda+\lambda x}) + (1-\mu)(1-\lambda)^2 f''(\frac{x}{1-\lambda+\lambda x})}{(1-\lambda+\lambda x)^3} ,
\end{align}
ensuring that the function is convex and, therefore, $F_{\lambda,\mu} \in \cF$.

\begin{proposition}\label{Prop:D_F_lambda-equality}
    Fix $f \in \cF$. Let $\rho,\sigma$ be quantum states and define $F_{\lambda,\mu}$ as in Equation~\eqref{Eq:def-F-lambda-mu} above. Then, we have 
    \begin{align}
        D_{F_{\lambda,\mu}}(\rho\|\sigma) = (1-\mu) D_f \big(\rho \big\|\lambda\rho+(1-\lambda)\sigma\big) + \mu D_f\big(\sigma \big\|\lambda\rho+(1-\lambda)\sigma \big) .
    \end{align}
\end{proposition}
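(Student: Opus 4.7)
\medskip

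\textbf{Proof plan.} The strategy is to express every $f$-divergence appearing on either side of the claimed identity as an integral of the form $\int_0^\infty g(\gamma)\,E_\gamma(\rho\|\sigma)\,d\gamma$ for an explicit weight $g$, and then match the weights. Throughout, write $\tau := \lambda\rho + (1-\lambda)\sigma$ and use the simpler one-sided representation from the lemma preceding this proposition, namely $D_f(\rho\|\sigma) = \int_0^\infty f''(\gamma)\,E_\gamma(\rho\|\sigma)\,d\gamma$.

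First I would compute $E_\gamma(\rho\|\tau)$ and $E_\gamma(\sigma\|\tau)$ in closed form using the same manipulation that produced Equations~\eqref{Eq:HS-identities-cc-1} and~\eqref{Eq:HS-identities-cc-2} in the proof of Theorem~\ref{Thm:H2-is-limit-of-f}. A brief extension of that calculation (keeping track of the normalising constants coming from $(\mathrm{tr}(\rho-\gamma\tau))_+$ and $(\mathrm{tr}(\sigma-\gamma\tau))_+$) yields
\begin{align}
E_\gamma(\rho\|\tau) &= (1-\gamma\lambda)\,E_{\gamma(1-\lambda)/(1-\gamma\lambda)}(\rho\|\sigma), \qquad 0\le\gamma\le 1/\lambda,\\
E_\gamma(\sigma\|\tau) &= (1-\gamma(1-\lambda))\,E_{\gamma\lambda/(1-\gamma(1-\lambda))}(\sigma\|\rho), \qquad 0\le\gamma\le 1/(1-\lambda),
\end{align}
both vanishing outside their respective intervals.

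Next I would plug these into the integral definition of $D_f(\rho\|\tau)$ and $D_f(\sigma\|\tau)$ and perform the substitutions that invert the arguments of the hockey-stick divergences: $u=\gamma(1-\lambda)/(1-\gamma\lambda)$ in the first and $v=\gamma\lambda/(1-\gamma(1-\lambda))$ in the second. Solving for $\gamma$ gives $\gamma=u/(1-\lambda+u\lambda)$ and $\gamma=v/(\lambda+v(1-\lambda))$ respectively, along with the Jacobians $d\gamma=(1-\lambda)(1-\lambda+u\lambda)^{-2}du$ and $d\gamma=\lambda(\lambda+v(1-\lambda))^{-2}dv$. A short computation then gives
\begin{align}
D_f(\rho\|\tau) &= \int_0^\infty \frac{(1-\lambda)^2}{(1-\lambda+u\lambda)^3}\,f''\!\left(\frac{u}{1-\lambda+u\lambda}\right) E_u(\rho\|\sigma)\,du,\\
D_f(\sigma\|\tau) &= \int_0^\infty \frac{\lambda^2}{(\lambda+v(1-\lambda))^3}\,f''\!\left(\frac{v}{\lambda+v(1-\lambda)}\right) E_v(\sigma\|\rho)\,dv.
\end{align}
To unify these two integrals, in the second one apply the swap identity $E_v(\sigma\|\rho)=v\,E_{1/v}(\rho\|\sigma)$ from Equation~\eqref{Eq:HS-exchange} followed by the substitution $u=1/v$; the factors of $v$ and $u$ combine cleanly so that the denominator becomes $(1-\lambda+u\lambda)^3$ and the argument of $f''$ becomes $1/(1-\lambda+u\lambda)$.

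Finally I would form the linear combination $(1-\mu)D_f(\rho\|\tau)+\mu D_f(\sigma\|\tau)$. Bringing both integrals under a common integral over $u\in(0,\infty)$ with integrand $E_u(\rho\|\sigma)$ leaves the weight
\begin{align}
\frac{(1-\mu)(1-\lambda)^2 f''\!\left(\tfrac{u}{1-\lambda+u\lambda}\right) + \mu\lambda^2 f''\!\left(\tfrac{1}{1-\lambda+u\lambda}\right)}{(1-\lambda+u\lambda)^3},
\end{align}
which is exactly $F_{\lambda,\mu}''(u)$ as recorded in the paragraph defining $F_{\lambda,\mu}$. Reassembling via the lemma then yields $D_{F_{\lambda,\mu}}(\rho\|\sigma)$ and closes the identity. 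The one part that requires care is the bookkeeping of the $(\mathrm{tr}(\cdot))_+$ subtraction in the extended hockey-stick divergence during the first step; provided one tracks that the correction exactly produces the $(1-u)_+$ normaliser needed to identify $E_u(\rho\|\sigma)$ and $E_v(\sigma\|\rho)$, everything else is a clean change of variables.
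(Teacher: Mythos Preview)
Your proposal is correct and follows essentially the same approach as the paper: reduce each $f$-divergence to an integral of hockey-stick divergences, use the affine rescaling identities $E_\gamma(\rho\|\tau)=(1-\gamma\lambda)E_{\gamma(1-\lambda)/(1-\gamma\lambda)}(\rho\|\sigma)$ and $E_\gamma(\sigma\|\tau)=(1-\gamma(1-\lambda))E_{\gamma\lambda/(1-\gamma(1-\lambda))}(\sigma\|\rho)$, change variables, and match the resulting weight against $F_{\lambda,\mu}''$. The only real difference is packaging: the paper works with the two-sided representation over $[1,\infty)$ and therefore needs four hockey-stick identities (Equations~\eqref{Eq:HS-identities-cc-1r}--\eqref{Eq:HS-identities-cc-4}) and four separate substitutions, whereas you use the one-sided representation $\int_0^\infty f''(\gamma)E_\gamma(\rho\|\sigma)\,d\gamma$ and thus need only the two identities for $E_\gamma(\cdot\|\tau)$ plus the swap~\eqref{Eq:HS-exchange}. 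Your route is a bit cleaner, at the cost of having to verify (as you flag) that those two identities persist for $\gamma\in[0,1]$ once the $(\tr(\cdot))_+$ normaliser is tracked---they do, since both sides pick up the same $(1-\gamma)_+$ correction.
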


\begin{proof}
Recall the following identities from Equation~\eqref{Eq:HS-identities-cc-1},
    \begin{align}
        E_\gamma(\lambda\rho+(1-\lambda)\sigma \| \sigma) &= \lambda E_{\frac{\gamma+\lambda-1}{\lambda}}(\rho\|\sigma), \label{Eq:HS-identities-cc-1r}
    \end{align}
    and, from Equation~\eqref{Eq:HS-identities-cc-2}, for $\gamma\leq\frac{1}{1-\lambda}$, 
    \begin{align}
        E_\gamma(\sigma\| \lambda\rho+(1-\lambda)\sigma) 
        &= (1-\gamma(1-\lambda)) E_{\frac{\gamma\lambda}{1-\gamma(1-\lambda)}}(\sigma\|\rho).  \label{Eq:HS-identities-cc-2r}
    \end{align}
    Furthermore one can similarly check that 
    \begin{align}
        E_\gamma(\lambda\rho+(1-\lambda)\sigma \| \rho) &= (1-\lambda) E_{\frac{\gamma-\lambda}{1-\lambda}}(\sigma\|\rho),\label{Eq:HS-identities-cc-3}
    \end{align}
    and for $\gamma\leq\frac{1}{\lambda}$, 
    \begin{align}
        E_\gamma(\rho\| \lambda\rho+(1-\lambda)\sigma) 
        &= (1-\gamma\lambda) E_{\frac{\gamma(1-\lambda)}{1-\gamma\lambda}}(\rho\|\sigma). \label{Eq:HS-identities-cc-4}       
    \end{align}
Moreover, we write $F_{\lambda}''(x) = \mu M_\lambda(x) + (1-\mu) N_\lambda(x)$ using
\begin{align} 
    M_\lambda(x) :=\frac{\lambda^2 f''(\frac{1}{1-\lambda+\lambda x})}{(1-\lambda+\lambda x)^3} \quad\text{and}\quad N_\lambda(x) :=\frac{(1-\lambda)^2 f''(\frac{x}{1-\lambda+\lambda x})}{(1-\lambda+\lambda x)^3} \,.
\end{align}

We can now rewrite the individual $f$-divergences on the right hand side of our statement,
\begin{align}
    D_f \big(\rho \big\|\lambda\rho+(1-\lambda)\sigma\big) = \int_1^\infty f''(\gamma) E_\gamma(\rho\|\lambda\rho+(1-\lambda)\sigma) +  f''\left(\frac1{\gamma}\right)\frac{1}{\gamma^3} E_\gamma(\lambda\rho+(1-\lambda)\sigma\|\rho) \,\d\gamma \,.
\end{align} 
and evaluate the two summands in the integral separately. The first integral simplifies to
\begin{align}
    \int_1^\infty f''(\gamma) E_\gamma(\rho\|\lambda\rho+(1-\lambda)\sigma) \,\d\gamma
    &= \int_1^{\frac{1}{\lambda}} f''(\gamma) (1-\gamma\lambda) E_{\frac{\gamma(1-\lambda)}{1-\gamma\lambda}}(\rho\|\sigma) \,\d \gamma \\
    &= \int_1^\infty f''\left(\frac{\bar\gamma}{1-\lambda(1-\bar\gamma)}\right) \frac{(1-\lambda)^2}{(1-\lambda(1-\bar\gamma))^3} E_{\bar\gamma}(\rho\|\sigma) \,\d\bar\gamma \\
    &= \int_1^\infty N_\lambda(\bar\gamma) E_{\bar\gamma}(\rho\|\sigma) \,\d\bar\gamma, 
\end{align}
    where the first equality is by Equation~\eqref{Eq:HS-identities-cc-4}, the second by substituting $\bar\gamma=\frac{\gamma(1-\lambda)}{1-\gamma\lambda}$ and the third by definition. Proceeding similarly, we get
\begin{align}
     \int_1^\infty f''\left(\frac1{\gamma}\right)\frac{1}{\gamma^3} E_\gamma(\lambda\rho+(1-\lambda)\sigma\|\rho) \,\d\gamma 
    &=  \int_1^\infty f''\left(\frac1{\gamma}\right) \frac{(1-\lambda)}{\gamma^3} E_{\frac{\gamma-\lambda}{1-\lambda}}(\sigma\|\rho) \,\d\gamma \\
    &= \int_1^\infty f''\left(\frac{\bar\gamma^{-1}}{(1-\lambda)+\lambda\bar\gamma^{-1}}\right)\frac{(1-\lambda)^2}{((1-\lambda)\bar\gamma +\lambda)^3} E_{\bar\gamma}(\sigma\|\rho) \,\d\bar\gamma \\
    &= \int_1^\infty \frac{1}{\bar\gamma^3} N_\lambda\left(\frac{1}{\bar\gamma}\right) E_{\bar\gamma}(\sigma\|\rho) \,\d\bar\gamma, 
\end{align}
where the first inequality is from Equation~\eqref{Eq:HS-identities-cc-3} and the second by substituting $\bar\gamma=\frac{\gamma-\lambda}{1-\lambda}$. 
Next, we have
\begin{align}
     \int_1^\infty f''(\gamma) E_\gamma(\sigma\| \lambda\rho+(1-\lambda)\sigma) \,\d\gamma 
    &= \int_1^{\frac{1}{1-\lambda}} f''(\gamma) (1-\gamma(1-\lambda)) E_{\frac{\gamma\lambda}{1-\gamma(1-\lambda)}}(\sigma\|\rho) \,\d\gamma \\
    &= \int_1^\infty f''\left(\frac1{1-\lambda+\lambda\bar\gamma^{-1}}\right) \frac{\lambda^2}{((1-\lambda)\bar\gamma +\lambda)^3} E_{\bar\gamma}(\sigma\|\rho) \,\d\bar\gamma \\
    &=  \int_1^\infty \frac{1}{\bar\gamma^3} M_\lambda\left(\frac{1}{\bar\gamma}\right) E_{\bar\gamma}(\sigma\|\rho) \,\d\bar\gamma, 
\end{align}
where the first equality follows from Equation~\eqref{Eq:HS-identities-cc-2r} and the second by substituting $\bar\gamma=\frac{\gamma\lambda}{1-\gamma(1-\lambda)}$. 
Finally, we have, 
\begin{align}
     \int_1^\infty f''\left(\frac1{\gamma}\right)\frac{1}{\gamma^3} E_\gamma(\lambda\rho+(1-\lambda)\sigma \| \sigma) \,\d\gamma 
    &=  \int_1^\infty f''\left(\frac1{\gamma}\right) \frac{\lambda}{\gamma^3}  E_{\frac{\gamma+\lambda-1}{\lambda}}(\rho\|\sigma) \,\d\gamma \\
    &=  \int_1^\infty f''\left(\frac1{1-\lambda+\lambda\bar\gamma}\right) \frac{\lambda^2}{(1-\lambda(1-\bar\gamma))^3} E_{\bar\gamma}(\rho\|\sigma) \,\d\bar\gamma \\
    &= \int_1^\infty M_\lambda(\bar\gamma) E_{\bar\gamma}(\rho\|\sigma) \,\d\bar\gamma, 
\end{align}
where the first equality is from Equation~\eqref{Eq:HS-identities-cc-1r} and the second is by substituting $\bar\gamma=\frac{\gamma+\lambda-1}{\lambda}$. Putting all the above integrals together, we find
\begin{align}
    &(1-\mu) D_f(\rho\|\lambda\rho+(1-\lambda)\sigma) + \mu D_f(\sigma\|\lambda\rho+(1-\lambda)\sigma) \nonumber\\
    &\qquad= \int_1^\infty F_{\lambda,\mu}''(\bar\gamma) E_{\bar\gamma}(\rho\|\sigma)  + \frac{1}{\bar\gamma^3} F_{\lambda,\mu}''\left(\frac{1}{\bar\gamma}\right) E_{\bar\gamma}(\sigma\|\rho)  \,\d\bar\gamma \\
    &\qquad= D_{F_{\lambda,\mu}}(\rho\|\sigma), 
\end{align}
concluding the proof.
\end{proof}

This very general result produces a number of useful special cases. For example, consider the function $F_{\lambda}(x) := F_{\lambda,\lambda}(x)$, which
interpolates between the cases
\begin{align}
    F_0(x) = f(x)  \quad &\implies\quad  D_{F_0}(\rho\|\sigma) = D_{f}(\rho\|\sigma)    \qquad \textnormal{and} \\
    F_1(x) = xf(x^{-1})  \quad&\implies\quad  D_{F_1}(\rho\|\sigma)= D_{f}(\sigma\|\rho).    
\end{align}
A related function is $\hat F_{\lambda}(x) := F_{\lambda,1-\lambda}(x)$.
Note that $\hat F_\lambda$ is such that $\hat F_0=f(1)=0$ and $\hat F_1=xf(1)=0$ and, therefore, $D_{\hat F_0}(\rho\|\sigma)=D_{\hat F_1}(\rho\|\sigma)=0$. 
Finally, note that we have
\begin{align}
    D_{F_{\lambda,0}}(\rho\|\sigma) =  D_f(\rho\|\lambda\rho+(1-\lambda)\sigma)  \qquad \textnormal{and} \qquad
    D_{F_{\lambda,1}}(\rho\|\sigma) =  D_f(\sigma\|\lambda\rho+(1-\lambda)\sigma),
\end{align}
which generalizes the concept of skew-divergences~\cite{lee1999measures,audenaert2014quantum} to our $f$-divergence. 

We have seen earlier that our integral representation reduces to the usual quantum relative entropy for $f(x)=x\log x$. As a special case of the previous proposition we can generalize that result and give an integral representation that interpolates between $D(\rho\|\sigma)$ and $D(\sigma\|\rho)$ and also includes the Jensen-Shannon divergence as half-way point, i.e.\ the quantities
\begin{align}
    (1-\mu) D(\rho\|\lambda\rho+(1-\lambda)\sigma) + \mu D(\sigma\|\lambda\rho+(1-\lambda)\sigma)
\end{align}
are themselves $f$-divergences, and include the Jensen-Shannon divergence
\begin{align}
JS(\rho\|\sigma) := \frac12 D\left(\rho\middle\|\frac{\rho+\sigma}{2}\right) + \frac12 D\left(\sigma\middle\|\frac{\rho+\sigma}{2}\right) .
\end{align}

Finally, we discuss one more special case that will be useful in the following sections. Define
\begin{align}
    LC_\lambda(\rho\|\sigma) := D_{g_\lambda}(\rho\|\sigma),  \label{Eq:LeCam-def}
\end{align}
where $\lambda \in [0, 1]$ and $g_\lambda(x)=\lambda(1-\lambda)\frac{(x-1)^2}{\lambda x+(1-\lambda)}$, in analogy to the classical Le Cam divergence~\cite{le2012asymptotic,gyorfi2001class}. 
\begin{corollary}\label{Cor:LC-H2-eq}
    Let $\lambda \in [0,1]$ and $\rho,\sigma$ be quantum states. We have, 
    \begin{align}
        LC_\lambda(\rho\|\sigma) = \lambda \chi^2\big(\rho \big\|\lambda\rho+(1-\lambda)\sigma\big) + (1-\lambda) \chi^2\big(\sigma \big\|\lambda\rho+(1-\lambda)\sigma\big) \,. 
    \end{align}
\end{corollary}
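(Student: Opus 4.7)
The plan is to recognize $LC_\lambda$ as a specific instance of the general skew-divergence family $D_{F_{\lambda,\mu}}$ from Proposition~\ref{Prop:D_F_lambda-equality} and then match parameters. Comparing the desired right-hand side with the identity
\[
D_{F_{\lambda,\mu}}(\rho\|\sigma) = (1-\mu)\, D_f\bigl(\rho\bigl\|\lambda\rho+(1-\lambda)\sigma\bigr) + \mu\, D_f\bigl(\sigma\bigl\|\lambda\rho+(1-\lambda)\sigma\bigr),
\]
the natural choice is $f(x) = (x-1)^2$, which gives $D_f = \chi^2 = H_2$ (since $f''(x) = 2$ matches that of $x^2 - 1$ and the two differ only by a linear term, which by Proposition~\ref{Prop:Df-properties}(1) contributes nothing), together with $\mu = 1-\lambda$, which produces the prefactors $\lambda$ and $1-\lambda$.

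It then remains to verify that $F_{\lambda,1-\lambda}$ coincides with $g_\lambda$. First I would compute
\[
f\!\left(\tfrac{1}{1-\lambda+\lambda x}\right) = \frac{\lambda^2 (1-x)^2}{(1-\lambda+\lambda x)^2}, \qquad f\!\left(\tfrac{x}{1-\lambda+\lambda x}\right) = \frac{(1-\lambda)^2 (x-1)^2}{(1-\lambda+\lambda x)^2},
\]
by expanding each numerator. Plugging these into the definition \eqref{Eq:def-F-lambda-mu} with $\mu = 1-\lambda$, the factor $(1-\lambda+\lambda x)$ cancels one copy in the denominator, and the combined numerator becomes
\[
(1-\lambda)\lambda^2(1-x)^2 + \lambda(1-\lambda)^2(x-1)^2 = \lambda(1-\lambda)(x-1)^2\bigl[\lambda + (1-\lambda)\bigr] = \lambda(1-\lambda)(x-1)^2,
\]
so $F_{\lambda,1-\lambda}(x) = \lambda(1-\lambda)\dfrac{(x-1)^2}{\lambda x + (1-\lambda)} = g_\lambda(x)$, as required.

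With the two identifications $f(x) = (x-1)^2$ and $\mu = 1-\lambda$ in place, Proposition~\ref{Prop:D_F_lambda-equality} delivers the corollary immediately. The only ``obstacle'' is the algebraic simplification of $F_{\lambda,1-\lambda}$, which is entirely routine; no analytic issues arise because the formula is a pointwise identity of convex functions in $\cF$, and the integral representations then agree term by term.
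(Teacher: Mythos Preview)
Your proposal is correct and follows exactly the approach the paper intends: the corollary is a direct specialization of Proposition~\ref{Prop:D_F_lambda-equality} with $f(x)=(x-1)^2$ and $\mu=1-\lambda$, and your verification that $F_{\lambda,1-\lambda}=g_\lambda$ is the only thing left to check. The paper leaves this computation implicit, so your explicit algebra is a welcome addition rather than a deviation.
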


This provides us with a wider class of functions for which the new integral $f$-divergence equals an expression given by known divergences.

%%%%%%%%%%%%%%%%%%%%%%%%%%%%%%%%%%%%%%%%%%%%%%%%%

\section{Relationship with quantum \Renyi divergences}
\label{sec:renyi}

There are plenty of quantum generalizations of \Renyi divergences and it is a natural question whether we can find an integral representation analogous to the classical case. Some of the most common quantum \Renyi divergences are those by Petz~\cite{petz1985quasi,petz1986quasi}, as well as the sandwiched~\cite{muller2013quantum,wilde2014strong} and geometric~\cite{matsumoto2018new} quantum \Renyi divergence. Those are respectively defined as
\begin{align}
    \widebar D_\alpha(\rho\|\sigma) &:= \frac{1}{\alpha -1} \log \tr \left( \rho^{\alpha}\sigma^{1-\alpha} \right) , \qquad &&\textnormal{for}\ \alpha \in (0, 1) \cup (1, 2] , \\
    \widetilde D_\alpha(\rho\|\sigma) &= \frac{1}{\alpha -1} \log \tr \left( \left(\sigma^{\frac{1-\alpha}{2\alpha}}\rho\sigma^{\frac{1-\alpha}{2\alpha}}\right)^{\alpha} \right), \qquad &&\textnormal{for}\ \alpha \in \Big[\frac12, 1\Big) \cup (1, \infty) , \\
    \widehat D_\alpha(\rho\|\sigma) &= \frac{1}{\alpha -1} \log \tr \left( \sigma^{\frac12}(\sigma^{-\frac12}\rho\sigma^{-\frac12})^\alpha \sigma^{\frac12} \right) , \qquad &&\textnormal{for}\ \alpha \in (0, 1) \cup (1, 2] ,
\end{align}
whenever either $\rho \ll \sigma$ or $\alpha < 1$ and $\rho \not\perp \sigma$ and are set to $+\infty$ otherwise. We can also define $\widecheck{D}_{\alpha}$ and $\widehat{D}_{\alpha}$ as the minimal and maximal quantum extensions of $D_{\alpha}$.

Our previous success in generalizing the classical integral representation approach to the quantum relative entropy motivates us to make the following definitions, analogous to the classical case in Equation~\eqref{eq:renyitransform} via the Hellinger divergence. For $\alpha \in (0, 1) \cup (1, \infty)$ and $\rho \ll \sigma$ or $0 < \alpha< 1\wedge \rho\not\perp\sigma$, we define\footnote{We take note of an abuse of notation concerning $D_{\alpha}$ and $D_f$. In the following, Greek letters in the subscript denote \Renyi divergences whereas lower-case Latin letters denote $f$-divergences.}

\begin{align}
    \Hell_\alpha(\rho\|\sigma) =&\ \alpha\int_1^\infty \gamma^{\alpha-2} E_\gamma(\rho\|\sigma) + \gamma^{-\alpha-1} E_\gamma(\sigma\|\rho) \,\d\gamma, \\
    D_\alpha(\rho\|\sigma) :=&\ \frac{1}{\alpha-1}\log \left( 1+(\alpha-1)\Hell_{\alpha}(\rho\|\sigma) \right) .
\end{align}
Otherwise, we set $D_\alpha(\rho\|\sigma) = +\infty$.

One can quickly check numerically that $D_\alpha(\rho\|\sigma)$ does not coincide with any of the previously defined standard quantum extensions of the \Renyi divergence (see Figure~\ref{Fig:ExamplesDa}). Nonetheless, we can take note of some basic properties of this definition.
\begin{figure}[t]
\centering
\begin{minipage}{.49\textwidth}
\scalebox{0.88}{\begin{tikzpicture}
    \node[anchor=south west,inner sep=0] (image) at (0,0) {\includegraphics[width=1.136\textwidth]{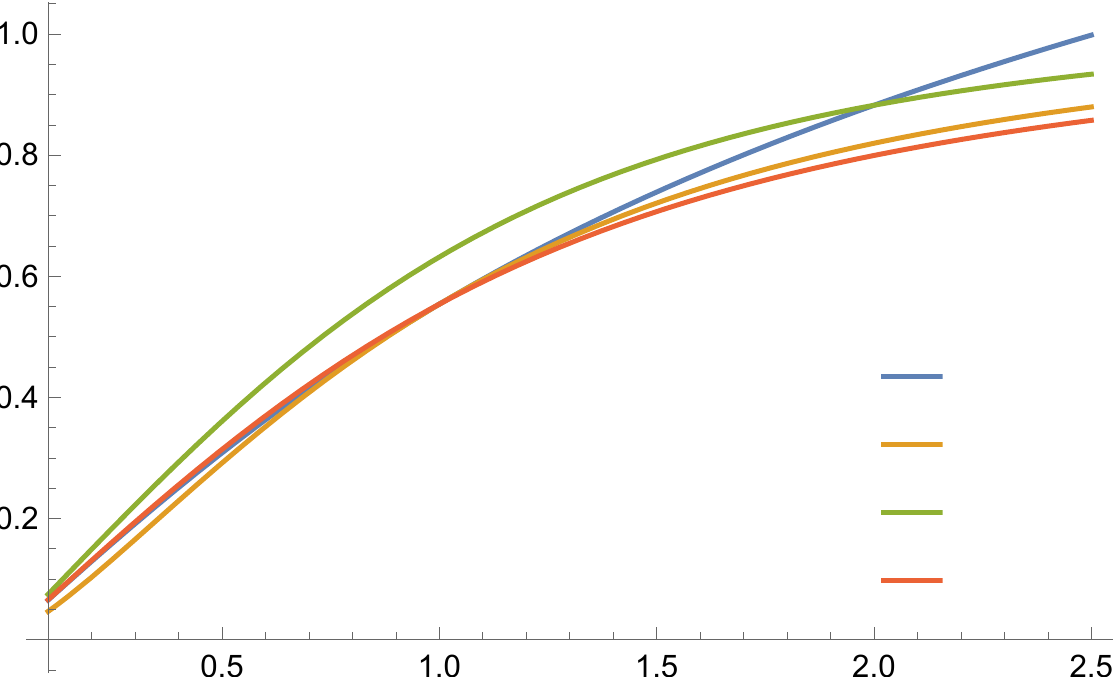}};
    \begin{scope}[x={(image.south east)},y={(image.north west)}]
        \node[] at (0.932,0.29){$\begin{aligned}
        \widebar D_\alpha(\rho\|\sigma) \\
        \widetilde D_\alpha(\rho\|\sigma) \\
        \widehat D_\alpha(\rho\|\sigma) \\
        D_\alpha(\rho\|\sigma) 
        \end{aligned}$};
        \node[] at (0.2,0.81){$\begin{aligned}
        \rho&=\frac1{12}\left( \begin{smallmatrix} 5 & 4 & 2 \\ 4 & 5 & 2 \\ 2 & 2 & 2 \end{smallmatrix}\right) \\
        \sigma&=\frac1{8}\left( \begin{smallmatrix} 5 & 0 & 0 \\ 0 & 2 & 0 \\ 0 & 0 & 1 \end{smallmatrix}\right) 
        \end{aligned}$};
    \end{scope}
\end{tikzpicture}}
\end{minipage}
\begin{minipage}{.49\textwidth}
\scalebox{0.88}{\begin{tikzpicture}
    \node[anchor=south west,inner sep=0] (image) at (0,0) {\includegraphics[width=1.136\textwidth]{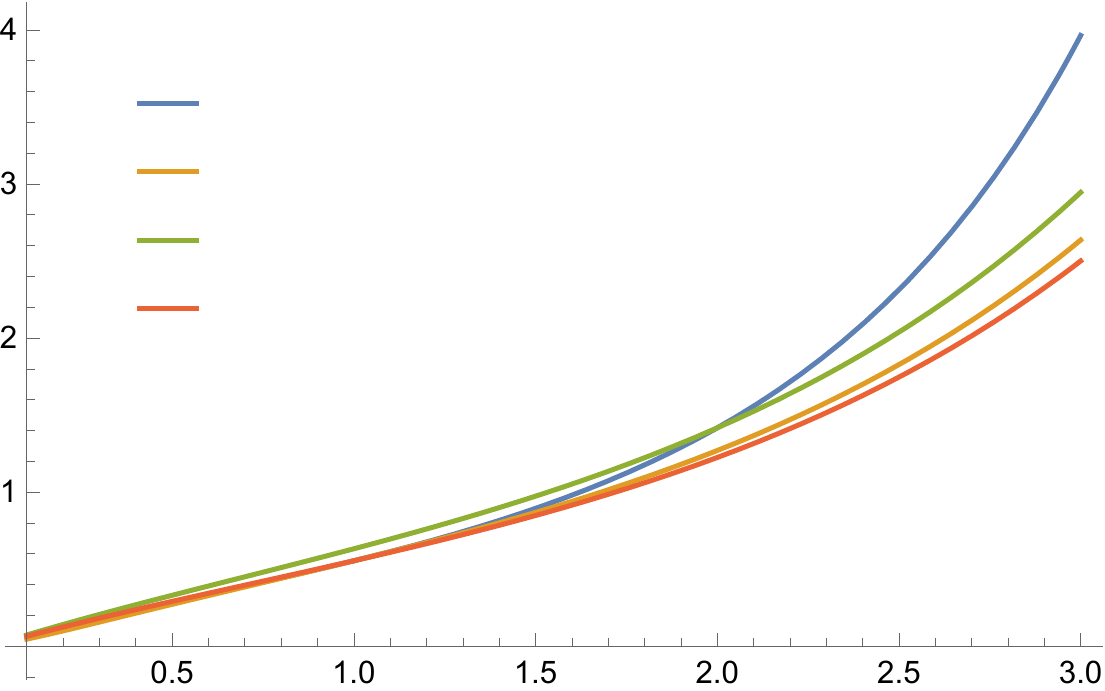}};
    \begin{scope}[x={(image.south east)},y={(image.north west)}]
        \node[] at (0.278,0.7){$\begin{aligned}
        \widebar \Hell_\alpha(\rho\|\sigma) \\
        \widetilde \Hell_\alpha(\rho\|\sigma) \\
        \widehat \Hell_\alpha(\rho\|\sigma) \\
        \Hell_\alpha(\rho\|\sigma) 
        \end{aligned}$};
    \end{scope}
\end{tikzpicture}}
\end{minipage}
\caption{\label{Fig:ExamplesDa} Plot of the different \Renyi divergences (left) and Hellinger divergences (right) defined in the main text over $\alpha$ for the states specified above. As expected $\widebar D_\alpha$, $\widetilde D_\alpha$ and $D_\alpha$ appear equal for $\alpha=1$, however for other values of $\alpha$ the quantity $D_\alpha$ is distinctly different from the others.}
\end{figure}

\begin{proposition}\label{Prop:Da-properties}
    Let $\alpha\in(0,1)\cup(1,\infty)$ and $\rho, \sigma$ two quantum states. The following properties hold:
    \begin{enumerate}

        \item (Support condition) If $\rho \ll \sigma$ or $0 < \alpha< 1\wedge \rho\not\perp\sigma$ then $D_{\alpha}(\rho\|\sigma)$ is finite.
        
        \item (Data-processing inequality) $D_\alpha(\rho\|\sigma)$ obeys the DPI, i.e., for any quantum channel $\cN$, we have
        \begin{align}
            D_\alpha\big(\cN(\rho) \big\| \cN(\sigma)\big) \leq D_\alpha(\rho\|\sigma).
        \end{align}
                  
        \item (Faithfullness) We have
        \begin{align}
            D_\alpha(\rho\|\sigma) \geq 0
        \end{align}
        with equality if and only if $\rho = \sigma$.

        \item (Relation to minimal and maximal quantum \Renyi divergence) We have
        \begin{align}
            \widecheck{D}_\alpha(\rho\|\sigma) \leq D_\alpha(\rho\|\sigma) \leq \widehat{D}_\alpha(\rho\|\sigma)
        \end{align}

    \end{enumerate}
\end{proposition}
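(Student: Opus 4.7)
\medskip

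\noindent\textbf{Proof plan.} The strategy is to reduce each item to the corresponding statement about the $f$-divergence $D_{f_\alpha}$ for the choice $f_\alpha : x \mapsto \frac{x^\alpha - 1}{\alpha - 1}$, and then transfer it through the transformation $\Phi_\alpha : h \mapsto \frac{1}{\alpha - 1} \log\bigl(1 + (\alpha-1)h\bigr)$. First, I would verify that $f_\alpha \in \cF$: it is smooth on $(0,\infty)$, satisfies $f_\alpha(1) = 0$, and has $f_\alpha''(x) = \alpha x^{\alpha-2} > 0$, so it is strictly convex. A direct computation gives $f_\alpha''(\gamma) = \alpha \gamma^{\alpha-2}$ and $\gamma^{-3} f_\alpha''(\gamma^{-1}) = \alpha \gamma^{-\alpha-1}$, which matches the integrand defining $\Hell_\alpha$, hence $\Hell_\alpha = D_{f_\alpha}$. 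Next I would note that $\Phi_\alpha$ is strictly increasing on its domain: $\Phi_\alpha'(h) = \frac{1}{1+(\alpha-1)h} > 0$ whenever the argument of the logarithm is positive, regardless of the sign of $\alpha-1$.

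With these two observations, items 2, 3, and 4 follow almost mechanically. For the \emph{data-processing inequality}, Proposition~\ref{Prop:Df-properties}(6) applied to $f_\alpha$ gives DPI for $\Hell_\alpha$; composing with the monotone $\Phi_\alpha$ yields DPI for $D_\alpha$. For \emph{faithfulness}, positivity of $\Hell_\alpha$ and $\Phi_\alpha(0)=0$ give $D_\alpha \geq 0$. For the equality case, $D_\alpha = 0$ forces $\Hell_\alpha = 0$; by the equality condition in Proposition~\ref{Prop:Df-properties}(4), this requires $f_\alpha''$ to vanish on the open interval $(e^{-D_{\max}(\sigma\|\rho)}, e^{D_{\max}(\rho\|\sigma)})$, and since $f_\alpha'' > 0$ on $(0,\infty)$, this interval must be empty. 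Combined with $D_{\max}(\cdot\|\cdot) \geq 0$, this forces $D_{\max}(\rho\|\sigma) = D_{\max}(\sigma\|\rho) = 0$, hence $\rho = \sigma$.

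For the \emph{relation to min/max quantum extensions}, I would first use Proposition~\ref{Prop:Df-properties}(8) to get $\widecheck \Hell_\alpha \leq \Hell_\alpha \leq \widehat \Hell_\alpha$. Then I would use that $D_\alpha = \Phi_\alpha(\Hell_\alpha)$ holds classically, together with the fact that supremum and infimum commute with the monotone increasing map $\Phi_\alpha$, to conclude $\widecheck D_\alpha = \Phi_\alpha(\widecheck \Hell_\alpha)$ and $\widehat D_\alpha = \Phi_\alpha(\widehat \Hell_\alpha)$, and then apply monotonicity of $\Phi_\alpha$ once more.

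Finally, for the \emph{support condition}, I would split into two cases. Finiteness of $\Hell_\alpha$ itself is easy from the integral: since $E_\gamma \leq 1$ and $E_\gamma(\rho\|\sigma) = 0$ for $\gamma \geq e^{D_{\max}(\rho\|\sigma)}$ by Lemma~\ref{Lem:HS-Properties}(5), both pieces of the integrand are integrable whenever $\rho \ll \sigma$ or $\alpha < 1$ (in the latter case using $\gamma^{\alpha-2}$ at infinity). For $\alpha > 1$, $1 + (\alpha-1)\Hell_\alpha \geq 1$ so the logarithm is finite. The delicate case, which I expect to be the main obstacle, is $0 < \alpha < 1$: here one must rule out $1 + (\alpha-1)\Hell_\alpha \leq 0$, because (as the $\rho \perp \sigma$ case already shows) $\Hell_\alpha$ can saturate at $\frac{1}{1-\alpha}$. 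Rather than argue directly, I would invoke item~4 proved above: $\Hell_\alpha \leq \widehat \Hell_\alpha$, and standard facts about the maximal extension $\widehat D_\alpha$ (known to be finite precisely when $\rho \ll \sigma$ or $\alpha<1 \wedge \rho \not\perp \sigma$) force $\widehat \Hell_\alpha < \frac{1}{1-\alpha}$ in these regimes. Since $\alpha - 1 < 0$, this transfers to the required strict inequality $1 + (\alpha-1)\Hell_\alpha > 0$, giving finiteness of $D_\alpha$.
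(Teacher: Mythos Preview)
Your overall strategy matches the paper's: reduce each item to properties of $\Hell_\alpha = D_{f_\alpha}$ already established in Proposition~\ref{Prop:Df-properties}, and push them through the monotone map $\Phi_\alpha$. Items~2 and~4 are handled the same way in both. Two points deserve comment.

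\emph{Item~3, equality case.} You invoke the equality condition in Proposition~\ref{Prop:Df-properties}(4) as if it were an ``if and only if''. As stated there it is only a sufficient condition (``equality \emph{if} $f''$ vanishes on the interval''), so it does not directly give you that $\Hell_\alpha(\rho\|\sigma)=0$ forces $f_\alpha''$ to vanish. The repair is the obvious direct argument (and is what the paper does implicitly): the integrand defining $\Hell_\alpha$ is non-negative, so $\Hell_\alpha=0$ forces $f_\alpha''(\gamma)E_\gamma(\rho\|\sigma)=0$ and $\gamma^{-3}f_\alpha''(\gamma^{-1})E_\gamma(\sigma\|\rho)=0$ for a.e.\ $\gamma\geq 1$. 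Since $f_\alpha''>0$ everywhere, continuity of $\gamma\mapsto E_\gamma$ gives $E_\gamma(\rho\|\sigma)=E_\gamma(\sigma\|\rho)=0$ for all $\gamma\geq 1$; in particular $E_1(\rho\|\sigma)=\|\rho-\sigma\|_{\tr}=0$.

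\emph{Item~1, the case $0<\alpha<1$.} Your route through $\Hell_\alpha\leq\widehat\Hell_\alpha$ together with finiteness of the maximal extension is correct but unnecessarily indirect, and it imports an external fact about $\widehat D_\alpha$. The paper argues directly from the integral: using $E_\gamma\leq 1$ one gets
\[
\alpha\int_1^\infty \gamma^{\alpha-2}E_\gamma(\rho\|\sigma)\,\d\gamma \leq \frac{\alpha}{1-\alpha},
\qquad
\alpha\int_1^\infty \gamma^{-\alpha-1}E_\gamma(\sigma\|\rho)\,\d\gamma \leq 1,
\]
so $\Hell_\alpha(\rho\|\sigma)\leq\frac{1}{1-\alpha}$. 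Equality in the first bound would require $E_\gamma(\rho\|\sigma)=1$ for all $\gamma\geq 1$, which by Lemma~\ref{Lem:HS-Properties}(2) characterises $\rho\perp\sigma$. Hence for $\rho\not\perp\sigma$ the inequality is strict, giving $1+(\alpha-1)\Hell_\alpha>0$ and finiteness of $D_\alpha$. This is self-contained and avoids any appeal to properties of $\widehat D_\alpha$.
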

\begin{proof}
    We establish the items separately.
    
    (1) This can be checked from the integral form of $\Hell_\alpha$. First note that the second integral is always bounded by $1$, even for orthogonal states. The first integral can easily be checked to be finite whenever $D_{\max}(\rho\|\sigma)$ is bounded, i.e. when $\rho \ll \sigma$. Moreover, if $0<\alpha<1$ then the first integral is never larger than $\frac{\alpha}{1-\alpha}$ with equality only if $\rho\not\perp\sigma$. This covers all the cases of the statement. 
    
     (2) follow directly from the DPI of the underlying $f$-divergence.

     (3) Since $\gamma > 0$ we have $H_{\alpha}(\rho\|\sigma) \geq 0$ with equality if and only if $\rho = \sigma$, which implies the desired property.

    (4) This follows by definition of the minimal and maximal quantum extension.

\end{proof}

One of the properties that we sorely miss is that the quanity is not additive under tensor products, that is, in general we have
\begin{align}
    D_{\alpha}(\rho_1 \otimes \rho_2 \| \sigma_1 \otimes \sigma_2) \neq D_{\alpha}(\rho_1 \| \sigma_1) + D_{\alpha}(\rho_2 \| \sigma_2) \,.
\end{align}
This can be seen as follows. It is well known~\cite{muller2013quantum,tomamichel2015quantum}, that the sandwiched \Renyi divergence is the smallest \Renyi divergence that satisfies data processing and additivity. For our divergence, we have shown data processing above, nevertheless, as evident in Figure~\ref{Fig:ExamplesDa}, the quantity can be smaller than the sandwiched \Renyi divergence and hence can not be additive. 

However, as we will see, this changes when we include a regularization, and define
\begin{align}
    D_\alpha^{\mathrm{reg}}(\rho\|\sigma) := \lim_{n\rightarrow\infty} \frac1n D_\alpha(\rho^{\otimes n}\|\sigma^{\otimes n}). 
\end{align}
In the following we will often abbreviate $\rho_n=\rho^{\otimes n}$ and $\sigma_n=\sigma^{\otimes n}$.

The following theorem is the main result of this section. 
\begin{theorem}\label{thm:Renyi-main}
Let $\rho, \sigma$ be quantum states and $\alpha\in(0,1)\cup(1,\infty)$. Then,
\begin{align}
    D_\alpha^{\mathrm{reg}}(\rho\|\sigma) = \begin{cases}
        \widebar D_\alpha(\rho\|\sigma) &0<\alpha<1 \\
        \widetilde D_\alpha(\rho\|\sigma) &1<\alpha<\infty
        \end{cases}. \label{Eq:DI-closed-formula}
\end{align}
\end{theorem}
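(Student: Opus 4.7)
The plan is to bracket $D_\alpha^{\mathrm{reg}}$ by matching upper and lower bounds. The lower bound is soft, following from a comparison with the measured R\'enyi divergence, while the upper bound requires a more careful asymptotic analysis of the integral representation of $H_\alpha$.

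\emph{Lower bound.} By Proposition~\ref{Prop:Da-properties}(4) we have $\widecheck D_\alpha(\rho^{\otimes n}\|\sigma^{\otimes n}) \leq D_\alpha(\rho^{\otimes n}\|\sigma^{\otimes n})$ for every $n$. Dividing by $n$ and passing to the limit gives $\widecheck D_\alpha^{\mathrm{reg}}(\rho\|\sigma) \leq D_\alpha^{\mathrm{reg}}(\rho\|\sigma)$. I would then invoke the known asymptotic equipartition results for the regularised measured R\'enyi divergence (Hiai--Mosonyi and Mosonyi--Ogawa), which establish $\widecheck D_\alpha^{\mathrm{reg}} = \widebar D_\alpha$ for $\alpha \in (0,1)$ and $\widecheck D_\alpha^{\mathrm{reg}} = \widetilde D_\alpha$ for $\alpha > 1$. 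Combined with the inequality above, this gives the required lower bound in both regimes.

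\emph{Upper bound.} Starting from the definition and substituting $\gamma = e^{nr}$, I would write
\begin{align}
    H_\alpha(\rho^{\otimes n}\|\sigma^{\otimes n}) = n\alpha\int_0^\infty e^{nr(\alpha-1)} E_{e^{nr}}(\rho^{\otimes n}\|\sigma^{\otimes n})\, \d r + n\alpha\int_0^\infty e^{-nr\alpha} E_{e^{nr}}(\sigma^{\otimes n}\|\rho^{\otimes n})\, \d r,
\end{align}
so that the task reduces to controlling these integrals asymptotically. For $\alpha > 1$, I would prove a non-commutative Markov-type inequality of the form
\begin{align}
    E_\gamma(\rho\|\sigma) \leq \gamma^{1-\alpha}\,\widetilde Q_\alpha(\rho\|\sigma),
\end{align}
via a H\"older argument applied to the sandwiched decomposition $\rho = \sigma^{(\alpha-1)/2\alpha} X \sigma^{(\alpha-1)/2\alpha}$ with $X = \sigma^{(1-\alpha)/2\alpha}\rho\sigma^{(1-\alpha)/2\alpha}$ and $\widetilde Q_\alpha = \tr X^\alpha$. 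Using this together with $E_\gamma \leq 1$ and $E_\gamma = 0$ for $\gamma \geq e^{nD_{\max}(\rho\|\sigma)}$, and splitting the first integral at the threshold $\gamma = e^{n\widetilde D_\alpha(\rho\|\sigma)}$, yields $H_\alpha(\rho^{\otimes n}\|\sigma^{\otimes n}) \leq \mathrm{poly}(n)\,\widetilde Q_\alpha(\rho\|\sigma)^n$. Taking logarithms and dividing by $n$ then concludes $D_\alpha^{\mathrm{reg}} \leq \widetilde D_\alpha$.

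For $\alpha \in (0,1)$, the argument is more delicate because $H_\alpha$ is already uniformly bounded by $1/(1-\alpha)$, and the relevant quantity is the rate at which this bound is saturated. I would rewrite $\tfrac{1}{1-\alpha} - H_\alpha$ as an integral of the pointwise gap $1 - E_\gamma$, and then lower-bound each integrand using the asymptotic achievability side of the quantum Chernoff bound (after Nussbaum--Szko{\l}a or Li). A saddle-point analysis, with the Chernoff parameter chosen as $s = \alpha$ in the second (symmetric) integral, produces the exponent $(1-\alpha)\widebar D_\alpha(\rho\|\sigma)$; taking logarithms and dividing by $n$ gives $D_\alpha^{\mathrm{reg}} \leq \widebar D_\alpha$.

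\emph{Main obstacle.} I expect the hardest step to be the non-commutative Markov inequality underpinning the $\alpha > 1$ case, which must use the sandwiched structure essentially; the analogous classical bound is elementary but the quantum version needs operator H\"older carefully. For $\alpha < 1$ the subtlety is instead in pinpointing the correct saddle-point parameter so that the Laplace analysis extracts exactly the Petz exponent $\widebar D_\alpha$ rather than a nearby but distinct rate, and in ensuring the two halves of the integral produce consistent exponents. The asymmetry between the two regimes---which unifies the Petz and sandwiched families under a single smooth parent quantity---reflects the different operational roles these divergences play in error and strong converse exponents, and is precisely what makes the result noteworthy.
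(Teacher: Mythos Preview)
Your overall architecture---bracket $D_\alpha^{\mathrm{reg}}$ above and below and match the bounds---is exactly what the paper does, and for $\alpha>1$ your plan is essentially identical: the paper proves the same Markov-type bound $E_\gamma(\rho\|\sigma)\le\gamma^{1-\alpha}\widetilde Q_\alpha(\rho\|\sigma)$ and then uses the regularised measured R\'enyi identity of Mosonyi--Ogawa for the lower bound. Two executional differences are worth noting. First, the paper does not prove the Markov bound by H\"older; it uses the simpler route $\tr P\rho\le\gamma^{1-\alpha}(\tr P\rho)^\alpha(\tr P\sigma)^{1-\alpha}$ (from $\tr P\rho\ge\gamma\tr P\sigma$), adds the complementary term, and invokes DPI for the sandwiched R\'enyi under the binary measurement $\{P,\Id-P\}$. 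Second, the paper avoids your threshold-splitting by instead applying the Markov bound at shifted orders $\alpha\pm\epsilon$, which turns the would-be $\int\gamma^{-1}$ integral into a convergent $\int\gamma^{-1-\epsilon}$; your cutoff at $e^{nD_{\max}}$ works too and yields the same $\mathrm{poly}(n)\cdot\widetilde Q_\alpha^n$ control.

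The genuine divergence from the paper is in the $\alpha<1$ case. For the lower bound, the paper does \emph{not} invoke $\widecheck D_\alpha^{\mathrm{reg}}=\widebar D_\alpha$; instead it uses Audenaert's inequality $\tr[A+B-|A-B|]/2\le\tr A^sB^{1-s}$ to get a direct one-shot bound $E_\gamma(\rho\|\sigma)\ge 1-\gamma^{1-s}\widebar Q_s(\rho\|\sigma)$, which after integration and the same $\alpha\pm\epsilon$ shift yields $D_\alpha^{\mathrm{reg}}\ge\widebar D_\alpha$. Your route via the regularised measured divergence is valid but imports a heavier result, while the paper's argument is self-contained. For the upper bound, the paper's argument is considerably simpler than the saddle-point analysis you sketch: the Nussbaum--Szko{\l}a converse gives the single-copy bound $E_\gamma(\rho\|\sigma)\le\tfrac12+\tfrac12 E_\gamma(P_{\rho,\sigma}\|Q_{\rho,\sigma})$, which integrates immediately to $H_\alpha(\rho\|\sigma)\le\tfrac{1}{2(1-\alpha)}+\tfrac12\widebar H_\alpha(\rho\|\sigma)$ and hence $D_\alpha\le\widebar D_\alpha+\tfrac{\log 2}{1-\alpha}$; the additive $\log 2$ vanishes under regularisation with no Laplace-type asymptotics required.
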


The proof involves four cases with very different techniques and we will present it in Subsection~\ref{sec-proof}. Before that we note a couple of properties of $D_\alpha^{\mathrm{reg}}$ that follow easily from the above representation. 
First, clearly $D_\alpha^{\mathrm{reg}}$ obeys data processing on its full range $\alpha\in(0,1)\cup(1,\infty)$. 
Secondly, it can naturally be extended to $\alpha\in\{0,1,\infty\}$ by taking the appropriate limit, 
\begin{align}
    \lim_{\alpha\rightarrow 0} D_\alpha^{\mathrm{reg}}(\rho\|\sigma) &= \widebar D_0(\rho\|\sigma) = -\log\Pi_\rho\sigma, \\
    \lim_{\alpha\rightarrow 1} D_\alpha^{\mathrm{reg}}(\rho\|\sigma) &= D(\rho\|\sigma), \\
    \lim_{\alpha\rightarrow \infty} D_\alpha^{\mathrm{reg}}(\rho\|\sigma) &= D_{\max}(\rho\|\sigma) = \inf\{ \lambda : \rho \leq \exp(\lambda)\sigma \}, 
\end{align}
where $\Pi_\rho$ is the projector onto the support of $\rho$. 

Finally, we remark that $\widebar D_\alpha$ and $\widetilde D_\alpha$ have operational interpretations for exactly those ranges of $\alpha$ in which they appear in the right hand side of Equation~\eqref{Eq:DI-closed-formula} in terms of binary hypothesis testing, more precisely in the error exponent and the strong converse exponent respectively. With that motivation  it has previously been suggested that the right hand of~\eqref{Eq:DI-closed-formula} could be the standard definition of the quantum \Renyi divergence (see~\cite{mosonyi2015quantum}).

\subsection{Proof of Theorem~\ref{thm:Renyi-main}}
\label{sec-proof}

In the following we present the proof of Theorem~\ref{thm:Renyi-main} split into four cases. 
Throughout we will only consider the case where $\rho \ll \sigma$ or $0<\alpha< 1\wedge \rho\not\perp\sigma$, as otherwise both sides are infinite and the statement holds trivially. 

\subsubsection{Proof of upper bound for $\alpha>1$}

First, we will need some properties of the hockey-stick divergence. Recall that 
\begin{align}
    E_\gamma(\rho_n\|\sigma_n) &= \sup_M \tr M(\rho_n-\gamma\sigma_n) \\
    &= \tr P_{\gamma,n} (\rho_n - \gamma\sigma_n),  
\end{align}
where $P_{\gamma,n} = \{\rho_n-\gamma\sigma_n\}_+$ is the projector onto the positive part of $\rho_n-\gamma\sigma_n$. This is strikingly similar to the Neyman-Pearson test appearing e.g. in the proof of the strong converse exponent for binary quantum state discrimination, where $\gamma=e^{an}$ is specified, see~\cite{mosonyi2015quantum}. Now, from the positivity of the hockey-stick divergence we easily see that
\begin{align}
    \tr P_{\gamma,n} \rho_n \geq \gamma\tr P_{\gamma,n} \sigma_n. \label{Eq:proj-bound}
\end{align}
With this in place, we can get the following lemma whose first part is an adaptation of the proof of~\cite[Lemma IV.3]{mosonyi2015quantum}. 
\begin{lemma}
For $\alpha\geq 1$ and any $n\in\NN$, we have
\begin{align}
    \tr P_{\gamma,n} \rho_n \leq \gamma^{1-\alpha} \widetilde Q_\alpha(\rho_n\|\sigma_n). 
\end{align}
and as a consequence for $\gamma\geq 1$, 
\begin{align}
    E_\gamma(\rho_n\|\sigma_n) &\leq \gamma^{1-\alpha} \widetilde Q_\alpha(\rho_n\|\sigma_n), \\
    E_\gamma(\sigma_n\|\rho_n) &\leq 1 -\gamma + \gamma^{\alpha} \widetilde Q_\alpha(\rho_n\|\sigma_n).
\end{align}
\end{lemma}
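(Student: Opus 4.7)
The plan is to prove the first inequality by a Neyman--Pearson style reduction to the classical two-outcome case combined with the DPI of the sandwiched R\'enyi divergence, and then to derive both consequences by simple algebraic manipulation of the hockey-stick divergence. Write $P := P_{\gamma,n}$, and consider the binary POVM $\{P, I-P\}$; this measurement maps $\rho_n$ and $\sigma_n$ to Bernoulli distributions with parameters $p := \tr P\rho_n$ and $q := \tr P\sigma_n$. For $\alpha \geq 1$, the DPI of $\widetilde D_\alpha$ (which passes to $\widetilde Q_\alpha$ because $\alpha - 1 > 0$) yields
\begin{align}
    \widetilde Q_\alpha(\rho_n\|\sigma_n) \;\geq\; p^\alpha q^{1-\alpha} + (1-p)^\alpha (1-q)^{1-\alpha} \;\geq\; p^\alpha q^{1-\alpha},
\end{align}
after dropping the non-negative second term. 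Equation~\eqref{Eq:proj-bound} supplies $q \leq p/\gamma$, and since $1-\alpha \leq 0$ this gives $q^{1-\alpha} \geq \gamma^{\alpha-1} p^{1-\alpha}$. Substituting back produces $\widetilde Q_\alpha(\rho_n\|\sigma_n) \geq \gamma^{\alpha-1} p$, which rearranges to the main bound. Degenerate cases are easily dispatched: if $\rho_n \not\ll \sigma_n$ then $\widetilde Q_\alpha = \infty$ and the inequality is trivial, whereas under $\rho_n \ll \sigma_n$ the condition $q=0$ forces $P\sigma_n = 0$ and hence $p=0$ as well.

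The first consequence is immediate. For $\gamma \geq 1$, $P\sigma_n P \geq 0$ gives
\begin{align}
    E_\gamma(\rho_n\|\sigma_n) \;=\; \tr P(\rho_n - \gamma\sigma_n) \;\leq\; \tr P\rho_n \;\leq\; \gamma^{1-\alpha}\,\widetilde Q_\alpha(\rho_n\|\sigma_n).
\end{align}
For the second consequence, I would first reduce to an instance governed by the main bound. Using $\tr X_+ - \tr X_- = \tr X$ with $X = \sigma_n - \gamma\rho_n$, together with the scaling identity $(\gamma\rho_n - \sigma_n)_+ = \gamma (\rho_n - \sigma_n/\gamma)_+$, we obtain
\begin{align}
    E_\gamma(\sigma_n\|\rho_n) \;=\; \tr(\sigma_n - \gamma\rho_n)_+ \;=\; (1-\gamma) + \gamma\,\tr(\rho_n - \sigma_n/\gamma)_+ .
\end{align}
Setting $P' := \{\rho_n - \sigma_n/\gamma\}_+$ and bounding $\tr P'(\rho_n - \sigma_n/\gamma) \leq \tr P'\rho_n$, the main bound --- which was established for all $\gamma \geq 0$, hence in particular at ratio $1/\gamma \leq 1$ --- gives $\tr P'\rho_n \leq (1/\gamma)^{1-\alpha}\widetilde Q_\alpha(\rho_n\|\sigma_n) = \gamma^{\alpha-1}\widetilde Q_\alpha(\rho_n\|\sigma_n)$. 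Assembling the pieces yields $E_\gamma(\sigma_n\|\rho_n) \leq (1-\gamma) + \gamma^\alpha\widetilde Q_\alpha(\rho_n\|\sigma_n)$, as claimed.

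The principal obstacle is making the Neyman--Pearson reduction watertight: one must verify that DPI of $\widetilde D_\alpha$ for $\alpha \geq 1$ transfers cleanly to $\widetilde Q_\alpha$, that dropping the second term in the classical sum is legitimate (both summands are non-negative for $\alpha \geq 1$ whenever $\widetilde Q_\alpha$ is finite), and that the boundary cases $p, q \in \{0,1\}$ do not derail the chain of inequalities. Once this is settled, everything else is algebraic bookkeeping using the elementary identities $\tr X_+ - \tr X_- = \tr X$ and homogeneity of the positive part, plus two applications of the already-proved main bound.
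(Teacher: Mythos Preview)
Your proof is correct and follows essentially the same route as the paper: apply the DPI of $\widetilde Q_\alpha$ to the binary POVM $\{P_{\gamma,n}, I-P_{\gamma,n}\}$, drop the second (non-negative) summand, and use the projection bound $\tr P_{\gamma,n}\sigma_n \leq \gamma^{-1}\tr P_{\gamma,n}\rho_n$; the two consequences then follow by discarding the $\sigma_n$-term and by the exchange identity $E_\gamma(\sigma_n\|\rho_n)=\gamma E_{1/\gamma}(\rho_n\|\sigma_n)$ (which is exactly what your $\tr X_+-\tr X_-=\tr X$ computation unpacks). The only cosmetic difference is that the paper arranges the first chain starting from $\tr P_{\gamma,n}\rho_n$ and building up to $\widetilde Q_\alpha$, while you start from $\widetilde Q_\alpha$ and descend; your extra care with the boundary cases $q=0$ and $\rho_n\not\ll\sigma_n$ is a welcome addition.
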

\begin{proof}
    Observe the following, 
    \begin{align}
        \tr P_{\gamma,n} \rho_n &=  \left(\tr P_{\gamma,n}\rho_n\right)^\alpha \left(\tr P_{\gamma,n}\rho_n\right)^{1-\alpha}  \\
        &\leq \gamma^{1-\alpha} \left(\tr P_{\gamma,n}\rho_n\right)^\alpha \left(\tr P_{\gamma,n}\sigma_n\right)^{1-\alpha} \\
        &\leq \gamma^{1-\alpha} \left[ \left(\tr P_{\gamma,n}\rho_n\right)^\alpha \left(\tr P_{\gamma,n}\sigma_n\right)^{1-\alpha} +  \left(\tr (\Id-P_{\gamma,n})\rho_n\right)^\alpha \left(\tr (\Id-P_{\gamma,n})\sigma_n\right)^{1-\alpha} \right] \\
        &\leq \gamma^{1-\alpha} \widetilde Q_\alpha(\rho_n\|\sigma_n), 
    \end{align}
    where the first inequality follows from Equation~\ref{Eq:proj-bound}, the second by adding a positive term and the third by data processing. With this we get
    \begin{align}
        E_\gamma(\rho_n\|\sigma_n) &= \tr P_{\gamma,n} \rho_n - \gamma\tr P_{\gamma,n} \sigma_n \\
        &\leq \gamma^{1-\alpha} \widetilde Q_\alpha(\rho_n\|\sigma_n),
    \end{align}
    and 
    \begin{align}
        E_\gamma(\sigma_n\|\rho_n) &= \gamma E_{\frac{1}{\gamma}}(\rho_n\|\sigma_n)  \\
        &= \gamma \tr P_{\frac{1}{\gamma},n} \rho_n - \tr P_{\frac{1}{\gamma},n} \sigma_n - \gamma(1-\frac1\gamma)_+ \\
        &\leq 1 -\gamma + \gamma^{\alpha} \widetilde Q_\alpha(\rho_n\|\sigma_n),
    \end{align}
    where the first equality is~\cite[Proposition II.5]{hirche2022quantum}.
\end{proof}

We will use these bounds now to prove the desired upper bound. 
For any $\alpha>1$ and any $0<\epsilon<\alpha-1$, fix $\alpha'=\alpha+\epsilon>1$ and $\alpha''=\alpha-\epsilon>1$. We get
\begin{align}
    \Hell_\alpha(\rho\|\sigma) &= \alpha\int_1^\infty \gamma^{\alpha-2} E_\gamma(\rho\|\sigma) + \gamma^{-\alpha-1} E_\gamma(\sigma\|\rho) \d\gamma \\
    &\leq \alpha\int_1^\infty \gamma^{\alpha-2} \gamma^{1-\alpha'} \widetilde Q_{\alpha'}(\rho\|\sigma) + \gamma^{-\alpha-1} \left(1 -\gamma + \gamma^{\alpha''} \widetilde Q_{\alpha''}(\rho\|\sigma)\right)  \d\gamma \\ 
    &= \alpha \left(\widetilde Q_{\alpha'}(\rho\|\sigma) + \widetilde Q_{\alpha''}(\rho\|\sigma) \right) \int_1^\infty \gamma^{-1-\epsilon}  \d\gamma + \alpha\int_1^\infty \left(\gamma^{-\alpha-1} -\gamma^{-\alpha}\right) \d\gamma \\
    &= \frac{\alpha}{\epsilon} \left(\widetilde Q_{\alpha'}(\rho\|\sigma) + \widetilde Q_{\alpha''}(\rho\|\sigma) \right) - \frac{1}{\alpha-1}.
\end{align}
This implies
\begin{align}
    D_\alpha(\rho\|\sigma) &= \frac{1}{\alpha-1}\log[ 1+(\alpha-1)\Hell_{\alpha}(\rho\|\sigma)] \\
    &\leq \frac{1}{\alpha-1}\log\left[\frac{(\alpha^2-\alpha)}{\epsilon} \left(\widetilde Q_{\alpha'}(\rho\|\sigma) + \widetilde Q_{\alpha''}(\rho\|\sigma) \right)\right]
\end{align}
and further 
\begin{align}
    D_\alpha^{\mathrm{reg}}(\rho\|\sigma) &= \lim_{n\rightarrow\infty} \frac1n D_\alpha(\rho_n\|\sigma_n)  \\
    &\leq \lim_{n\rightarrow\infty} \frac1n \frac{1}{\alpha-1}\log\left[\frac12\left( \widetilde Q_{\alpha'}(\rho_n\|\sigma_n) + \widetilde Q_{\alpha''}(\rho_n\|\sigma_n) \right)\right] + \lim_{n\rightarrow\infty} \frac1n \frac{1}{\alpha-1}\log\left[\frac{2(\alpha^2-\alpha)}{\epsilon} \right] \\
    &= \lim_{n\rightarrow\infty} \frac1n \frac{1}{\alpha-1}\log\left[ \frac12\left(\widetilde Q_{\alpha'}(\rho\|\sigma)^n + \widetilde Q_{\alpha''}(\rho\|\sigma)^n \right) \right] \\
    &\leq \frac{1}{\alpha-1} \log \max \left\{ \widetilde Q_{\alpha'}(\rho\|\sigma), \widetilde Q_{\alpha''}(\rho\|\sigma)  \right\} . 
\end{align}
Finally, as we can chose $\epsilon$ arbitrarily small, continuity of $\alpha \mapsto \widetilde{Q}_{\alpha}(\rho\|\sigma)$ implies that $D_\alpha^{\mathrm{reg}}(\rho\|\sigma) \leq \widetilde{D}_\alpha(\rho\|\sigma)$.

\subsubsection{Proof of upper bound for $0<\alpha<1$}

This part of the proof draws its ideas from symmetric hypothesis testing, in particular the proof of the converse of Chernoffs bound in~\cite{nussbaum2009chernoff}. 
Recall that the optimal probability of error considered in this scenario can be written as
\begin{align}
    p_e^*(p,\rho,\sigma) &= \inf_M p_e(p,\rho,\sigma, M) \\
    &=\inf_M p \tr M_1\rho + (1-p)\tr M_0\sigma \\
    & = p - p E_\frac{1-p}{p}(\rho\|\sigma), 
\end{align}
or equivalently, 
\begin{align}
    p_e^*(p,\rho,\sigma) = (1-p) - (1-p) E_\frac{p}{1-p}(\sigma\|\rho). 
\end{align}
For the next step, consider decompositions of the states, 
\begin{align}
    \rho = \sum_x r_x |v_x\rangle\langle v_x| \quad \text{and}\quad \sigma = \sum_y s_y |u_y\rangle\langle u_y|, 
\end{align} 
with $\{|v_x\rangle\}_{x=1\dots d}$ and $\{|u_y\rangle\}_{x=1\dots d}$ being two orthonormal bases. Based on these we define the classical Nussbaum-Szko{\l}a distributions as
\begin{align}
    P_{\rho,\sigma}(x,y) = r_x |\langle v_x|u_y\rangle|^2 \quad \text{and}\quad Q_{\rho,\sigma}(x,y) = s_x |\langle v_x|u_y\rangle|^2. 
\end{align}
The main observation in~\cite{nussbaum2009chernoff} is that using these particular distributions one has 
\begin{align} 
    p_e^*(p,\rho,\sigma) \geq \frac12 p_e^*(p,P_{\rho,\sigma},Q_{\rho,\sigma}), \label{Eq:NS-prob-error}
\end{align}
where the right hand side is the optimal symmetric error distinguishing between $P_{\rho,\sigma}$ and $Q_{\rho,\sigma}$, see also~\cite[Proposition 2]{audenaert2008asymptotic}. As a result we can get the following bounds on the hockey-stick divergence as a corollary.
\begin{corollary}
    For $\gamma\geq 1$ and $P_{\rho,\sigma}$ and $Q_{\rho,\sigma}$ as above, one has
    \begin{align}
        E_\gamma(\rho\|\sigma) &\leq \frac{1}2  + \frac{1}2 E_\gamma(P_{\rho,\sigma}\|Q_{\rho,\sigma}), 
    \end{align}
\end{corollary}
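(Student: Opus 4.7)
The plan is to specialise the Nussbaum--Szko\l{}a inequality~\eqref{Eq:NS-prob-error} to a carefully chosen prior $p$ depending on $\gamma$, and then rearrange. For any $\gamma \geq 1$, set $p = \frac{1}{\gamma+1}$. Since $\gamma \geq 1$ we have $p \in (0, \tfrac12]$, so the first of the two expressions for $p_e^*$ derived in the preceding subsection applies, and since $(1-p)/p = \gamma$ it simplifies to
\begin{align}
p_e^*(p,\rho,\sigma) \;=\; \tfrac{1}{\gamma+1}\bigl(1 - E_\gamma(\rho\|\sigma)\bigr).
\end{align}
The same formula, applied to the commuting diagonal embeddings of $P_{\rho,\sigma}$ and $Q_{\rho,\sigma}$ (for which $E_\gamma$ reduces to its classical hockey-stick counterpart), gives
\begin{align}
p_e^*(p,P_{\rho,\sigma},Q_{\rho,\sigma}) \;=\; \tfrac{1}{\gamma+1}\bigl(1 - E_\gamma(P_{\rho,\sigma}\|Q_{\rho,\sigma})\bigr).
\end{align}

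Substituting both identities into~\eqref{Eq:NS-prob-error} and cancelling the common positive factor $\tfrac{1}{\gamma+1}$ produces
\begin{align}
1 - E_\gamma(\rho\|\sigma) \;\geq\; \tfrac12\bigl(1 - E_\gamma(P_{\rho,\sigma}\|Q_{\rho,\sigma})\bigr),
\end{align}
which rearranges immediately to the claimed bound $E_\gamma(\rho\|\sigma) \leq \tfrac12 + \tfrac12 E_\gamma(P_{\rho,\sigma}\|Q_{\rho,\sigma})$. There is no real obstacle beyond recognising the correct choice $p = 1/(\gamma+1)$, which converts the Bayesian error probability into the hockey-stick divergence at the desired threshold $\gamma$; the corollary is essentially a translation of~\eqref{Eq:NS-prob-error} from the language of minimum error probabilities into that of hockey-stick divergences.
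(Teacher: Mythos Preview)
Your proof is correct and takes essentially the same approach as the paper: both substitute the expression $p_e^*(p,\cdot,\cdot) = p - p E_{(1-p)/p}(\cdot\|\cdot)$ into the Nussbaum--Szko\l{}a inequality~\eqref{Eq:NS-prob-error} with $\gamma = (1-p)/p$, then divide by $p$ and rearrange. Your presentation makes the choice $p = 1/(\gamma+1)$ explicit, but the underlying argument is identical.
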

\begin{proof}
    We start by noting that Equation~\ref{Eq:NS-prob-error} is equivalent to
    \begin{align}
        p - p E_\frac{1-p}{p}(\rho\|\sigma) &\geq \frac{p}2  - \frac{p}2 E_\frac{1-p}{p}(P_{\rho,\sigma}\|Q_{\rho,\sigma}) \\
      \iff\quad   E_\frac{1-p}{p}(\rho\|\sigma) &\leq \frac{1}2  + \frac{1}2 E_\frac{1-p}{p}(P_{\rho,\sigma}\|Q_{\rho,\sigma}) 
    \end{align}
    which is equivalent to the desired statement for $\gamma=\frac{1-p}{p}$. 
\end{proof}
Using this, we find the following sequence of inequalities:
\begin{align}
    \Hell_\alpha(\rho\|\sigma) &= \alpha\int_1^\infty \gamma^{\alpha-2} E_\gamma(\rho\|\sigma) + \gamma^{-\alpha-1} E_\gamma(\sigma\|\rho) \, \d\gamma \\
    &\leq \alpha\int_1^\infty \gamma^{\alpha-2} \left(\frac{1}2  + \frac{1}2 E_\gamma(P_{\rho,\sigma}\|Q_{\rho,\sigma})\right) + \gamma^{-\alpha-1} \left(\frac{1}2  + \frac{1}2 E_\gamma(Q_{\rho,\sigma}\|P_{\rho,\sigma})\right)  \d\gamma\\
    &= \frac{\alpha}{2}\int_1^\infty (\gamma^{\alpha-2} + \gamma^{-\alpha-1} ) \d\gamma + \frac{\alpha}{2}\int_1^\infty \gamma^{\alpha-2}  E_\gamma(P_{\rho,\sigma}\|Q_{\rho,\sigma}) + \gamma^{-\alpha-1}  E_\gamma(Q_{\rho,\sigma}\|P_{\rho,\sigma}) \, \d\gamma \\
    &= \frac{1}{2(1-\alpha)} + \frac{1}{2} \Hell_\alpha(P_{\rho,\sigma}\|Q_{\rho,\sigma}), \\
    &= \frac{1}{2(1-\alpha)} + \frac{1}{2} \widebar\Hell_\alpha(\rho\|\sigma),
\end{align}
where the last equality follows from the definition of the probabilities~\cite{nussbaum2009chernoff}, see also~\cite[Proposition 1]{audenaert2008asymptotic}. 
This implies
\begin{align}
    D_\alpha(\rho\|\sigma) = \frac{1}{\alpha-1}\log \left( 1+(\alpha-1)\Hell_{\alpha}(\rho\|\sigma) \right) \leq \widebar D_\alpha(\rho\|\sigma) + \frac{1}{1-\alpha}\log 2 
\end{align}
and, thus, $D_\alpha^{\mathrm{reg}}(\rho\|\sigma) \leq \widebar D_\alpha(\rho\|\sigma)$.

\subsubsection{Proof of lower bound for $\alpha>1$}

In this part of the proof we make use of the following observation~\cite{mosonyi2015quantum}, see also~\cite{tomamichel2015quantum}: 
\begin{align}
    \widetilde D_\alpha(\rho\|\sigma) = \lim_{n\rightarrow\infty} \frac{1}{n} \widecheck{D}_{\alpha}(\rho_n\|\sigma_n), 
\end{align}
where, using the notion in Equation~\ref{eq:Df_measured}, we define
\begin{align}
    \widecheck{D}_{\alpha}(\rho\|\sigma) = \sup_M D_\alpha( P_{M,\rho} \| P_{M,\sigma}) \label{Eq:meas-RRE}
\end{align}
as the measured \Renyi divergence. 
For ease of presentation we will use $P \equiv P_{M,\rho}$ and $Q \equiv P_{M,\sigma}$ for a measurement $M$ that achieves the supremum in the above. By data processing, we have 
\begin{align}
    \Hell_\alpha(\rho\|\sigma) &= \alpha\int_1^\infty (\gamma^{\alpha-2} E_\gamma(\rho\|\sigma) + \gamma^{-\alpha-1} E_\gamma(\sigma\|\rho)) \,\d\gamma \\
    &\geq \alpha\int_1^\infty (\gamma^{\alpha-2} E_\gamma(P\|Q) + \gamma^{-\alpha-1} E_\gamma(Q\|P)) \,\d\gamma  =\Hell_\alpha(P\|Q). 
\end{align}
This implies that $D_\alpha(\rho\|\sigma) \geq \widecheck{D}_\alpha(\rho\|\sigma)$ and, regularizing both sides, we conclude that $D_\alpha^{\mathrm{reg}}(\rho\|\sigma) \geq \widetilde D_\alpha(\rho\|\sigma)$.

\subsubsection{Proof of lower bound for $0<\alpha<1$}

Again, we will need some new bounds on the hockey-stick divergence. The following statement was proven in~\cite{audenaert2007discriminating}, see also~\cite{audenaert2008asymptotic}. For all $0\leq s\leq 1$ and positive semidefinite operators $A$ and $B$ one has
\begin{align}
    \tr[A+B-|A-B|]/2 \leq \tr[A^sB^{1-s}]. \label{Eq:Audenaert}
\end{align}
We use this as the main ingredient to prove the following corollary. 
\begin{corollary}
    For any $\gamma\geq1$ and $0\leq s\leq 1$, we have
    \begin{align}
        E_\gamma(\rho\|\sigma) &\geq 1-\gamma^{1-s}\widebar Q_s(\rho\|\sigma).
    \end{align}
\end{corollary}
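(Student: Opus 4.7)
The plan is to directly apply Audenaert's inequality (\ref{Eq:Audenaert}) to the two positive semidefinite operators $A = \rho$ and $B = \gamma \sigma$, and then translate the resulting trace-norm bound into a statement about the hockey-stick divergence via the well known identity $\tr(X)_+ = \tfrac12(\tr X + \|X\|_1)$ for Hermitian $X$.

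First I would rewrite the hockey-stick divergence in the trace-norm form
\begin{align}
E_\gamma(\rho\|\sigma) = \tr(\rho - \gamma\sigma)_+ = \tfrac12\bigl(\tr\rho - \gamma\tr\sigma\bigr) + \tfrac12\|\rho - \gamma\sigma\|_1 = \tfrac12\bigl(1 - \gamma + \|\rho - \gamma\sigma\|_1\bigr),
\end{align}
which is exactly the expression appearing on the left of Audenaert's inequality (up to rearrangement). Applying (\ref{Eq:Audenaert}) with $A = \rho$ and $B = \gamma\sigma$ and pulling the factor $\gamma^{1-s}$ out of the right-hand side using $(\gamma\sigma)^{1-s} = \gamma^{1-s}\sigma^{1-s}$ yields
\begin{align}
\tfrac12\bigl(1 + \gamma - \|\rho - \gamma\sigma\|_1\bigr) \le \gamma^{1-s}\,\tr\bigl(\rho^s \sigma^{1-s}\bigr) = \gamma^{1-s}\widebar Q_s(\rho\|\sigma).
\end{align}

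Rearranging this gives a lower bound on $\|\rho - \gamma\sigma\|_1$, which when substituted into the previous expression for $E_\gamma(\rho\|\sigma)$ produces exactly
\begin{align}
E_\gamma(\rho\|\sigma) \ge 1 - \gamma^{1-s}\widebar Q_s(\rho\|\sigma),
\end{align}
as claimed. The assumption $\gamma \ge 1$ is not actually used in the derivation itself; it only guarantees that the bound is of interest (for small $\gamma$ the bound becomes trivial or negative). There is no real obstacle here: the proof is a two-line manipulation once one recognises that Audenaert's inequality is precisely the correct lemma to apply, and the only thing to be careful about is the scalar factor $\gamma^{1-s}$ from distributing the power over $\gamma\sigma$.
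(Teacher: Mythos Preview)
Your proof is correct and follows exactly the same route as the paper: apply Audenaert's inequality with $A=\rho$, $B=\gamma\sigma$, and combine with the identity $E_\gamma(\rho\|\sigma)=\tfrac12\|\rho-\gamma\sigma\|_1+\tfrac12(1-\gamma)$. One small correction to your closing remark: the assumption $\gamma\ge 1$ \emph{is} used, since for $\gamma<1$ the paper's extended definition of $E_\gamma$ carries the extra term $-(\tr(\rho-\gamma\sigma))_+=-(1-\gamma)$, so your first displayed identity would fail.
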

\begin{proof}
   This follows directly from Equation~\eqref{Eq:Audenaert} by choosing $A=\rho$ and $B=\gamma\sigma$ and recalling that
   \begin{align}
        E_\gamma(\rho\|\sigma) = \frac12 \|\rho-\gamma\sigma\|_1 +\frac12(1-\gamma). 
   \end{align}
\end{proof}
We use this for the following bound, 
\begin{align}
    \Hell_\alpha(\rho\|\sigma) &= \alpha\int_1^\infty (\gamma^{\alpha-2} E_\gamma(\rho\|\sigma) + \gamma^{-\alpha-1} E_\gamma(\sigma\|\rho)) \d\gamma \\
    &\geq \alpha\int_1^\infty \gamma^{\alpha-2} \left(1-\gamma^{1-s}\widebar Q_s(\rho\|\sigma)\right) + \gamma^{-\alpha-1} \left(1-\gamma^{1-s'}\widebar Q_{s'}(\sigma\|\rho)\right) \d\gamma \\
    &= -\alpha\int_1^\infty \gamma^{\alpha-1-s}\widebar Q_s(\rho\|\sigma) + \gamma^{-\alpha-s'}\widebar Q_{s'}(\sigma\|\rho) \d\gamma + \alpha\int_1^\infty \gamma^{\alpha-2} + \gamma^{-\alpha-1} \, \d\gamma \\
    &= -\alpha\int_1^\infty \gamma^{\alpha-1-s}\widebar Q_s(\rho\|\sigma) + \gamma^{-\alpha-s'}\widebar Q_{s'}(\sigma\|\rho) \, \d\gamma + \frac{1}{1-\alpha}.
\end{align}
Now, fix an $\epsilon$ such that $0<\epsilon<\min\{\alpha,1-\alpha\}$, define $\alpha':=\alpha+\epsilon$ and $\alpha'':=\alpha-\epsilon$ and set $s=\alpha'$ and $s'=1-\alpha''$. This leads to
\begin{align}
    \Hell_\alpha(\rho\|\sigma) &\geq -\alpha\left(\widebar Q_{\alpha'}(\rho\|\sigma) + \widebar Q_{\alpha''}(\rho\|\sigma)\right) \left(\int_1^\infty \gamma^{-1-\epsilon} \d\gamma\right) + \frac{1}{1-\alpha} \\
    &= -\frac{\alpha}{\epsilon}\left(\widebar Q_{\alpha'}(\rho\|\sigma) + \widebar Q_{\alpha''}(\rho\|\sigma)\right) + \frac{1}{1-\alpha}.
\end{align}
This implies
\begin{align}
    D_\alpha(\rho\|\sigma) &= \frac{1}{\alpha-1}\log \left( 1+(\alpha-1)\Hell_{\alpha}(\rho\|\sigma) \right) \\
    &\geq \frac{1}{\alpha-1}\log \left( \frac{\alpha(1- \alpha)}{\epsilon}\left(\widebar Q_{\alpha'}(\rho\|\sigma) + \widebar Q_{\alpha''}(\rho\|\sigma)\right) \right)
\end{align}
and, therefore, 
\begin{align}
    D_\alpha^{\mathrm{reg}}(\rho\|\sigma) &\geq \lim_{n\rightarrow\infty} \frac1n \frac{1}{\alpha-1}\log\left( \frac{1}{2}\left(\widebar Q_{\alpha'}(\rho_n\|\sigma_n) + \widebar Q_{\alpha''}(\rho_n\|\sigma_n)\right)\right) + \lim_{n\rightarrow\infty} \frac1n \frac{1}{\alpha-1}\log\left(-\frac{2\alpha(\alpha-1)}{\epsilon}\right) \\
    &= \lim_{n\rightarrow\infty} \frac1n \frac{1}{\alpha-1}\log\left(\frac{1}{2}\left(\widebar Q_{\alpha'}(\rho\|\sigma)^n + \widebar Q_{\alpha''}(\rho\|\sigma)^n\right)\right) \\
    &\geq \frac{1}{\alpha-1} \log \min \left\{ \widebar Q_{\alpha'}(\rho\|\sigma) , \widebar Q_{\alpha''}(\rho\|\sigma) \right\} \,.
\end{align}
Finally, using the continuity of $\alpha \mapsto \widebar Q_{\alpha}(\rho\|\sigma)$ and using the fact that $\epsilon$ can be chosen arbitrarily small, we get
$D_\alpha^{\mathrm{reg}}(\rho\|\sigma) \geq \widebar D_\alpha(\rho_n\|\sigma_n)$, concluding the proof.

\subsection{Multiplicative continuity bounds in $\alpha$}

We briefly show an additional property of $\Hell_{\alpha}$ by giving a multiplicative continuity bound of the following form.
\begin{proposition}\label{Prop:ab-continuity}
    Let $\rho \ll\gg \sigma$ be quantum states and $\alpha, \beta \in [0,\infty)$. Then,
    \begin{align}
        \kappa(\alpha,\beta) \Hell_\beta(\rho\|\sigma) \leq \Hell_\alpha(\rho\|\sigma) \leq \kappa(\beta,\alpha)^{-1} \Hell_\beta(\rho\|\sigma), 
    \end{align}
    where
    \begin{align}
        \kappa(\alpha,\beta) = \begin{cases}
        2^{(\beta-\alpha)D_{\max}(\sigma\|\rho)} & \alpha\geq\beta \\
         2^{(\alpha-\beta)D_{\max}(\rho\|\sigma)} & \alpha\leq\beta
        \end{cases} .
    \end{align}
\end{proposition}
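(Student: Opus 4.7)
The plan is to reduce the claim to a pointwise comparison of the integrands in the single-integral form of the Hellinger divergence, namely $\Hell_\alpha(\rho\|\sigma) = \int_0^\infty \alpha\gamma^{\alpha-2} E_\gamma(\rho\|\sigma)\,\d\gamma$ as established in the lemma immediately preceding the proposition. The crucial preliminary step is to localise the support of $E_\gamma(\rho\|\sigma)$: item~5 of Lemma~\ref{Lem:HS-Properties} yields $E_\gamma(\rho\|\sigma)=0$ for $\gamma \geq e^{D_{\max}(\rho\|\sigma)}$, while combining item~7 (exchange of arguments) with item~5 applied to the pair $(\sigma\|\rho)$ forces $E_\gamma(\rho\|\sigma)=0$ also for $\gamma \leq e^{-D_{\max}(\sigma\|\rho)}$. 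Hence the integrand is supported on the bounded interval $I := [e^{-D_{\max}(\sigma\|\rho)},\, e^{D_{\max}(\rho\|\sigma)}]$.

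Next, without loss of generality take $\alpha \geq \beta$ and factor the integrand weight as $\alpha\gamma^{\alpha-2} = (\alpha/\beta)\,\gamma^{\alpha-\beta}\cdot \beta\gamma^{\beta-2}$. The map $\gamma \mapsto \gamma^{\alpha-\beta}$ is non-decreasing, so on $I$ its values lie in $[e^{-(\alpha-\beta) D_{\max}(\sigma\|\rho)},\, e^{(\alpha-\beta) D_{\max}(\rho\|\sigma)}]$. Using the lower endpoint together with $\alpha/\beta \geq 1$ gives the pointwise estimate
\begin{align*}
    \alpha\gamma^{\alpha-2} E_\gamma(\rho\|\sigma) \;\geq\; e^{(\beta-\alpha) D_{\max}(\sigma\|\rho)}\, \beta\gamma^{\beta-2} E_\gamma(\rho\|\sigma)
\end{align*}
on $I$; integrating yields the stated lower bound $\Hell_\alpha(\rho\|\sigma) \geq \kappa(\alpha,\beta)\Hell_\beta(\rho\|\sigma)$ in the regime $\alpha\geq\beta$. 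The analogous argument that uses the upper endpoint of the range of $\gamma^{\alpha-\beta}$ produces the upper bound, and the complementary case $\alpha\leq\beta$ follows by swapping the names of $\alpha$ and $\beta$ throughout --- a swap which also interchanges $D_{\max}(\rho\|\sigma)$ and $D_{\max}(\sigma\|\rho)$ and so matches the two pieces of the piecewise definition of $\kappa$.

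The main subtlety is the correct accounting of the prefactor ratio $\alpha/\beta$: the raw pointwise comparison produces an extra factor $\alpha/\beta$ (or its reciprocal) alongside the exponential constant, which must be absorbed into $\kappa$ using the ordering of $\alpha$ and $\beta$. In each of the four sub-cases (upper vs.\ lower bound, and $\alpha\geq\beta$ vs.\ $\alpha\leq\beta$), this extra prefactor lands on the "good" side in exactly one of the two natural pointwise estimates, so the argument must be carefully matched to the right sub-case; equivalently, one can prove only the lower bound in both regimes directly and then read off the upper bound by relabeling $\alpha\leftrightarrow\beta$ and rearranging. Aside from this bookkeeping and the support identification through $D_{\max}$, the proof reduces to the one-line pointwise estimate followed by integration.
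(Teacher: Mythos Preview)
Your approach mirrors the paper's: both localise the support of $E_\gamma$ via the relation to $D_{\max}$ and then compare the power weights pointwise. You use the single-integral form over $(0,\infty)$ whereas the paper works with the two-integral form over $[1,\infty)$, but this is cosmetic. You also correctly flag the prefactor ratio $\alpha/\beta$ as the main subtlety.

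The gap is that your resolution of this subtlety does not go through. For the lower bound in the regime $\alpha\leq\beta$, the pointwise comparison you describe yields only
\[
\Hell_\alpha(\rho\|\sigma) \;\geq\; \frac{\alpha}{\beta}\, e^{(\alpha-\beta)D_{\max}(\rho\|\sigma)}\, \Hell_\beta(\rho\|\sigma),
\]
and since $\alpha/\beta\leq 1$ there is no way to drop this factor and recover the claimed $\kappa(\alpha,\beta)$. This is not slack in the method but an obstruction in the statement itself: letting $\alpha\to 0$ with $\beta>0$ fixed gives $\Hell_\alpha\to 0$ (the integrand carries the prefactor $\alpha$), while $\kappa(0,\beta)\,\Hell_\beta>0$ whenever $\rho\neq\sigma$, so the stated lower bound is violated. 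The paper's own proof sketch shares this oversight --- it compares $\gamma^{\alpha-2}$ with $\gamma^{\beta-2}$ and omits the prefactors $\alpha,\beta$ entirely --- so the defect is inherited rather than introduced by you. What your argument does establish cleanly is the lower bound for $\alpha\geq\beta$ and, by the relabeling symmetry you invoke, the upper bound for $\alpha\leq\beta$; the remaining two sub-cases require an additional factor $\min(\alpha,\beta)/\max(\alpha,\beta)$ in $\kappa$, or equivalently the bounds should be stated for $\Hell_\alpha/\alpha$ rather than $\Hell_\alpha$.
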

\begin{proof}
    The second inequality follows directly from the first. 
    We note that to give multiplicative bounds between say $\Hell_\alpha(\rho\|\sigma)$ and $\Hell_\beta(\rho\|\sigma)$ it suffices to find $\kappa$ such that 
    \begin{align}
        \kappa \gamma^{\beta-2} &\leq \gamma^{\alpha-2} \qquad &&\forall \gamma\in [1,2^{D_{\max}(\rho\|\sigma)}] \\
        \kappa \gamma^{-\beta-1} &\leq \gamma^{-\alpha-1} \qquad &&\forall \gamma\in [1,2^{D_{\max}(\sigma\|\rho)}]. 
    \end{align}
    The above claim follows by examining the different cases. 
\end{proof}
Continuity bounds on $D_{\alpha}$ close to $\alpha = 1$ follow immediately from this, but they are not of multiplicative form. Nonetheless, we find the following bound.
\begin{corollary}
    Let $\rho \ll\gg \sigma$ be quantum states.
    For $\alpha\geq1$, we have $D_\alpha(\rho\|\sigma) \leq 2^{(\alpha-1)D_{\max}(\rho\|\sigma)} D(\rho\|\sigma)$ and for $\alpha \leq 1$, we have $D_\alpha(\rho\|\sigma) \geq 2^{(\alpha-1)D_{\max}(\rho\|\sigma)} D(\rho\|\sigma)$.
\end{corollary}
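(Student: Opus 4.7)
The strategy is to use Proposition~\ref{Prop:ab-continuity} with $\beta = 1$, recognizing that $\Hell_1$ coincides with the Umegaki relative entropy, and then pass from the Hellinger divergence to the R\'enyi divergence via the elementary inequality $\log(1+x) \le x$.

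\emph{Step 1: Identify $\Hell_1$ with $D$.} Evaluating the integral defining $\Hell_\alpha$ at $\alpha = 1$ gives
\begin{align}
\Hell_1(\rho\|\sigma) = \int_1^\infty \frac{1}{\gamma}E_\gamma(\rho\|\sigma) + \frac{1}{\gamma^2}E_\gamma(\sigma\|\rho)\,\d\gamma = D(\rho\|\sigma),
\end{align}
by Corollary~\ref{Cor:D-integral}. Finiteness is ensured by the assumption $\rho \ll\gg \sigma$.

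\emph{Step 2: Apply the multiplicative continuity bound.} Plugging $\beta = 1$ into Proposition~\ref{Prop:ab-continuity}, for $\alpha \geq 1$ the upper bound reads
\begin{align}
\Hell_\alpha(\rho\|\sigma) \leq \kappa(1,\alpha)^{-1}\,D(\rho\|\sigma) = 2^{(\alpha-1)D_{\max}(\rho\|\sigma)}\,D(\rho\|\sigma),
\end{align}
since $\kappa(1,\alpha) = 2^{(1-\alpha)D_{\max}(\rho\|\sigma)}$ by the $\alpha \leq \beta$ branch. For $0 \leq \alpha \leq 1$ the lower bound of the same proposition yields
\begin{align}
\Hell_\alpha(\rho\|\sigma) \geq \kappa(\alpha,1)\,D(\rho\|\sigma) = 2^{(\alpha-1)D_{\max}(\rho\|\sigma)}\,D(\rho\|\sigma),
\end{align}
now using the $\alpha \leq \beta$ branch for $\kappa(\alpha,1)$.

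\emph{Step 3: Transfer to $D_\alpha$.} By the definition $D_\alpha(\rho\|\sigma) = \frac{1}{\alpha-1}\log\bigl(1 + (\alpha-1)\Hell_\alpha(\rho\|\sigma)\bigr)$ and the inequality $\log(1+x) \leq x$ (valid for $x > -1$), we have
\begin{align}
\log\bigl(1 + (\alpha-1)\Hell_\alpha(\rho\|\sigma)\bigr) \leq (\alpha-1)\Hell_\alpha(\rho\|\sigma).
\end{align}
For $\alpha > 1$ we divide by $\alpha - 1 > 0$ to get $D_\alpha \leq \Hell_\alpha$, and combining with Step~2 delivers $D_\alpha(\rho\|\sigma) \leq 2^{(\alpha-1)D_{\max}(\rho\|\sigma)} D(\rho\|\sigma)$. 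For $\alpha < 1$ we divide by $\alpha - 1 < 0$, which reverses the inequality to $D_\alpha \geq \Hell_\alpha$, and combining with Step~2 delivers the claimed lower bound. The edge case $\alpha = 1$ holds trivially.

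\emph{Main obstacle.} There is no substantial obstacle: the work has essentially been done by Proposition~\ref{Prop:ab-continuity} and Corollary~\ref{Cor:D-integral}. The only point requiring care is the sign bookkeeping in Step~3, where dividing by $\alpha-1$ reverses the inequality for $\alpha < 1$; one must also check that the argument of the logarithm stays positive, which is guaranteed by the hypothesis $\rho \ll\gg \sigma$ (so that $D_\alpha$ is finite and the integrand in $\Hell_\alpha$ is integrable with the appropriate sign).
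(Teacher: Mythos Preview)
Your proof is correct and follows exactly the approach of the paper, which simply states that the corollary follows by choosing $\beta=1$ in Proposition~\ref{Prop:ab-continuity} together with the bound $\log(1+x)\leq x$ for $x\geq -1$. You have spelled out the details faithfully, including the identification $\Hell_1 = D$ and the sign bookkeeping when dividing by $\alpha-1$.
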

\begin{proof}
    The proof follows by choosing $\beta=1$ in Proposition~\ref{Prop:ab-continuity} and the bound $\log(1+x)\leq x$ for $x\geq -1$. 
\end{proof}

%%%%%%%%%%%%%%%%%%%%%%%%%%%%%%%%%%%%%%%%%%%%%%%%%%%

\section{Contraction coefficients}\label{Sec:ContractionCoef}

Contraction coefficients give a strong version of data processing inequalities by quantifying the decrease in information.
First, define the contraction coefficients for our $f$-divergence as
\begin{align}
\eta_f(\cA,\sigma) &:= \sup_{\rho} \frac{D_f(\cA(\rho) \| \cA(\sigma))}{D_f(\rho\|\sigma)},\\
\eta_f(\cA) &:= \sup_\sigma \eta_f(\cA,\sigma) =  \sup_{\rho,\sigma} \frac{D_f(\cA(\rho) \| \cA(\sigma))}{D_f(\rho\|\sigma)}. 
\end{align}
Clearly we have $0\leq \eta_f(\cA,\sigma)\leq \eta_f(\cA) \leq 1$. By definition the contraction coefficients give us the best channel dependent constants such that
\begin{alignat}{2}
    D(\cA(\rho) \| \cA(\sigma)) &\leq \eta_f(\cA,\sigma) D(\rho\|\sigma) \qquad &&\forall\rho \\ 
    D(\cA(\rho) \| \cA(\sigma)) &\leq \eta_f(\cA) D(\rho\|\sigma) &&\forall\rho,\sigma.
\end{alignat}
An important special case is that of the relative entropy. We define for a quantum channel $\cA$,  
\begin{align}
    \eta_\Re(\cA) := \eta_{x\log x}(\cA) = \sup_{\rho,\sigma} \frac{D(\cA(\rho) \| \cA(\sigma))}{D(\rho\|\sigma)}. 
\end{align}

Besides the relative entropy we are also interested in the contraction of the hockey-stick divergence and define
\begin{align}
    \eta_\gamma(\cA) = \sup_{\rho,\sigma} \frac{E_\gamma(\cA(\rho) \| \cA(\sigma))}{E_\gamma(\rho\|\sigma)}.
\end{align}
While contraction coefficients can often be hard to compute, it was recently shown in~\cite{hirche2022quantum} that $\eta_\gamma(\cA)$ has a comparably simple form, 
\begin{align}
    \eta_\gamma(\cA) = \sup_{\ket{\Psi}\perp\ket{\Phi}} E_\gamma(\cA(\ket{\Psi}\bra{\Psi}) \| \cA(\ket{\Phi}\bra{\Phi})),
\end{align}
where the optimization is over two orthogonal pure states. An important special case of this is
\begin{align}
    \eta_{\tr}(\cA) := \eta_1(\cA),  
\end{align}
which is also often called the Dobrushin coefficient. For a broader overview over the topic of contraction coefficients we refer to~\cite{hirche2022contraction,hiai2016contraction}.

\subsection{Bounds between contraction coefficients} 

Comparing contraction coefficients for different divergences can be an important tool when determining properties of strong data-processing inequalities. Providing such comparisons will be the focus of this section. 
First, we give a general upper bound on the contraction coefficient of any $f$-divergence in terms of that of the trace distance. 
\begin{lemma}\label{Lem:ConBoundTrace}
    For any quantum channel $\cA$, we have
         \begin{align}
             \eta_f(\cA) \leq \eta_{\tr}(\cA), 
          \end{align}
including, as a special case, 
\begin{align}
    \eta_{\Re}(\cA) \leq \eta_{\tr}(\cA). 
\end{align}
\end{lemma}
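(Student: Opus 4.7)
The plan is to exploit the integral representation $D_f(\rho\|\sigma) = \int_1^\infty f''(\gamma) E_\gamma(\rho\|\sigma) + \gamma^{-3} f''(\gamma^{-1}) E_\gamma(\sigma\|\rho)\, \d\gamma$ together with the fact that $f'' \geq 0$ for $f \in \cF$. Since both integrands are weighted by non-negative functions of $\gamma$, it suffices to establish a pointwise contraction inequality $E_\gamma(\cA(\rho)\|\cA(\sigma)) \leq \eta_{\tr}(\cA)\, E_\gamma(\rho\|\sigma)$ valid for every $\gamma \geq 1$, and then integrate.

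The first step is therefore to prove that $\eta_\gamma(\cA) \leq \eta_{\tr}(\cA)$ for every $\gamma \geq 1$. For this I use the simple pure-state formula
\[
    \eta_\gamma(\cA) = \sup_{\ket{\Psi}\perp\ket{\Phi}} E_\gamma\big(\cA(\ket{\Psi}\!\bra{\Psi}) \big\| \cA(\ket{\Phi}\!\bra{\Phi})\big)
\]
recalled from the excerpt (so that the denominator $E_\gamma(\ket{\Psi}\!\bra{\Psi}\|\ket{\Phi}\!\bra{\Phi}) = 1$ drops out for every $\gamma \geq 1$). Combining this with the monotonicity of $\gamma \mapsto E_\gamma$ for $\gamma \geq 1$ (item~4 of Lemma~\ref{Lem:HS-Properties}), applied to the output states $\cA(\ket{\Psi}\!\bra{\Psi})$ and $\cA(\ket{\Phi}\!\bra{\Phi})$, yields $\eta_\gamma(\cA) \leq \eta_1(\cA) = \eta_{\tr}(\cA)$. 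By definition of $\eta_\gamma$, this gives the pointwise bound $E_\gamma(\cA(\rho)\|\cA(\sigma)) \leq \eta_{\tr}(\cA)\, E_\gamma(\rho\|\sigma)$ for all states $\rho,\sigma$ and all $\gamma \geq 1$.

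The second step is routine: apply this pointwise bound inside the integral representation, using $f''(\gamma), \gamma^{-3} f''(\gamma^{-1}) \geq 0$, to conclude
\[
    D_f(\cA(\rho)\|\cA(\sigma)) \leq \eta_{\tr}(\cA) \int_1^\infty f''(\gamma) E_\gamma(\rho\|\sigma) + \gamma^{-3} f''(\gamma^{-1}) E_\gamma(\sigma\|\rho)\, \d\gamma = \eta_{\tr}(\cA)\, D_f(\rho\|\sigma).
\]
Dividing by $D_f(\rho\|\sigma)$ (the trivial case $D_f(\rho\|\sigma)=0$ is handled by faithfulness/continuity) and taking the supremum over $\rho,\sigma$ delivers $\eta_f(\cA) \leq \eta_{\tr}(\cA)$. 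The special case $\eta_\Re(\cA) \leq \eta_{\tr}(\cA)$ is then immediate by choosing $f(x) = x\log x$.

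There is no substantive obstacle here. The only point worth being careful about is the reduction $\eta_\gamma \leq \eta_{\tr}$, which is not automatic from DPI alone but is essentially free once one invokes the pure-state characterization of $\eta_\gamma$ together with monotonicity of the hockey-stick divergence in $\gamma$. Everything else is integration of a non-negative pointwise inequality.
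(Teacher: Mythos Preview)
Your proof is correct and follows essentially the same route as the paper: both arguments insert the pointwise bound $E_\gamma(\cA(\rho)\|\cA(\sigma)) \leq \eta_\gamma(\cA)\, E_\gamma(\rho\|\sigma) \leq \eta_{\tr}(\cA)\, E_\gamma(\rho\|\sigma)$ into the integral representation, using non-negativity of $f''$. The only difference is that the paper simply invokes ``monotonicity of $\eta_\gamma$ in $\gamma$'' for the second inequality, whereas you spell out why it holds via the pure-state characterization of $\eta_\gamma$ together with item~4 of Lemma~\ref{Lem:HS-Properties}.
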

\begin{proof}
By definition of the $f$-divergence we have, 
\begin{align}
    D_f(\cA(\rho)\|\cA(\sigma)) &=\int_1^\infty (f''(\gamma) E_\gamma(\cA(\rho)\|\cA(\sigma)) + \gamma^{-3} f''(\gamma^{-1})E_\gamma(\cA(\sigma)\|\cA(\rho))) \d\gamma \\
    &\leq \int_1^\infty (f''(\gamma) \eta_\gamma(\cA) E_\gamma(\rho\|\sigma) + \gamma^{-3} f''(\gamma^{-1}) \eta_\gamma(\cA) E_\gamma(\sigma\|\rho)) \d\gamma \\
    &\leq \eta_{\tr}(\cA) \int_1^\infty (f''(\gamma)  E_\gamma(\rho\|\sigma) + \gamma^{-3} f''(\gamma^{-1})  E_\gamma(\sigma\|\rho)) \d\gamma \\
    &= \eta_{\tr}(\cA)\, D_f(\rho\|\sigma),
\end{align}
where the first inequality is by definition of $\eta_\gamma$, the second by its monotonicity in $\gamma$ and the final equality is again by definition. 
\end{proof}
This is an often useful bound because of how much easier it is to compute the contraction coefficient for the trace distance. This result was previously conjectured in~\cite{ruskai1994beyond}, however with a different definition for the $f$-divergence. However, this bound might be loose. Classically it is known that for a wide class of $f$-divergences their contraction coefficients are the same~\cite{cohen1993relative,cohen1998comparisons,raginsky2016strong}. Whether the same holds true in the quantum setting is a long standing open problem. In the following we will give several bounds between contraction coefficients, ultimately showing that the classical results translate to the quantum setting when considering the new $f$-divergence.  
\begin{corollary}\label{Cor:x2-lower-bound}
    For any twice differentiable $f\in\cF$ with $0<f''(1)<\infty$, we have
    \begin{align}
        \eta_{x^2}(\cA,\sigma) &\leq \eta_f(\cA,\sigma), \\
        \eta_{x^2}(\cA) &\leq \eta_f(\cA),
    \end{align}
    where $\eta_{x^2}$ is the contraction coefficient for the $\Hell_2$ divergence. 
\end{corollary}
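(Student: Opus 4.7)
The plan is to exploit Theorem~\ref{Thm:H2-is-limit-of-f}, which identifies $f''(1)\,\chi^2(\rho\|\sigma)$ with the leading (second-order) coefficient of $D_f(\rho_\lambda\|\sigma)$ about $\lambda=0$, where $\rho_\lambda := \lambda\rho + (1-\lambda)\sigma$. The idea is to write down the definition of $\eta_f(\cA,\sigma)$ for the perturbed pair $(\rho_\lambda,\sigma)$, divide through by $\lambda^2/2$, and let $\lambda\to 0^+$, so that each side converges to the corresponding $\chi^2$ quantity (up to the common factor $f''(1)$).

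Concretely, I would fix $\rho,\sigma$ with $\rho\ll\gg\sigma$ and $\lambda\in(0,1)$. Since $\rho_\lambda\ll\gg\sigma$ (indeed $\rho_\lambda$ and $\sigma$ even share support with $\rho$), the definition of $\eta_f(\cA,\sigma)$ gives
\begin{align*}
D_f\big(\cA(\rho_\lambda)\,\big\|\,\cA(\sigma)\big) \;\le\; \eta_f(\cA,\sigma)\,D_f(\rho_\lambda\|\sigma).
\end{align*}
Linearity of $\cA$ yields $\cA(\rho_\lambda) = \lambda\cA(\rho) + (1-\lambda)\cA(\sigma)$, so the left-hand side is precisely of the form appearing in Theorem~\ref{Thm:H2-is-limit-of-f} applied to the pair $(\cA(\rho),\cA(\sigma))$. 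Since CPTP maps preserve the $\ll$ relation (if $\rho\le c\sigma$ then $\cA(\rho)\le c\cA(\sigma)$), we have $\cA(\rho)\ll\gg\cA(\sigma)$ so that Theorem~\ref{Thm:H2-is-limit-of-f} is applicable on both sides.

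Dividing by $\lambda^2/2$ and letting $\lambda\to 0^+$, Theorem~\ref{Thm:H2-is-limit-of-f} applied independently to each side gives
\begin{align*}
f''(1)\,\chi^2\big(\cA(\rho)\,\big\|\,\cA(\sigma)\big) \;\le\; \eta_f(\cA,\sigma)\,f''(1)\,\chi^2(\rho\|\sigma).
\end{align*}
The assumption $0<f''(1)<\infty$ lets me cancel this factor. Taking the supremum over $\rho$ (without loss of generality restricted to $\rho\ll\gg\sigma$, since $\chi^2(\rho\|\sigma)=\infty$ in the excluded cases renders the inequality trivial) yields $\eta_{x^2}(\cA,\sigma)\le \eta_f(\cA,\sigma)$, and a further supremum over $\sigma$ gives the global bound.

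The main obstacle is justifying the term-by-term passage to the limit and the associated support technicalities: one needs Theorem~\ref{Thm:H2-is-limit-of-f} to apply on both sides, which requires $\cA(\rho)\ll\gg\cA(\sigma)$; this is not automatic for general positive maps but follows from CPTP-ness as noted. Apart from this bookkeeping, the argument is a direct limiting comparison and no delicate estimates are required.
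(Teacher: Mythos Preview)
Your proposal is correct and follows essentially the same approach as the paper: apply the defining inequality of $\eta_f(\cA,\sigma)$ to the pair $(\rho_\lambda,\sigma)$, divide by $\lambda^2/2$, and invoke Theorem~\ref{Thm:H2-is-limit-of-f} on both sides to pass to $\chi^2$. The paper's proof is just a terser version of exactly this argument (and in fact contains a small typo, writing $\frac{2}{\lambda}$ instead of $\frac{2}{\lambda^2}$); your extra remarks about preservation of $\ll\gg$ under $\cA$ are correct bookkeeping that the paper omits.
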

\begin{proof}
    By definition we have,
    \begin{align}
        D_f(\cA(\lambda\rho+(1-\lambda)\sigma)\|\cA(\sigma)) \leq \eta_f(\cA,\sigma) D_f(\lambda\rho+(1-\lambda)\sigma\|\sigma).
    \end{align}
    Multiplying by $\frac2\lambda$ and taking the limit $\lambda\rightarrow 0$ gives the first claim by using Theorem~\ref{Thm:H2-is-limit-of-f}. The second follows by taking the supremum over all $\sigma$. 
\end{proof}

We are also interested in corresponding upper bounds. While for some $f$ the contraction coefficients in the previous corollary might be quite different, we can make useful statements when we further restrict the set of functions to operator convex functions. In this case we can choose between a large number of integral representations for the operator convex function itself. In particular, here we consider the following representation proven in~\cite[Theorem 8.1]{hiai2011quantum} specialized to the case of $f(1)=0$, 
\begin{align}
    f(x) = a(x-1) + b(x^2-1) + \int_0^\infty \left( \frac{x}{1+t} - \frac{x}{x+t} \right) d\mu(t), 
\end{align}
for some real $a$, some $b\geq 0$ and a non-negative measure $\mu$ on $(0,\infty)$ with
\begin{align}
    \int_0^\infty \frac{d\mu(t)}{(1+t)^2} < \infty. 
\end{align}
The constants are uniquely determined by 
\begin{align}
    a+b=-f(0), \qquad b=\lim_{x\rightarrow\infty} \frac{f(x)}{x^2}. 
\end{align}
Using this representation we can get the following alternative representation of $f$-divergences. 
\begin{corollary} \label{Cor:LC-representation}
For any operator convex $f\in\cF$, we have
    \begin{align}
        D_f(\rho\|\sigma) = b\Hell_2(\rho\|\sigma) + \int_0^1 \lambda\, LC_\lambda(\rho\|\sigma) d\nu(\lambda),
    \end{align} 
with $b$ given as above and some non-negative measure $\nu$.
\end{corollary}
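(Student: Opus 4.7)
The plan is to take the integral representation of the operator convex $f$ cited above and substitute it termwise into the definition of $D_f$, using the linearity of $D_f$ in $f$ (Proposition~\ref{Prop:Df-properties}, Item 2) together with Fubini to exchange the order of integration.

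First I would dispose of the two simple terms. Since $a(x-1)$ is a linear function with vanishing second derivative, Proposition~\ref{Prop:Df-properties}, Item 1 gives $D_{a(x-1)}(\rho\|\sigma)=0$. Since $\Hell_2$ corresponds (by definition) to the function $x\mapsto x^2-1$, the term $b(x^2-1)$ contributes exactly $b\,\Hell_2(\rho\|\sigma)$.

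The main step is to identify, for each $t>0$, the function
\begin{align}
h_t(x) := \frac{x}{1+t} - \frac{x}{x+t}
\end{align}
with (a multiple of) the Le Cam function $g_\lambda$ from~\eqref{Eq:LeCam-def}, modulo an affine piece. A direct computation yields $h_t''(x) = \frac{2t}{(x+t)^3}$, while differentiating $g_\lambda(x)=\lambda(1-\lambda)(x-1)^2/(\lambda x+(1-\lambda))$ twice gives $g_\lambda''(x)=\frac{2\lambda(1-\lambda)}{(\lambda x+(1-\lambda))^3}$. Choosing $\lambda=\frac{1}{1+t}$, equivalently $t=\frac{1-\lambda}{\lambda}$, we get $\lambda x+(1-\lambda)=\lambda(x+t)$, which makes the two second derivatives proportional: explicitly
\begin{align}
g_\lambda''(x) = (1+t)\,h_t''(x).
\end{align}
Since $h_t(1)=0$ and the definition of $D_f$ only sees $f''$, this gives $D_{h_t}(\rho\|\sigma)=\frac{1}{1+t}\,LC_\lambda(\rho\|\sigma)=\lambda\,LC_\lambda(\rho\|\sigma)$.

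Combining these pieces, applying linearity of $D_f$ in $f$, and exchanging integrals via Fubini (justified by the nonnegativity of the integrands $f''(\gamma)E_\gamma(\rho\|\sigma)$, $\gamma^{-3}f''(\gamma^{-1})E_\gamma(\sigma\|\rho)$, the measure $\mu$, and the finiteness of $\int (1+t)^{-2}d\mu(t)$), we obtain
\begin{align}
D_f(\rho\|\sigma) = b\,\Hell_2(\rho\|\sigma) + \int_0^\infty \lambda(t)\,LC_{\lambda(t)}(\rho\|\sigma)\,d\mu(t),\qquad \lambda(t)=\tfrac{1}{1+t}.
\end{align}
Defining $\nu$ as the pushforward of $\mu$ under $t\mapsto 1/(1+t)$ (a nonnegative measure supported on $(0,1)$) turns this into the stated formula. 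The only nontrivial step is the second-derivative identification; Fubini and the change of variable are routine. In particular, no obstacle of principle arises — the result follows by plugging the Hiai–Petz integral representation into the definition of $D_f$.
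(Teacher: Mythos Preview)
Your proposal is correct and follows essentially the same route as the paper: start from the Hiai--Petz integral representation, discard the affine pieces, recognise the quadratic term as $b\,\Hell_2$, and identify each integrand with a multiple of $LC_\lambda$ via the substitution $\lambda=1/(1+t)$ by matching second derivatives. The only cosmetic difference is that the paper factors $\frac{-\lambda x}{\lambda x+(1-\lambda)}=\lambda\cdot\frac{-x}{\lambda x+(1-\lambda)}$ and observes that the second factor has the same second derivative as $g_\lambda$, whereas you compute $h_t''$ and $g_\lambda''$ directly and compare; your explicit Fubini justification and clean pushforward definition of $\nu$ are in fact a bit more careful than the paper's somewhat informal ``$d\nu(\lambda)=d\mu(t)/(t+1)^2$''.
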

\begin{proof}
    Noting that polynomials of degree one or two do not contribute to the $f$-divergence, see Proposition~\ref{Prop:Df-properties}, we get that for every operator convex $f$ we can equivalently use 
    \begin{align}
        \tilde f(x) = bx^2 - \int_0^\infty \left( \frac{x}{x+t} \right) d\mu(t).
    \end{align}
    This results in
    \begin{align}
        D_f(\rho\|\sigma) &= D_{\tilde f}(\rho\|\sigma) \\
        &= b\Hell_2(\rho\|\sigma) + \int_0^\infty D_{\frac{-x}{x+t}}(\rho\|\sigma) d\mu(t) \\
        &= b\Hell_2(\rho\|\sigma) + \int_0^1 D_{\frac{-\lambda x}{\lambda x+(1-\lambda)}}(\rho\|\sigma) d\nu(\lambda)  \\
        &= b\Hell_2(\rho\|\sigma) + \int_0^1 \lambda\, LC_\lambda(\rho\|\sigma) d\nu(\lambda),
    \end{align} 
    where the third equality follows by substituting $\lambda=\frac{1}{t+1}$ and choosing $d\nu(\lambda) = \frac{d\mu(t)}{(t+1)^2}$. The forth inequality holds because $\frac{-x}{\lambda x+(1-\lambda)}$ and $g_\lambda(x)$ as defined for Equation~\eqref{Eq:LeCam-def} can be checked to have the same second derivative. 
\end{proof}
This new representation can be used to prove the following bounds on contraction coefficients. 
\begin{corollary}\label{Cor:x2-LC-upper-bound}
    For any operator convex $f\in\cF$, we have
    \begin{align}
        \eta_f(\cA,\sigma) &\leq \max\left\{\eta_{x^2}(\cA,\sigma),\sup_{0\leq\lambda\leq 1} \eta_{LC_\lambda}(\cA,\sigma)\right\}, \label{Eq:x2-LC-upper-bound-1}\\
        \eta_f(\cA) &\leq \max\left\{\eta_{x^2}(\cA),\sup_{0\leq\lambda\leq 1} \eta_{LC_\lambda}(\cA)\right\}.
    \end{align}
\end{corollary}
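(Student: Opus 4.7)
The plan is to apply the integral representation from Corollary~\ref{Cor:LC-representation} termwise and then factor out a worst-case contraction coefficient. Fix an operator convex $f \in \cF$ and write
\begin{align}
    D_f(\rho\|\sigma) = b\,\Hell_2(\rho\|\sigma) + \int_0^1 \lambda\, LC_\lambda(\rho\|\sigma)\, d\nu(\lambda),
\end{align}
where $b \geq 0$ and $\nu$ is the non-negative measure supplied by Corollary~\ref{Cor:LC-representation}. Since this identity holds for every pair of states, it also holds after application of the channel $\cA$, giving the same decomposition for $D_f(\cA(\rho)\|\cA(\sigma))$ with the same coefficient $b$ and measure $\nu$.

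Next I would apply the defining inequalities of the contraction coefficients at the second argument $\sigma$ to each piece in the integrand. Namely, for every $\rho$,
\begin{align}
    \Hell_2(\cA(\rho)\|\cA(\sigma)) &\leq \eta_{x^2}(\cA,\sigma)\,\Hell_2(\rho\|\sigma), \\
    LC_\lambda(\cA(\rho)\|\cA(\sigma)) &\leq \eta_{LC_\lambda}(\cA,\sigma)\,LC_\lambda(\rho\|\sigma)  \qquad \text{for each } \lambda\in[0,1].
\end{align}
Setting $M(\sigma) := \max\{\eta_{x^2}(\cA,\sigma),\sup_{\lambda\in[0,1]}\eta_{LC_\lambda}(\cA,\sigma)\}$, each of the above bounds is at most $M(\sigma)$ times the corresponding unprocessed quantity. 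Substituting into the representation of $D_f(\cA(\rho)\|\cA(\sigma))$, pulling $M(\sigma)$ out of the integral (it does not depend on $\lambda$), and recognising the remaining expression as $D_f(\rho\|\sigma)$, we obtain
\begin{align}
    D_f(\cA(\rho)\|\cA(\sigma)) \leq M(\sigma) \left( b\,\Hell_2(\rho\|\sigma) + \int_0^1 \lambda\, LC_\lambda(\rho\|\sigma)\, d\nu(\lambda) \right) = M(\sigma)\, D_f(\rho\|\sigma).
\end{align}
Dividing by $D_f(\rho\|\sigma)$ (whenever it is non-zero) and taking the supremum over $\rho$ proves the first inequality; taking a further supremum over $\sigma$ yields the second.

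The only subtlety is the degenerate case $D_f(\rho\|\sigma)=0$, which by the faithfulness item of Proposition~\ref{Prop:Df-properties} implies $D_f(\cA(\rho)\|\cA(\sigma))=0$ as well by DPI, so the ratio is handled by the standard convention $0/0 := 0$. I do not expect any substantial obstacle: the integral representation from Corollary~\ref{Cor:LC-representation} reduces the problem to the two elementary ingredients $\Hell_2$ and $LC_\lambda$, and the non-negativity of both $b$ and the measure $\nu$ is exactly what allows the termwise bound to pass through the integral without changing its sign.
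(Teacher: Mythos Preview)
Your proof is correct and follows essentially the same route as the paper: apply the representation from Corollary~\ref{Cor:LC-representation} to both $D_f(\rho\|\sigma)$ and $D_f(\cA(\rho)\|\cA(\sigma))$, bound each term by the corresponding contraction coefficient, and pull out the maximum. The paper's proof is slightly terser and does not separately discuss the degenerate case, but the argument is the same.
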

\begin{proof}
    By the above results we get
    \begin{align}
    D_f(\cA(\rho)\|\cA(\sigma)) &= b\Hell_2(\cA(\rho)\|\cA(\sigma)) + \int_0^1 \lambda\, LC_\lambda(\cA(\rho)\|\cA(\sigma)) d\nu(t) \\
    &\leq \eta_{x^2}(\cA,\sigma) \, b\, \Hell_2(\rho\|\sigma) + \int_0^1 \lambda\,\eta_{LC_\lambda}(\cA,\sigma)\, LC_\lambda(\rho\|\sigma) d\nu(t) \\
    &\leq \max\left\{\eta_{x^2}(\cA,\sigma),\sup_{0\leq\lambda\leq 1} \eta_{LC_\lambda}(\cA,\sigma)\right\} D_f(\rho\|\sigma),
    \end{align}
    which proves the first statement. The second follows by optimizing over $\sigma$. 
\end{proof}

\begin{corollary}\label{Cor:CC-LC-x2}
    We have that, 
    \begin{align}
        \eta_{LC_\lambda}(\cA) \leq \eta_{x^2}(\cA), 
    \end{align}
    and as a consequence, for every operator convex $f\in\cF$, we have
    \begin{align}
        \eta_f(\cA) &= \eta_{x^2}(\cA). \label{Eq:f-x2-contraction-eq}
    \end{align}
\end{corollary}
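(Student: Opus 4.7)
The plan is to use Corollary~\ref{Cor:LC-H2-eq} to reduce the contraction of $LC_\lambda$ to that of $\chi^2$, and then combine with Corollaries~\ref{Cor:x2-lower-bound} and~\ref{Cor:x2-LC-upper-bound} to sandwich $\eta_f$ between $\eta_{x^2}$ and itself.

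First, I would prove $\eta_{LC_\lambda}(\cA) \leq \eta_{x^2}(\cA)$. By Corollary~\ref{Cor:LC-H2-eq}, applied to $\cA(\rho), \cA(\sigma)$ and using linearity of $\cA$ (so that $\lambda \cA(\rho)+(1-\lambda)\cA(\sigma) = \cA(\lambda\rho+(1-\lambda)\sigma)$), we can write
\begin{align}
LC_\lambda(\cA(\rho)\|\cA(\sigma))
&= \lambda\, \chi^2\big(\cA(\rho) \big\| \cA(\lambda\rho+(1-\lambda)\sigma)\big) \nonumber \\
&\qquad + (1-\lambda)\, \chi^2\big(\cA(\sigma) \big\| \cA(\lambda\rho+(1-\lambda)\sigma)\big).
\end{align}
Applying the contraction of the $\chi^2$ divergence with respect to the fixed second argument $\tau_\lambda := \lambda\rho+(1-\lambda)\sigma$, each summand is bounded by $\eta_{x^2}(\cA,\tau_\lambda)$ times the corresponding $\chi^2$ divergence at the input, and a second application of Corollary~\ref{Cor:LC-H2-eq} collapses these back into $LC_\lambda(\rho\|\sigma)$. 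This yields
\begin{align}
LC_\lambda(\cA(\rho)\|\cA(\sigma)) \leq \eta_{x^2}(\cA,\tau_\lambda)\, LC_\lambda(\rho\|\sigma) \leq \eta_{x^2}(\cA)\, LC_\lambda(\rho\|\sigma),
\end{align}
and taking the supremum over $\rho,\sigma$ gives the claim.

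For the second statement, combining the bound just proved with Corollary~\ref{Cor:x2-LC-upper-bound} immediately gives
\begin{align}
\eta_f(\cA) \leq \max\Big\{\eta_{x^2}(\cA), \sup_{0\leq\lambda\leq 1} \eta_{LC_\lambda}(\cA)\Big\} \leq \eta_{x^2}(\cA).
\end{align}
For the reverse inequality, I would invoke Corollary~\ref{Cor:x2-lower-bound}, which under the mild assumption $0<f''(1)<\infty$ yields $\eta_{x^2}(\cA) \leq \eta_f(\cA)$. Together these give the desired equality \eqref{Eq:f-x2-contraction-eq}.

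I do not expect a significant obstacle: the heavy lifting has been done in the earlier lemmas (the Le Cam identity of Corollary~\ref{Cor:LC-H2-eq}, the integral representation in Corollary~\ref{Cor:LC-representation}, and the upper bound of Corollary~\ref{Cor:x2-LC-upper-bound}). The only subtle point is the boundary case where $f''(1)=0$ for some operator convex $f\in\cF$; in that situation the direct lower bound from Corollary~\ref{Cor:x2-lower-bound} is vacuous and the equality must be handled by noting that the representing measure in Corollary~\ref{Cor:LC-representation} then makes $D_f$ a non-negative combination of $\chi^2$ and $LC_\lambda$ divergences, each of which contracts by at most $\eta_{x^2}(\cA)$, so $\eta_f(\cA)\leq\eta_{x^2}(\cA)$; the reverse inequality in this edge case is essentially an explicit limit argument that I would carry out only if needed.
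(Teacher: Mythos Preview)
Your proposal is correct and follows essentially the same route as the paper: expand $LC_\lambda$ via Corollary~\ref{Cor:LC-H2-eq}, use linearity of $\cA$ on the mixture, apply the $\chi^2$ contraction termwise, collapse back, and then combine with Corollaries~\ref{Cor:x2-lower-bound} and~\ref{Cor:x2-LC-upper-bound}. The paper's proof is identical except that it bounds directly by $\eta_{x^2}(\cA)$ rather than passing through $\eta_{x^2}(\cA,\tau_\lambda)$, and it does not comment on the boundary case $f''(1)=0$ that you flag.
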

\begin{proof}
    We start by proving the first statement. Observe, 
    \begin{align}
        LC_\lambda(\cA(\rho)\|\cA(\sigma)) &= \lambda \Hell_2(\cA(\rho)\|\lambda\cA(\rho)+(1-\lambda)\cA(\sigma)) + (1-\lambda)\Hell_2(\cA(\sigma)\|\lambda\cA(\rho)+(1-\lambda)\cA(\sigma)) \\
        &\leq \lambda\, \eta_{x^2}(\cA)\, \Hell_2(\rho\|\lambda\rho+(1-\lambda)\sigma) + (1-\lambda)\, \eta_{x^2}(\cA) \, \Hell_2(\sigma\|\lambda\rho+(1-\lambda)\sigma) \\
        &= \eta_{x^2}(\cA)\, LC_\lambda(\rho\|\sigma),
    \end{align}
    where both equalities follow from Proposition~\ref{Cor:LC-H2-eq}. This establishes 
        \begin{align}
        \eta_{LC_\lambda}(\cA) \leq \eta_{x^2}(\cA), 
    \end{align}
    for all $0\leq\lambda\leq 1$ and as a consequence
    \begin{align}
        \sup_{0\leq\lambda\leq 1} \eta_{LC_\lambda}(\cA) \leq \eta_{x^2}(\cA).  
    \end{align}
    The second statement then follows by combining this with Corollary~\ref{Cor:x2-lower-bound} and Corollary~\ref{Cor:x2-LC-upper-bound}. 
\end{proof}
The classical special case of Equation~\eqref{Eq:f-x2-contraction-eq} was previously proven in~\cite{choi1994equivalence}. The classical version of the more general input dependent result in Equation~\eqref{Eq:x2-LC-upper-bound-1} was recently proven in~\cite{raginsky2016strong}. Finally, by noting that $g(x)=x\log(x)$ is operator convex, we can summarize that by Corollary~\ref{Cor:CC-LC-x2} and Lemma~\ref{Lem:ConBoundTrace} we have for every operator convex function $f$ that
    \begin{align}
        \eta_f(\cA) = \eta_{x^2}(\cA) = \eta_{\Re}(\cA) \leq \eta_{\tr}(\cA). \label{Eq:contraction-summary}
    \end{align}
Note that the equalities above do not hold in general for $\eta_f(\cA,\sigma)$ even when $f$ is operator convex. Next, we will discuss the depolarizing channel as an explicit example of this. 

\subsection{Contraction for Depolarizing Channels}

Consider the generalized depolarizing channel
\begin{align}
    \cD_{p,\sigma}(\rho) = (1-p) \rho + p \sigma. 
\end{align}
The speed at which this channel converges to its fixed point as measured by different divergences has been the subject of numerous publications~\cite{kastoryano2013quantum,muller2016relative,muller2018sandwiched}. This question is closely related to the strong data processing inequality for the corresponding divergence. For the new $f$-divergence, we have the following observation. 
\begin{proposition}\label{Prop:Depol-f}
    For $f\in\cF$ and quantum states $\rho,\sigma$, we have 
    \begin{align}
        D_f\!\left(\cD_{p,\sigma}(\rho)\|\sigma\right) = D_F(\rho\|\sigma),
    \end{align}
    with 
    \begin{align}
        F(x)=f((1-p)x+p). 
    \end{align}
\end{proposition}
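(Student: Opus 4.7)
The plan is to prove the identity by plugging the definition of $D_f$ into the left-hand side and performing the two hockey-stick identities from the proof of Proposition~\ref{Prop:D_F_lambda-equality} (specialised to $\lambda = 1-p$), followed by a change of variables in each resulting integral. First, I would note that $F(1) = f(1) = 0$ and $F''(x) = (1-p)^2 f''((1-p)x+p) \geq 0$, so $F \in \cF$ and $D_F(\rho\|\sigma)$ is well-defined.

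Next, I would write
\begin{align*}
D_f(\cD_{p,\sigma}(\rho)\|\sigma) = \int_1^\infty f''(\gamma)\, E_\gamma(\cD_{p,\sigma}(\rho)\|\sigma)\, \d\gamma + \int_1^\infty \gamma^{-3}f''(\gamma^{-1})\, E_\gamma(\sigma\|\cD_{p,\sigma}(\rho))\, \d\gamma,
\end{align*}
and treat the two terms separately. For the first, Equation~\eqref{Eq:HS-identities-cc-1r} with $\lambda = 1-p$ gives $E_\gamma(\cD_{p,\sigma}(\rho)\|\sigma) = (1-p)\, E_{\bar\gamma}(\rho\|\sigma)$, where $\bar\gamma = (\gamma-p)/(1-p)$. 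Substituting $\gamma = (1-p)\bar\gamma + p$ with $\d\gamma = (1-p)\d\bar\gamma$ converts the first integral to $\int_1^\infty (1-p)^2 f''((1-p)\bar\gamma + p)\, E_{\bar\gamma}(\rho\|\sigma)\,\d\bar\gamma = \int_1^\infty F''(\bar\gamma)\, E_{\bar\gamma}(\rho\|\sigma)\,\d\bar\gamma$, which matches the first half of $D_F(\rho\|\sigma)$.

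For the second term, Equation~\eqref{Eq:HS-identities-cc-2r} (still with $\lambda = 1-p$) applies only for $\gamma \leq 1/p$, but the hockey-stick divergence vanishes outside this range, so the integral really runs from $1$ to $1/p$. I would then substitute $\bar\gamma = \gamma(1-p)/(1-\gamma p)$, which inverts to $\gamma = \bar\gamma/(1-p+p\bar\gamma)$ with $\d\gamma = (1-p)(1-p+p\bar\gamma)^{-2}\d\bar\gamma$, and takes $\gamma \in [1, 1/p)$ bijectively to $\bar\gamma \in [1,\infty)$. Using $1/\gamma = (1-p)/\bar\gamma + p$ and $1-\gamma p = (1-p)/(1-p+p\bar\gamma)$, the factors $(1-p+p\bar\gamma)^3$ in $\gamma^{-3}$ cancel against the two denominator powers from $(1-\gamma p)$ and $\d\gamma$, leaving exactly $(1-p)^2 \bar\gamma^{-3} f''((1-p)/\bar\gamma + p)\, E_{\bar\gamma}(\sigma\|\rho)\,\d\bar\gamma = \bar\gamma^{-3} F''(\bar\gamma^{-1})\, E_{\bar\gamma}(\sigma\|\rho)\,\d\bar\gamma$, which matches the second half of $D_F(\rho\|\sigma)$. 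Combining both gives the identity.

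The only bookkeeping obstacle is making sure the $(1-p+p\bar\gamma)$ factors in the second change of variables cancel cleanly; this is a short calculation, and everything else follows immediately from the hockey-stick identities already established in the proof of Proposition~\ref{Prop:D_F_lambda-equality}.
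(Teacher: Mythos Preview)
Your proposal is correct and follows essentially the same approach as the paper's own proof: both specialize the hockey-stick identities \eqref{Eq:HS-identities-cc-1r} and \eqref{Eq:HS-identities-cc-2r} to $\lambda=1-p$, perform the substitutions $\bar\gamma=(\gamma-p)/(1-p)$ and $\bar\gamma=\gamma(1-p)/(1-\gamma p)$ in the two integrals, and identify the resulting integrands with $F''(\bar\gamma)$ and $\bar\gamma^{-3}F''(\bar\gamma^{-1})$ via $F''(x)=(1-p)^2 f''((1-p)x+p)$. Your explicit verification that $F\in\cF$ and the detailed tracking of the $(1-p+p\bar\gamma)$ cancellations are slight elaborations, but the strategy is identical.
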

\begin{proof}
    The proof works along the same lines to that Proposition~\ref{Prop:D_F_lambda-equality}, using the explicit form of the generalized depolarizing channel.  Recall from Equations~\eqref{Eq:HS-identities-cc-1r} and~\eqref{Eq:HS-identities-cc-2r},
    \begin{align}
        E_\gamma(\lambda\rho+(1-\lambda)\sigma \| \sigma) &= \lambda E_{\frac{\gamma+\lambda-1}{\lambda}}(\rho\|\sigma), \label{Eq:HS-identities-cc-1rr}
    \end{align}
   and for $\gamma\leq\frac{1}{1-\lambda}$, 
    \begin{align}
        E_\gamma(\sigma\| \lambda\rho+(1-\lambda)\sigma) 
        &= (1-\gamma(1-\lambda)) E_{\frac{\gamma\lambda}{1-\gamma(1-\lambda)}}(\sigma\|\rho).  \label{Eq:HS-identities-cc-2rr}
    \end{align}
   This implies
    \begin{align}
        E_\gamma(\cD_{p,\sigma}(\rho) \| \sigma) &= (1-p) E_{\frac{\gamma-p}{1-p}}(\rho\|\sigma), \label{Eq:HS-identities-cc-Dp}
        \end{align}
        and for $\gamma\leq\frac{1}{p}$,
        \begin{align}
        E_\gamma(\sigma\| \cD_{p,\sigma}(\rho)) 
        &= (1-\gamma p) E_{\frac{\gamma(1-p)}{1-\gamma p}}(\sigma\|\rho).\label{Eq:HS-identities-cc-Dp}
    \end{align}
    From here we get, 
    \begin{align}
        \int_1^\infty f''(\gamma) E_\gamma(\cD_{p,\sigma}(\rho) \| \sigma) \d\gamma= \int_1^\infty f''((1-p)\hat\gamma+p) (1-p)^2 E_{\hat\gamma}(\rho\| \sigma) d\hat\gamma,
    \end{align}
    by substituting $\hat\gamma=\frac{\gamma-p}{1-p}$, and
    \begin{align}
        \int_1^\infty \frac{f''(\gamma^{-1})}{\gamma^3} E_\gamma(\sigma \| \cD_{p,\sigma}(\rho)) \d\gamma &= \int_1^{\frac{1}{p}} \frac{f''(\gamma^{-1})}{\gamma^3} E_\gamma(\sigma \| D_{p,\sigma}(\rho)) \d\gamma \\
        &= \int_1^\infty \frac{f''\left((1-p)\hat\gamma^{-1}+p\right)}{\hat\gamma^3} (1-p)^2 E_{\hat\gamma}(\sigma\| \rho) d\hat\gamma,
    \end{align}
    where the second equality is substituting $\hat\gamma=\frac{\gamma(1-p)}{1-\gamma p}$. Our initial statement then follows by noting that $F''(x)=(1-p)^2 f''((1-p)x+p)$ which concludes the proof. 
\end{proof}

In particular, with the above result it can easily be checked that 
\begin{align}
    \Hell_2\!\left(\cD_{p,\sigma}(\rho)\|\sigma\right) = (1-p)^2 \Hell_2(\rho\|\sigma)
\end{align}
and as a result
\begin{align}
    \eta_{x^2}(\cD_{p,\sigma},\sigma) = (1-p)^2.
\end{align}
By the previous discussion it follows immediately that for every twice differentiable $f\in\cF$ with  $0<f''(1)<\infty$, we have
\begin{align}
    (1-p)^2 \leq \eta_{f}(\cD_{p,\sigma},\sigma) \leq \eta_{\tr}(\cD_{p,\sigma}) = (1-p). 
\end{align}
Note that both the lower and upper bound are independent of $\sigma$, while in general the SDPI constants will depend on it. These bounds can also not be improved while maintaining this property, however better $\sigma$ dependent bounds can be derived via Proposition~\ref{Prop:Depol-f}.
To see the optimality of the above bounds we recall the main result of~\cite{muller2016relative}, which proved that for the relative entropy
\begin{align}
    \eta_{\Re}(\cD_{p,\sigma},\sigma) = (1-p)^{2\alpha(\sigma)}
\end{align}
with $\alpha(\sigma)$ such that $\alpha(\tau_2)=1$ and $\lim_{d\rightarrow\infty}\alpha(\tau_d)=\frac12$ where $\tau_d$ is the maximally mixed state in dimension $d$. 

This discussion also exemplifies that while for all operator convex functions $f$,
\begin{align}
    \eta_{f}(\cD_{p,\sigma})= \eta_{x^2}(\cD_{p,\sigma}),
\end{align}
as discussed previously, the same does not hold for fixed $\sigma$ with the above giving an example of an operator convex $f$ for which 
\begin{align}
    \eta_{f}(\cD_{p,\sigma},\sigma) > \eta_{x^2}(\cD_{p,\sigma},\sigma).
\end{align}

\subsection{Contraction for \Renyi divergences}

Although, as discussed earlier, \Renyi divergences are not directly $f$-divergences, their contraction coefficients are still interesting and we shall try to compare them to the previously discussed ones. Define, 
\begin{align}
\eta_\alpha(\cA,\sigma) &:= \sup_{\rho} \frac{D_\alpha(\cA(\rho) \| \cA(\sigma))}{D_\alpha(\rho\|\sigma)}. 
\end{align}
We can relate this to the Hellinger contraction coefficient as follows. 
\begin{proposition}
    For a given quantum channel $\cA$, quantum state $\sigma$ and $0<\alpha<1$, we have, 
    \begin{align}
        \eta_\alpha(\cA,\sigma) \leq \eta_{\Hell_\alpha}(\cA,\sigma).
    \end{align}
\end{proposition}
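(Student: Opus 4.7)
The plan is to reduce the \Renyi contraction statement to the Hellinger contraction inequality by exploiting that $D_\alpha$ is an explicit monotone, convex transformation of $\Hell_\alpha$ that vanishes at zero. Concretely, for $0<\alpha<1$, define $g_\alpha(x)=\frac{1}{\alpha-1}\log(1+(\alpha-1)x)$ on the interval $x\in[0,\frac{1}{1-\alpha})$, so that $D_\alpha(\rho\|\sigma)=g_\alpha(\Hell_\alpha(\rho\|\sigma))$ by definition. I would verify three elementary facts about $g_\alpha$: (i) $g_\alpha(0)=0$, (ii) $g_\alpha$ is monotonically increasing, since $g_\alpha'(x)=\frac{1}{1+(\alpha-1)x}>0$ on the domain, and (iii) $g_\alpha$ is convex, since $g_\alpha''(x)=\frac{1-\alpha}{(1+(\alpha-1)x)^2}>0$. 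Note that the domain is exactly the range of admissible $\Hell_\alpha$, cf.\ Proposition~\ref{Prop:Da-properties}(1).

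Next, set $\eta:=\eta_{\Hell_\alpha}(\cA,\sigma)$, $H:=\Hell_\alpha(\rho\|\sigma)$, and $H':=\Hell_\alpha(\cA(\rho)\|\cA(\sigma))$. By definition of the Hellinger contraction coefficient we have $H'\leq \eta H$, and $\eta\in[0,1]$. Applying monotonicity (ii) gives $g_\alpha(H')\leq g_\alpha(\eta H)$. Then using (i) and (iii), for $\eta\in[0,1]$,
\begin{align*}
g_\alpha(\eta H)=g_\alpha\bigl((1-\eta)\cdot 0+\eta\cdot H\bigr)\leq (1-\eta)\,g_\alpha(0)+\eta\,g_\alpha(H)=\eta\,g_\alpha(H).
\end{align*}
Chaining these two bounds yields $D_\alpha(\cA(\rho)\|\cA(\sigma))=g_\alpha(H')\leq \eta\,g_\alpha(H)=\eta_{\Hell_\alpha}(\cA,\sigma)\,D_\alpha(\rho\|\sigma)$. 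Dividing by $D_\alpha(\rho\|\sigma)$ (which is positive whenever $\rho\neq \sigma$, by faithfulness) and taking the supremum over $\rho$ gives the claimed inequality.

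There is no serious obstacle here; the only thing to be careful about is that the argument stays within the domain of $g_\alpha$, i.e.\ that both $H$ and $\eta H$ lie in $[0,\frac{1}{1-\alpha})$, which is automatic since $H$ does and $\eta\leq 1$. The boundary case where $D_\alpha(\rho\|\sigma)=0$ (equivalently $\rho=\sigma$ by Proposition~\ref{Prop:Da-properties}(3)) can be excluded from the supremum without loss. The proof uses nothing specific to the quantum setting beyond the definitions, which is appropriate since the reduction is purely analytic in the scalar variable $\Hell_\alpha$.
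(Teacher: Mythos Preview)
Your proposal is correct and follows essentially the same approach as the paper's proof: both define the transformation $g_\alpha(x)=\frac{1}{\alpha-1}\log(1+(\alpha-1)x)$, use its monotonicity to pass through the Hellinger contraction bound, and then invoke convexity together with $g_\alpha(0)=0$ to obtain $g_\alpha(\eta H)\leq \eta\, g_\alpha(H)$. Your version is slightly more explicit about verifying the derivatives and handling the domain and boundary cases, which is a welcome bit of care, but the argument is the same.
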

\begin{proof}
    Choose $g(x)=(\alpha-1)^{-1}\log(1+(\alpha-1)x)$ and notice that it is monotone in $x$ and convex for $0<\alpha<1$ and therefore $g(\lambda x)\leq \lambda f(x)$ for $0\leq\lambda\leq 1$. 
    By definition we have, 
    \begin{align}
        D_\alpha(\cA(\rho) \| \cA(\sigma)) &= g(\Hell_\alpha(\cA(\rho) \| \cA(\sigma))) \\
        &\leq g\left(\eta_{\Hell_\alpha}(\cA,\sigma) \Hell_\alpha(\rho \| \sigma)\right) \\
        &\leq \eta_{\Hell_\alpha}(\cA,\sigma)\, g\left( \Hell_\alpha(\rho \| \sigma)\right) \\
        &= \eta_{\Hell_\alpha}(\cA,\sigma) D_\alpha(\rho \| \sigma).
    \end{align}
\end{proof}
Note that as a consequence, we also have for $0\leq\alpha\leq 1$,
\begin{align}
    \widebar D_\alpha(\cA(\rho) \| \cA(\sigma)) \leq \sup_n\left(\eta_{\Hell_\alpha}(\cA^{\otimes n},\sigma^{\otimes n})\right) \widebar D_\alpha(\rho \| \sigma),
\end{align}
for the Petz \Renyi divergence. 
It would be interesting to investigate whether this gives a non-trivial bound, beyond standard data processing. Note that is known that for example $\eta_{\Re}(\cD_{p,\tau_2}^{\otimes n},\tau_2^{\otimes n}) = (1-p)^2$ for all $n\geq1$~\cite{kastoryano2013quantum,muller2016relative}, which might give hope that such a result is possible.

\subsection{Less noisy channels}

Partial orders between channels have a wide range of applications~\cite{watanabe2012private,hirche2022bounding}. One of the most common orders is the \textit{less noisy} partial order. We call a channel $\cM$ less noisy than a channel $\cN$, denoted $\cM \sln \cN$, if
\begin{align}
    I(U:B)_{\cM(\rho)} \geq I(U:B')_{\cN(\rho)}\qquad\forall \rho_{UA}, \label{Eq:less-noisy-MI}
\end{align}
where $I(A:B)_{\rho_{AB}}=D(\rho_{AB}\|\rho_A\otimes \rho_B)$ is the quantum mutual information and $\rho_{UB}=(\id_U\otimes\cM)(\rho_{UA})$, $\rho_{UB'}=(\id_U\otimes\cN)(\rho_{UA})$ with the $\rho_{UA}$ being classical quantum states of the form 
\begin{align}
    \rho_{UA} = \sum_u p(u) |u\rangle\langle u| \otimes \rho^u_A. 
\end{align}
It was recently shown in~\cite{hirche2022contraction} that the condition in Equation~\eqref{Eq:less-noisy-MI} is equivalent to a condition in terms of the relative entropy, 
\begin{align}
    D(\cM(\rho)\|\cM(\sigma)) \geq D(\cN(\rho)\|\cN(\sigma))\qquad\forall\rho_A,\sigma_A. \label{Eq:less-noisy-D}
\end{align}
Classically, however, it is even known that this equivalence remains true if the relative entropy is replaced by any $f$-divergence with an operator convex non-linear $f$, see~\cite[Theorem 12]{makur2015linear}. 
We have now the tools to show that the same holds in the quantum setting for the new $f$-divergence. 
\begin{proposition}
    For any non-linear operator convex function $f\in\cF$ and quantum channels $\cM$ and $\cN$, the following two statements are equivalent:
    \begin{enumerate} 
    \item $\cM \sln \cN$ 
    \item $ D_f(\cM(\rho)\|\cM(\sigma)) \geq D_f(\cN(\rho)\|\cN(\sigma))\qquad\forall\rho_A,\sigma_A.   $ \label{Item:less-noisy-fD}
    \end{enumerate}
\end{proposition}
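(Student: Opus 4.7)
The plan is to pivot through the $\chi^2$-divergence, which will serve as a common reduction point for all operator convex $f$-divergence inequalities. Concretely, I aim to establish the chain: $\cM \sln \cN$ iff condition 2 for $f(x) = x\log x$ iff the $\chi^2$-version of condition 2 iff condition 2 for any non-linear operator convex $f$. The first equivalence is Eq.~\eqref{Eq:less-noisy-D}, imported from~\cite{hirche2022contraction}. The remaining two equivalences will be bridged by the local $\chi^2$ identity of Theorem~\ref{Thm:H2-is-limit-of-f} and the integral representation of operator convex $f$-divergences from Corollary~\ref{Cor:LC-representation}.

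For the direction ``condition 2 for $f$ implies the $\chi^2$-inequality'', given $D_f(\cM(\rho')\|\cM(\sigma')) \geq D_f(\cN(\rho')\|\cN(\sigma'))$ for all $\rho', \sigma'$, I would specialise to $\rho' = \lambda\rho + (1-\lambda)\sigma$ and $\sigma' = \sigma$, use linearity of $\cM$ and $\cN$ so that $\cM(\rho') = \lambda\cM(\rho) + (1-\lambda)\cM(\sigma)$ and likewise for $\cN$, multiply through by $2/\lambda^2$, and pass to $\lambda \to 0$. By Theorem~\ref{Thm:H2-is-limit-of-f} the limit yields $f''(1)\,\chi^2(\cM(\rho)\|\cM(\sigma)) \geq f''(1)\,\chi^2(\cN(\rho)\|\cN(\sigma))$, which supplies the $\chi^2$-inequality whenever $f''(1) > 0$.

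Conversely, from the $\chi^2$-inequality to condition 2 for an arbitrary operator convex $f$, I would invoke Corollary~\ref{Cor:LC-representation} to write
\begin{equation*}
D_f(\rho\|\sigma) = b\,\chi^2(\rho\|\sigma) + \int_0^1 \lambda\, LC_\lambda(\rho\|\sigma)\, d\nu(\lambda),
\end{equation*}
and Corollary~\ref{Cor:LC-H2-eq} to expand each $LC_\lambda(\rho\|\sigma)$ as $\lambda\,\chi^2(\rho\|\lambda\rho+(1-\lambda)\sigma) + (1-\lambda)\,\chi^2(\sigma\|\lambda\rho+(1-\lambda)\sigma)$. Linearity of the channels turns every $\chi^2$ summand into an instance of the hypothesised $\chi^2$-inequality, and integrating against the non-negative measure $\nu$ assembles condition 2 for $f$. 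Specialising this construction to $f(x) = x\log x$ yields the relative entropy inequality and hence $\cM \sln \cN$ via Eq.~\eqref{Eq:less-noisy-D}, closing the chain in both directions.

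The main subtlety is ensuring that $f''(1) > 0$ for every non-linear operator convex $f \in \cF$, without which the limit extracting the $\chi^2$-inequality would be vacuous. This can be read off from the integral representation of operator convex $f$ used in the proof of Corollary~\ref{Cor:LC-representation}: writing $f''(x) = 2b + \int_0^\infty 2t/(x+t)^3\, d\mu(t)$, the vanishing $f''(1) = 0$ forces both $b = 0$ and $\mu \equiv 0$, so that $f$ reduces to a linear function, contradicting non-linearity. The hypothesis ``non-linear'' in the proposition is therefore precisely what the local-to-$\chi^2$ step requires, and no further regularity is needed.
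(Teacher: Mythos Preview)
Your proposal is correct and follows essentially the same approach as the paper: both pivot through the $\chi^2$-divergence, using Theorem~\ref{Thm:H2-is-limit-of-f} (via the $\lambda \to 0$ limit after multiplying by $2/\lambda^2$) for one direction and Corollaries~\ref{Cor:LC-representation} and~\ref{Cor:LC-H2-eq} for the other, with the equivalence to $\cM \sln \cN$ coming from the relative-entropy special case via Eq.~\eqref{Eq:less-noisy-D}. Your explicit verification that non-linearity of an operator convex $f$ forces $f''(1)>0$ via the integral representation is a nice addition that the paper leaves implicit (it defers to the hypothesis $0<f''(1)<\infty$ in Corollary~\ref{Cor:x2-lower-bound}).
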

\begin{proof}
    Since $x\log(x)$ is operator convex it suffices to show the equivalence of Statement~\eqref{Item:less-noisy-fD} for different non-linear operator convex $f$. 
    Fix an applicable function $f$ and choose $f_2(x)=x^2-1$. We find that if~\eqref{Item:less-noisy-fD} holds for $f$ then in particular
    \begin{align}
        D_f(\cM(\lambda\rho+(1-\lambda)\sigma) \|\cM(\sigma)) \geq D_f(\cN(\lambda\rho+(1-\lambda)\sigma)\|\cN(\sigma))
    \end{align} 
    and by Theorem~\ref{Thm:H2-is-limit-of-f} it also holds for $f_2$. This is essentially the same argument as that in the proof of Corollary~\ref{Cor:x2-lower-bound}. Conversely, if~\eqref{Item:less-noisy-fD} holds for $f_2$, then by combining Corollary~\ref{Cor:LC-representation} and Corollary~\ref{Cor:LC-H2-eq} we have 
        \begin{align}
        D_f(\rho\|\sigma) = b\chi^2(\rho\|\sigma) + \int_0^1 \lambda\,\left( \lambda \chi^2(\rho\|\lambda\rho+(1-\lambda)\sigma) + (1-\lambda)\chi^2(\sigma\|\lambda\rho+(1-\lambda)\sigma)\right) \d\nu(t)
    \end{align} 
    and~\eqref{Item:less-noisy-fD} also holds for $f$. Again, this is essentially the argument that lead to Corollaries~\ref{Cor:x2-LC-upper-bound} and~\ref{Cor:CC-LC-x2}. Since this holds for any applicable $f$, the statement of the proposition follows. 
\end{proof}
This result can be compared to the case of the Petz $f$-divergence where the different partial orders are not known to be equivalent for different $f$, see~\cite[Proposition 6.7]{hirche2022contraction}. The above has implications e.g. for quantum capacities. Following the argument in~\cite{watanabe2012private,hirche2022bounding} we have, 
\begin{align}
    \chi^2(\cN^c(\rho)\|\cN^c(\sigma)) \geq \chi^2(\cN(\rho)\|\cN(\sigma)) \quad\Rightarrow\quad P^{(1)}(\cN) = 0, 
\end{align}
where $\cN^c$ is the complementary channel of $\cN$ and $P^{(1)}(\cN)$ is the private information of $\cN$. The same holds true for $n$ copies of the channel, implying a new condition for channels with zero private capacity. 

Finally, we will show that also the underlying equivalence of the above discussion between Equations~\eqref{Eq:less-noisy-MI} and~\eqref{Eq:less-noisy-D} generalizes to our $f$-divergences. This was shown for the Petz and maximal $f$-diveregences in~\cite[Proposition 6.2]{hirche2022contraction} and is a generalization of the classical case shown in~\cite{raginsky2016strong}. 
First we define the $f$-mutual information based on our $f$-divergence,
\begin{align}
    I_f(A:B)_{\rho_{AB}} := D_f(\rho_{AB}\|\rho_A\otimes \rho_B).
\end{align}
We now show the following. 
\begin{proposition}\label{Prop:If-equiv-Df}
    Let $\cM$ and $\cN$ be quantum channels, $\eta\geq 0$ and $\sigma_A$ a fixed quantum state with full rank. For given $f\in\cF$  the following are equivalent:
    \begin{itemize}
        \item[(i)] For all classical-quantum states $\rho_{UA}$ with marginal $\rho_A=\sigma_A$, where $U$ is an arbitrary classical system, 
        \begin{align}
            \eta I_f(U:B_1)_{\cM(\rho_{UA})} \geq I_f(U:B_2)_{\cN(\rho_{UA})}.
        \end{align}
        \item[(ii)] For any state $\rho_A$, 
        \begin{align}
            \eta D_f(\cM(\rho)\|\cM(\sigma)) \geq D_f(\cN(\rho)\|\cN(\sigma)).
        \end{align}
    \end{itemize}
\end{proposition}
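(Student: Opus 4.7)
The plan is to establish both implications by first isolating the key computational lemma that underlies them: for any two cq states sharing a common classical register,
\[ D_f\!\left(\textstyle\sum_u p(u)\ket{u}\bra{u}\otimes \rho_B^u \,\Big\|\, \sum_u p(u)\ket{u}\bra{u}\otimes \sigma_B\right) = \sum_u p(u)\, D_f(\rho_B^u\|\sigma_B). \]
This follows from the corresponding identity $E_\gamma(\rho_{UB}\|\sigma_{UB}) = \sum_u p(u)\, E_\gamma(\rho_B^u\|\sigma_B)$, which is immediate because $\rho_{UB} - \gamma\sigma_{UB}$ is block-diagonal in the $U$ basis and the correction term $(\tr(\cdot))_+$ in the definition of $E_\gamma$ depends only on $1-\gamma$ and factors out. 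Applied to $\sigma_B = \cM(\sigma_A)$, which is indeed the $B_1$-marginal of $\cM(\rho_{UA})$ since the $A$-marginal of $\rho_{UA}$ equals $\sigma_A$, this yields
\[ I_f(U:B_1)_{\cM(\rho_{UA})} = \sum_u p(u)\, D_f\bigl(\cM(\rho_A^u)\big\|\cM(\sigma_A)\bigr), \]
and analogously for $\cN$.

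With this decomposition, the direction (ii)$\Rightarrow$(i) is immediate: apply (ii) to each pair $(\rho_A^u,\sigma_A)$ and take the $p$-weighted sum, using $\eta \geq 0$.

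For (i)$\Rightarrow$(ii), fix an arbitrary state $\rho_A$. Since $\sigma_A$ has full rank, there exists $\epsilon_0>0$ such that $\tau_\epsilon := (\sigma_A - \epsilon\rho_A)/(1-\epsilon)$ is a valid quantum state for every $0 < \epsilon < \epsilon_0$ (it is enough to take $\epsilon_0 \leq 1/\lambda_{\max}(\sigma_A^{-1/2}\rho_A\sigma_A^{-1/2})$). I then apply (i) to the binary cq state
\[ \rho_{UA} := \epsilon\, \ket{0}\bra{0}\otimes \rho_A \,+\, (1-\epsilon)\, \ket{1}\bra{1}\otimes \tau_\epsilon, \]
whose $A$-marginal is $\sigma_A$ by construction. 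The decomposition above turns (i) into
\begin{align*}
\eta\Bigl(\epsilon D_f(\cM(\rho_A)\|\cM(\sigma_A)) &+ (1-\epsilon)D_f(\cM(\tau_\epsilon)\|\cM(\sigma_A))\Bigr) \\
&\geq \epsilon D_f(\cN(\rho_A)\|\cN(\sigma_A)) + (1-\epsilon)D_f(\cN(\tau_\epsilon)\|\cN(\sigma_A)).
\end{align*}
Dividing by $\epsilon$ and sending $\epsilon\to 0$ recovers (ii) verbatim, provided the residual terms involving $\tau_\epsilon$ vanish faster than $\epsilon$.

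The main obstacle is therefore to verify this second-order vanishing: $D_f(\cM(\tau_\epsilon)\|\cM(\sigma_A)) = o(\epsilon)$. By DPI it suffices to show $D_f(\tau_\epsilon\|\sigma_A) = O(\epsilon^2)$. Since $\sigma_A$ has full rank and $\tau_\epsilon - \sigma_A = O(\epsilon)$ in operator norm, a direct check yields $D_{\max}(\tau_\epsilon\|\sigma_A) = O(\epsilon)$ and $D_{\max}(\sigma_A\|\tau_\epsilon) = O(\epsilon)$. Combined with the bound $E_\gamma(\tau_\epsilon\|\sigma_A) \leq \tfrac{1}{2}\|\tau_\epsilon - \sigma_A\|_1 = O(\epsilon)$ for $\gamma \geq 1$ from Lemma~\ref{Lem:HS-Properties}, together with the fact that $E_\gamma(\tau_\epsilon\|\sigma_A)$ vanishes for $\gamma > e^{D_{\max}(\tau_\epsilon\|\sigma_A)} = 1 + O(\epsilon)$, the integral representation delivers $D_f(\tau_\epsilon\|\sigma_A) = O(\epsilon)\cdot O(\epsilon) = O(\epsilon^2)$ for both summands. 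Alternatively one can invoke Theorem~\ref{Thm:H2-is-limit-of-f} together with $\chi^2(\tau_\epsilon\|\sigma_A) = O(\epsilon^2)$ to reach the same conclusion. The full-rank hypothesis on $\sigma_A$ is essential here, as it is precisely what keeps the $D_{\max}$ bounds, and hence the integration domain, uniformly controlled in $\epsilon$.
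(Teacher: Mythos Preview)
Your proof is correct and follows essentially the same approach as the paper: both establish the direct-sum decomposition of $I_f$, obtain (ii)$\Rightarrow$(i) immediately, and for (i)$\Rightarrow$(ii) construct the same binary ensemble $\rho_A^0=\rho_A$, $\rho_A^1=\tau_\epsilon$ with $A$-marginal $\sigma_A$, then pass to the limit $\epsilon\to 0$. The paper phrases the limit step as the derivative condition $\varphi'(0)\geq 0$ and dispatches the residual term by citing the argument of Theorem~\ref{Thm:H2-is-limit-of-f} (first derivative of $D_f$ at the minimum vanishes), which is exactly the alternative you mention; your direct $O(\epsilon^2)$ estimate via the integral representation is a slightly more explicit rendering of the same fact.
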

\begin{proof}
    The proof follows along the same lines as that of~\cite[Proposition 6.2]{hirche2022contraction}.  
    Note that the hockey-stick divergence has the following direct-sum property,
    \begin{align}
        E_\gamma(\rho_{UA}\|\rho_U\otimes\rho_A) = \sum_u p(u) E_\gamma(\rho^u_A \| \rho_A). 
    \end{align}
    This carries over to the new $f$-divergence, resulting in
    \begin{align}
        I_f(U:B_1)_{\cM(\rho_{UA})} = \sum_u p(u) D_f(\cM(\rho^u)\|\cM(\sigma)). 
    \end{align}
    It therefore follows directly that (ii) implies (i). We now show the opposite direction. For an arbitrary $\rho_A$ fix $\epsilon>0$ such that $\sigma-\epsilon\rho\geq 0$. For any $0\leq\lambda\leq\epsilon$ define the binary random variable $U$ with $p(0)=\lambda$ and $p(1)=1-\lambda$ and conditional states $\rho_A^0=\rho_A$ and $\rho_A^1=(1-\lambda)^{-1}(\sigma_A-\lambda\rho_A)$. Clearly we have $\tr_U \rho_{UA}=\sigma_A$. Define,
    \begin{align}
        \varphi(\lambda) &:= \eta I_f(U:B_1)_{\cM(\rho_{UA})} - I_f(U:B_2)_{\cN(\rho_{UA})} \\
        &= \eta \lambda D_f(\cM(\rho^0)\|\cM(\sigma)) +\eta(1-\lambda) D_f(\cM(\rho^1)\|\cM(\sigma)) \nonumber\\
        &\quad- \lambda D_f(\cN(\rho^0)\|\cN(\sigma)) -(1-\lambda) D_f(\cN(\rho^1)\|\cN(\sigma)). 
    \end{align}
    It can easily be checked that $\varphi(0)=0$. Now, (i) implies $\varphi(\lambda)\geq 0$ and hence $\varphi'(0)\geq 0$. We compute 
    \begin{align}
        \varphi'(0) &= \eta D_f(\cM(\rho^0)\|\cM(\sigma)) - D_f(\cN(\rho^0)\|\cN(\sigma)) \nonumber\\
        &\quad + \frac{\partial}{\partial \lambda} \left(\eta D_f(\cM(\rho^1)\|\cM(\sigma)) - D_f(\cN(\rho^1)\|\cN(\sigma)) \right)\Big|_{\lambda = 0}. 
    \end{align}
    However, we have already argued in the proof of Theorem~\ref{Thm:H2-is-limit-of-f} that the remaining derivative vanishes and therefore (i) does indeed imply (ii). 
\end{proof}
Combining our results in the section, we have that for any two operator convex functions $f$ and $g$, we have 
\begin{align}
    &\cM \sln \cN \\
    &\iff D_f(\cM(\rho)\|\cM(\sigma)) \geq D_f(\cN(\rho)\|\cN(\sigma))\qquad\forall\rho_A,\sigma_A \\
    &\iff I_g(U:B_1)_{\cM(\rho_{UA})} \geq I_g(U:B_2)_{\cN(\rho_{UA})} \qquad\forall\rho_{UA}, 
\end{align}
giving a wide range of expressions for the less noisy partial order. 

Finally, returning to the main topic of this chapter, Proposition~\ref{Prop:If-equiv-Df} also has implications for contraction. In particular, for any suitable $f$, we have for the SDPI constant that
\begin{align}
    \eta_f(\cN,\sigma) = \sup_{\rho_{UA}} \frac{I_f(U:B)}{I_f(U:A)}, 
\end{align}
where $\rho_{UB}=\cN(\rho_{UA})$ and the supremum is over all classical-quantum states $\rho_{UA}$ with marginal $\tr_U\rho_{UA}=\sigma$. Furthermore for any two operator convex functions $f$ and $g$, we have 
\begin{align}
    \eta_f(\cN) = \sup_{\rho_{UA}} \frac{I_g(U:B)}{I_g(U:A)}, 
\end{align}
where now the supremum is over all classical-quantum states $\rho_{UA}$ without restriction on the marginal. 

\section{Other applications}
\label{sec:apps}

In this section we will present a number of additional applications of our integral representation and its properties. 

\subsection{Reverse Pinsker-type inequalities}\label{Sec:RevPinskerIneq}
We start the applications with reverse Pinsker type inequalities. As Pinsker inequality we commonly refer to the bound
\begin{align}
    2 E_1(\rho\|\sigma)^2 \leq D(\rho\|\sigma). \label{Eq:QPinsker}
\end{align}
Recall that $E_1(\rho\|\sigma)\equiv \|\rho-\sigma\|_{\tr}$. 
This is matched by the reverse bound~\cite{audenaert2005continuity,audenaert2011continuity},
\begin{align}
    D(\rho\|\sigma) \leq \frac{2}{\lambda_{\min}(\sigma)} E_1(\rho\|\sigma)^2. \label{Eq:Q-RevPin}
\end{align}
An improvement of the above was also given in~\cite{audenaert2005continuity,audenaert2011continuity} by
\begin{align}
    D(\rho\|\sigma) \leq (\lambda_{\min}(\sigma) + E_1(\rho\|\sigma)) \log\left(1+\frac{E_1(\rho\|\sigma)}{\lambda_{\min}(\sigma)}\right) - \lambda_{\min}(\rho)\log\left(1+\frac{E_1(\rho\|\sigma)}{\lambda_{\min}(\rho)}\right). \label{Eq:Aud-RevPinsker}
\end{align}
 Note, that in general Equation~\eqref{Eq:Aud-RevPinsker} is significantly tighter than the bound in Equation~\eqref{Eq:Q-RevPin}. Reverse Pinsker-type inequalities in the classical setting are discussed for the relative entropy in~\cite{csiszar2006context,sason2015upper,sason2015reverse} and for $f$-divergences in~\cite{binette2019note}. 

We will now discuss how to give reverse Pinsker type inequalities based on the new integral representation. First, we will need the following lemma.
\begin{lemma}\label{Lem:Egamma-E1-sharp}
    The function $\gamma \mapsto E_\gamma(\rho\|\sigma)$ is convex. As a result, for $1\leq \gamma \leq e^{D_{\max}(\rho\|\sigma)}$, we have, 
    \begin{align}
        E_\gamma(\rho\|\sigma) \leq \frac{e^{D_{\max}(\rho\|\sigma)}-\gamma}{e^{D_{\max}(\rho\|\sigma)}-1} E_1(\rho\|\sigma) \leq E_1(\rho\|\sigma). \label{Eq:Egamma-E1-sharp}
    \end{align}
    If the function is also linear, the first inequality becomes an equality. 
\end{lemma}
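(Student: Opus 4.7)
The plan is to first establish convexity of $\gamma \mapsto E_\gamma(\rho\|\sigma)$ via the variational formula from item~1 of Lemma~\ref{Lem:HS-Properties}, and then use convexity together with the vanishing of $E_\gamma$ at $\gamma = e^{D_{\max}(\rho\|\sigma)}$ (item~5 of the same lemma) to derive the quantitative chord bound.

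For the convexity step, I would start from the observation that, for quantum states, $\tr(\rho - \gamma \sigma) = 1 - \gamma \leq 0$ whenever $\gamma \geq 1$, so the correction term in the extended definition~\eqref{Eq:Def-HockeyStick} vanishes and the variational formula simplifies to
\begin{align}
E_\gamma(\rho\|\sigma) = \sup_{0\leq M \leq \Id} \tr\bigl(M(\rho - \gamma \sigma)\bigr).
\end{align}
For each fixed $M$, the map $\gamma \mapsto \tr(M(\rho-\gamma\sigma)) = \tr(M\rho) - \gamma \tr(M\sigma)$ is affine in $\gamma$, so $E_\gamma(\rho\|\sigma)$ is a pointwise supremum of affine functions and hence convex on $[1,\infty)$ (indeed, on all of $[0,\infty)$ with a small extra argument covering $0\leq\gamma\leq1$, which is not needed here).

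For the quantitative bound, I would fix $\mu := e^{D_{\max}(\rho\|\sigma)}$ and apply convexity on the interval $[1,\mu]$. Writing $\gamma = \lambda\cdot 1 + (1-\lambda)\cdot \mu$ with $\lambda = (\mu-\gamma)/(\mu-1)\in[0,1]$, convexity gives
\begin{align}
E_\gamma(\rho\|\sigma) \;\leq\; \lambda E_1(\rho\|\sigma) + (1-\lambda) E_\mu(\rho\|\sigma).
\end{align}
By item~5 of Lemma~\ref{Lem:HS-Properties} we have $E_\mu(\rho\|\sigma) = 0$, so the right-hand side collapses to $\lambda E_1(\rho\|\sigma)$, which is exactly the first inequality in~\eqref{Eq:Egamma-E1-sharp}. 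The second inequality then follows trivially from $\lambda \leq 1$ (equivalently $\gamma \geq 1$). Finally, the equality claim is immediate: if $\gamma\mapsto E_\gamma(\rho\|\sigma)$ is affine on $[1,\mu]$, then it coincides with its chord, which is precisely the linear combination used above.

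No step in this plan poses a serious obstacle; the only thing to be slightly careful about is confirming that the extra correction term in the extended hockey-stick definition disappears for normalized states in the regime $\gamma\geq 1$, and that the endpoint $\gamma=\mu$ where $E_\mu=0$ is indeed the right place to anchor the chord (it is, since beyond $\mu$ the function stays identically zero, so convexity forces a nonincreasing behaviour there and the chord from $\gamma=1$ to $\gamma=\mu$ is the sharpest linear upper bound).
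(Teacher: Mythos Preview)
Your proposal is correct and essentially matches the paper's proof. The only cosmetic difference is in the convexity step: you argue that $E_\gamma(\rho\|\sigma)=\sup_{0\leq M\leq\Id}\tr M(\rho-\gamma\sigma)$ is a pointwise supremum of affine functions, whereas the paper writes $E_{p\gamma+(1-p)\hat\gamma}(\rho\|\sigma)=\tr\bigl(p(\rho-\gamma\sigma)+(1-p)(\rho-\hat\gamma\sigma)\bigr)_+$ and invokes $\tr(A+B)_+\leq\tr A_++\tr B_+$, itself justified via the same variational formula; the chord-bound argument using $E_{e^{D_{\max}(\rho\|\sigma)}}(\rho\|\sigma)=0$ is then identical.
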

\begin{proof}
    We first proof the convexity. Observe, 
    \begin{align}
        E_{p\gamma+(1-p)\hat\gamma}(\rho\|\sigma) &= \tr(\rho - (p\gamma+(1-p)\hat\gamma)\sigma)_+ \\
       &=  \tr(p(\rho -\gamma\sigma) +  (1-p)(\rho -\hat\gamma\sigma))_+  \\
       &\leq p\tr(\rho -\gamma\sigma)_+ +  (1-p)\tr(\rho -\hat\gamma\sigma)_+  \\
       & = p E_\gamma(\rho\|\sigma) + (1-p) E_{\hat\gamma}(\rho\|\sigma), 
    \end{align}
    where the inequality holds because $\tr (A+B)_+ \leq \tr A_+ + \tr B_+$, which, for example, can be seen from the variational representation $\tr A_+ = \max_{0\leq S\leq\Id} \tr A S$. 
    Since the function is convex, we can upper bound it for $1\leq \gamma \leq e^{D_{\max}(\rho\|\sigma)}$ by the line that goes through $(1,E_1(\rho,\sigma))$ and $(e^{D_{\max}(\rho\|\sigma)},0)$. After some calculation one finds that the upper bound is given by the first inequality in Equation~\eqref{Eq:Egamma-E1-sharp}. The second inequality in the statement then follows directly from $\gamma\geq 1$ and recovers the previously known upper bound on $E_\gamma(\rho\|\sigma)$. 
\end{proof}
We start with a general result for $f$-divergences. For the following, we recall the definition of the Thompson metric~\cite{thompson1963certain} which is given by, 
\begin{align}
    \Xi(\rho\|\sigma) := \max\{ D_{\max}(\rho\|\sigma), D_{\max}(\sigma\|\rho) \}. 
\end{align}
This quantity has recently been given an operational interpretation in the context of hypothesis testing~\cite{regula2022postselected}. 
\begin{proposition}\label{Prop:Rev-Q-Pinsker}
    For quantum states $\rho,\sigma$ and $f\in\cF$, we have, 
    \begin{align}
    D_f(\rho\|\sigma) &\leq \zeta_1(\rho,\sigma) \, E_1(\rho\|\sigma)\leq \frac{f\left(e^{\Xi(\rho\|\sigma)}\right) + e^{\Xi(\rho\|\sigma)} f\left(e^{-\Xi(\rho\|\sigma)}\right)}{e^{\Xi(\rho\|\sigma)}-1} E_1(\rho\|\sigma) \label{Eq:RevPin-f-1}, 
    \end{align}
    and 
    \begin{align}
    D_f(\rho\|\sigma) &\leq \zeta_1(\rho,\sigma) \, E_1(\rho\|\sigma)\leq \zeta_2(\rho,\sigma) \, E_1(\rho\|\sigma) , \label{Eq:RevPin-f-2}
    \end{align}
    with 
    \begin{align}
    \zeta_1(\rho,\sigma) &= \int_1^{e^{D_{\max}(\rho\|\sigma)}} \frac{e^{D_{\max}(\rho\|\sigma)}-\gamma}{e^{D_{\max}(\rho\|\sigma)}-1} f''(\gamma) \d\gamma + \int_1^{e^{D_{\max}(\sigma\|\rho)}} \frac{e^{D_{\max}(\sigma\|\rho)}-\gamma}{e^{D_{\max}(\sigma\|\rho)}-1} \gamma^{-3} f''(\gamma^{-1}) \d\gamma,  \\
    \zeta_2(\rho,\sigma) &= \int_1^{e^{D_{\max}(\rho\|\sigma)}} f''(\gamma) \d\gamma + \int_1^{e^{D_{\max}(\sigma\|\rho)}} \gamma^{-3} f''(\gamma^{-1}) \d\gamma.
    \end{align}
    The first inequality holds as an equality whenever the functions $\gamma \mapsto E_\gamma(\rho\|\sigma)$ and $\gamma \mapsto E_\gamma(\sigma\|\rho)$ are linear. 
\end{proposition}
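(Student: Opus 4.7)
The plan is to feed the two pointwise bounds from Lemma~\ref{Lem:Egamma-E1-sharp} into the integral definition of $D_f$ and then simplify the resulting integrals by integration by parts.

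First, I would recall that for $\gamma \geq e^{D_{\max}(\rho\|\sigma)}$ the hockey-stick divergence $E_\gamma(\rho\|\sigma)$ vanishes (by the relation to $D_{\max}$ in Lemma~\ref{Lem:HS-Properties}), and analogously for $E_\gamma(\sigma\|\rho)$. Hence the two integrals in the definition of $D_f(\rho\|\sigma)$ can be truncated to $\gamma \in [1, e^{D_{\max}(\rho\|\sigma)}]$ and $\gamma \in [1, e^{D_{\max}(\sigma\|\rho)}]$ respectively. On these intervals I would apply Lemma~\ref{Lem:Egamma-E1-sharp} to obtain
\begin{align}
E_\gamma(\rho\|\sigma) \leq \frac{e^{D_{\max}(\rho\|\sigma)}-\gamma}{e^{D_{\max}(\rho\|\sigma)}-1}\, E_1(\rho\|\sigma), \qquad
E_\gamma(\sigma\|\rho) \leq \frac{e^{D_{\max}(\sigma\|\rho)}-\gamma}{e^{D_{\max}(\sigma\|\rho)}-1}\, E_1(\sigma\|\rho),
\end{align}
and use $E_1(\sigma\|\rho) = E_1(\rho\|\sigma) = \|\rho-\sigma\|_{\tr}$ (symmetry of the trace distance). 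Plugging these bounds into the integral representation of $D_f$ immediately gives $D_f(\rho\|\sigma) \leq \zeta_1(\rho,\sigma) E_1(\rho\|\sigma)$.

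For the weakest bound $\zeta_2$, I would use the trivial estimate $\frac{e^{D_{\max}}-\gamma}{e^{D_{\max}}-1} \leq 1$ for $\gamma \in [1, e^{D_{\max}}]$, which is the second inequality in~\eqref{Eq:Egamma-E1-sharp}. This yields $\zeta_1 \leq \zeta_2$. For the middle bound involving the Thompson metric $\Xi$, I would observe that for fixed $\gamma \geq 1$ the map $x \mapsto \frac{x-\gamma}{x-1}$ has derivative $\frac{\gamma-1}{(x-1)^2} \geq 0$ and is therefore non-decreasing in $x$; since $D_{\max}(\rho\|\sigma), D_{\max}(\sigma\|\rho) \leq \Xi(\rho\|\sigma)$, the coefficients in $\zeta_1$ are each bounded above by $\frac{e^\Xi - \gamma}{e^\Xi - 1}$, and extending the integration range to $[1, e^\Xi]$ only adds non-negative contributions. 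I would then evaluate the resulting two integrals by integration by parts: for the first,
\begin{align}
\frac{1}{e^\Xi - 1}\int_1^{e^\Xi}(e^\Xi - \gamma) f''(\gamma)\, \d\gamma = \frac{f(e^\Xi)}{e^\Xi - 1} - f'(1),
\end{align}
and for the second, after the substitution $u = \gamma^{-1}$ and integration by parts,
\begin{align}
\frac{1}{e^\Xi - 1}\int_{e^{-\Xi}}^1 (e^\Xi u - 1) f''(u)\, \d u = f'(1) + \frac{e^\Xi f(e^{-\Xi})}{e^\Xi - 1},
\end{align}
where I used $f(1)=0$. Adding these gives the claimed expression $\frac{f(e^\Xi) + e^\Xi f(e^{-\Xi})}{e^\Xi - 1}$.

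For the equality case of the first inequality, I would invoke the second part of Lemma~\ref{Lem:Egamma-E1-sharp}: when $\gamma \mapsto E_\gamma(\rho\|\sigma)$ and $\gamma \mapsto E_\gamma(\sigma\|\rho)$ are both linear on the respective intervals, the pointwise bounds used above become equalities, and since the integrand is non-negative these pointwise equalities pass through the integral. The main conceptual point, and the only slightly non-obvious step, is the monotonicity observation $x \mapsto \frac{x-\gamma}{x-1}$ used to pass from $\zeta_1$ to the $\Xi$-expression; everything else is a change of variables and integration by parts.
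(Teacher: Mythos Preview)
Your proof is correct and follows essentially the same route as the paper's own argument: truncate the integrals using the relation to $D_{\max}$, apply Lemma~\ref{Lem:Egamma-E1-sharp} to obtain the $\zeta_1$ bound, use monotonicity in $D_{\max}$ to pass to the Thompson metric, and evaluate via integration by parts. The only cosmetic differences are that you make the monotonicity of $x \mapsto \frac{x-\gamma}{x-1}$ explicit (the paper just asserts that the integrals are monotone in $D_{\max}$), and that you handle the second integral by substituting $u=\gamma^{-1}$ before integrating by parts, whereas the paper recognises $\gamma^{-3}f''(\gamma^{-1})$ directly as $\frac{\d^2}{\d\gamma^2}\big(\gamma f(\gamma^{-1})\big)$ and integrates by parts in the original variable.
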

\begin{proof}
The first inequality can directly be seen by the following chain of arguments, 
\begin{align}
    D_f(\rho\|\sigma) &=\int_1^\infty (f''(\gamma) E_\gamma(\rho\|\sigma) + \gamma^{-3} f''(\gamma^{-1})E_\gamma(\sigma\|\rho)) \d\gamma \\
    &=\int_1^{e^{D_{\max}(\rho\|\sigma)}} f''(\gamma) E_\gamma(\rho\|\sigma) \d\gamma + \int_1^{e^{D_{\max}(\sigma\|\rho)}}\gamma^{-3} f''(\gamma^{-1})E_\gamma(\sigma\|\rho) \d\gamma \\
    &\leq E_1(\rho\|\sigma) \left( \int_1^{e^{D_{\max}(\rho\|\sigma)}} \frac{e^{D_{\max}(\rho\|\sigma)}-\gamma}{e^{D_{\max}(\rho\|\sigma)}-1} f''(\gamma) \d\gamma + \int_1^{e^{D_{\max}(\sigma\|\rho)}} \frac{e^{D_{\max}(\sigma\|\rho)}-\gamma}{e^{D_{\max}(\sigma\|\rho)}-1} \gamma^{-3} f''(\gamma^{-1}) \d\gamma \right),
\end{align}   
where the inequality follows from Lemma~\ref{Lem:Egamma-E1-sharp}. Now note that both integrals are monotone in the value of $D_{\max}$ which we can therefore bound by the Thompson metric $\Xi$. This simplifies our bound to, 
\begin{align}
    D_f(\rho\|\sigma) &\leq E_1(\rho\|\sigma) \left( \int_1^{e^{\Xi(\rho\|\sigma)}} \frac{e^{\Xi(\rho\|\sigma)}-\gamma}{e^{\Xi(\rho\|\sigma)}-1} f''(\gamma)  +  \frac{e^{\Xi(\sigma\|\rho)}-\gamma}{e^{\Xi(\sigma\|\rho)}-1} \gamma^{-3} f''(\gamma^{-1}) \d\gamma \right) \\
    &= \frac{E_1(\rho\|\sigma)}{e^{\Xi(\rho\|\sigma)}-1} \left( \int_1^{e^{\Xi(\rho\|\sigma)}} \left(e^{\Xi(\rho\|\sigma)}-\gamma\right) \left( f''(\gamma)  +  \gamma^{-3} f''(\gamma^{-1})\right) \d\gamma \right) \\
    &= \frac{f\left(e^{\Xi(\rho\|\sigma)}\right) + e^{\Xi(\rho\|\sigma)} f\left(e^{-\Xi(\rho\|\sigma)}\right)}{e^{\Xi(\rho\|\sigma)}-1} E_1(\rho\|\sigma) , 
\end{align}   
where the first equality is simple rearranging and the second follows from integration by parts, 
\begin{align}
    &\int_1^{e^{\Xi(\rho\|\sigma)}} \left(e^{\Xi(\rho\|\sigma)}-\gamma\right) \left( f''(\gamma)  +  \gamma^{-3} f''(\gamma^{-1})\right) \d\gamma \\
    &= \left(e^{\Xi(\rho\|\sigma)}-1\right) f(1) +  f\left(e^{\Xi(\rho\|\sigma)}\right) + e^{\Xi(\rho\|\sigma)} f\left(e^{-\Xi(\rho\|\sigma)}\right),  
\end{align}
for which we recall, 
\begin{align}
    \frac{\d^2}{\d\gamma^2} \gamma f\left(\gamma^{-1}\right) = \gamma^{-3} f''(\gamma^{-1}). 
\end{align}
Finally, the second inequality in Equation~\eqref{Eq:RevPin-f-2} follows easily from the second inequality in Lemma~\ref{Lem:Egamma-E1-sharp}. This concludes the proof. 
\end{proof}
We note here that while the function $\gamma \mapsto E_\gamma(\rho\|\sigma)$ is generally not linear, there are some interesting cases where it is. This includes all commuting qubit states. 
Related upper bounds for the standard $f$-divergence were discussed in~\cite{vershynina2019upper} via an integral representation for operator convex functions. We also note that a similar approach to the above was discussed for classical divergences in the context of differential privacy in~\cite{Zamanlooy2023}. 
To show the usefulness of our result we will discuss some special cases, starting with a thight bound on the relative entropy. 
\begin{proposition}
For quantum states $\rho,\sigma$, we have, 
    \begin{align}
    D(\rho\|\sigma) &\leq \left( \frac{e^{D_{\max}(\rho\|\sigma)}D_{\max}(\rho\|\sigma)}{e^{D_{\max}(\rho\|\sigma)}-1} + \frac{D_{\max}(\sigma\|\rho)}{1-e^{D_{\max}(\sigma\|\rho)}} \right) E_1(\rho\|\sigma) \label{Eq:NewRevPin-0}\\
    &\leq \Xi(\rho\|\sigma)\,  E_1(\rho\|\sigma),\label{Eq:NewRevPin-Thompson} 
\end{align}
and
\begin{align}
    D(\rho\|\sigma) &\leq \left( \frac{e^{D_{\max}(\rho\|\sigma)}D_{\max}(\rho\|\sigma)}{e^{D_{\max}(\rho\|\sigma)}-1} + \frac{D_{\max}(\sigma\|\rho)}{1-e^{D_{\max}(\sigma\|\rho)}} \right) E_1(\rho\|\sigma) \label{Eq:NewRevPin-0-r}\\
    &\leq  \left( 1 +  D_{\max}(\rho\|\sigma)  - e^{-D_{\max}(\sigma\|\rho)} \right) E_1(\rho\|\sigma) \label{Eq:NewRevPin-1}\\
    &\leq \Omega(\rho\|\sigma)\,  E_1(\rho\|\sigma),\label{Eq:NewRevPin} 
\end{align}
where $\Xi(\rho\|\sigma) := \max\{D_{\max}(\rho\|\sigma) , D_{\max}(\sigma\|\rho)\}$ is the Thompson metric and $\Omega(\rho\|\sigma) := D_{\max}(\rho\|\sigma) + D_{\max}(\sigma\|\rho)$ is the Hilbert projective metric. The first inequality above becomes an equality when the functions $\gamma \mapsto E_\gamma(\rho\|\sigma)$ and $\gamma \mapsto E_\gamma(\sigma\|\rho)$ are linear.  
\end{proposition}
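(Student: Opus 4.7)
The approach is a direct specialization of Proposition~\ref{Prop:Rev-Q-Pinsker} to the function $f(x) = x\log x$, for which $f''(x) = 1/x$. The two coefficients $\zeta_1$ and $\zeta_2$ appearing there reduce to elementary integrals, and the remaining inequalities are real-analysis exercises.

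First I substitute $f''(\gamma) = 1/\gamma$ and $\gamma^{-3} f''(\gamma^{-1}) = \gamma^{-2}$ into the definitions of $\zeta_1$ and $\zeta_2$. Writing $a = e^{D_{\max}(\rho\|\sigma)}$ and $b = e^{D_{\max}(\sigma\|\rho)}$, direct antiderivative computations give $\int_1^a \frac{a-\gamma}{(a-1)\gamma}\,\d\gamma = \frac{a \ln a}{a-1} - 1$ and $\int_1^b \frac{b-\gamma}{(b-1)\gamma^2}\,\d\gamma = 1 - \frac{\ln b}{b-1}$, whose sum simplifies exactly to the coefficient $\frac{e^{D_{\max}(\rho\|\sigma)}D_{\max}(\rho\|\sigma)}{e^{D_{\max}(\rho\|\sigma)}-1} + \frac{D_{\max}(\sigma\|\rho)}{1-e^{D_{\max}(\sigma\|\rho)}}$ appearing in Equations~\eqref{Eq:NewRevPin-0} and~\eqref{Eq:NewRevPin-0-r}. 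Likewise $\zeta_2 = \ln a + (1 - 1/b) = D_{\max}(\rho\|\sigma) + 1 - e^{-D_{\max}(\sigma\|\rho)}$, matching Equation~\eqref{Eq:NewRevPin-1}. Hence Proposition~\ref{Prop:Rev-Q-Pinsker} immediately supplies both $\zeta_1$- and $\zeta_2$-bounds on $D(\rho\|\sigma)$, together with the equality condition when both $\gamma \mapsto E_\gamma(\rho\|\sigma)$ and $\gamma \mapsto E_\gamma(\sigma\|\rho)$ are linear.

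Next I establish the two ``metric'' upper bounds sitting at the right end of each chain. For the Hilbert projective bound in Equation~\eqref{Eq:NewRevPin}, it suffices to note $\zeta_2 \leq D_{\max}(\rho\|\sigma) + D_{\max}(\sigma\|\rho) = \Omega(\rho\|\sigma)$, which follows from the elementary estimate $1 - e^{-y} \leq y$ for $y \geq 0$. The Thompson bound in Equation~\eqref{Eq:NewRevPin-Thompson} is more delicate and is the main obstacle. Introducing $g(x) := x/(e^x-1)$ and $h(x) := x + g(x)$, one rewrites $\zeta_1 = h(D_{\max}(\rho\|\sigma)) - g(D_{\max}(\sigma\|\rho))$. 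I then verify two monotonicity facts on $[0,\infty)$: (i) $g$ is decreasing, which reduces to $e^x(1-x) \leq 1$; (ii) $h$ is increasing, since $h'(x) \geq 0$ is equivalent, after clearing the non-negative denominator $(e^x-1)^2$, to $e^x(e^x - 1 - x) \geq 0$, itself immediate from $e^x \geq 1+x$. A two-case argument closes the bound: if $D_{\max}(\rho\|\sigma) \geq D_{\max}(\sigma\|\rho)$, monotonicity of $g$ gives $g(D_{\max}(\sigma\|\rho)) \geq g(D_{\max}(\rho\|\sigma))$ and hence $\zeta_1 \leq h(D_{\max}(\rho\|\sigma)) - g(D_{\max}(\rho\|\sigma)) = D_{\max}(\rho\|\sigma) = \Xi(\rho\|\sigma)$; otherwise monotonicity of $h$ gives $h(D_{\max}(\rho\|\sigma)) \leq h(D_{\max}(\sigma\|\rho))$, which rearranges to $\zeta_1 \leq D_{\max}(\sigma\|\rho) = \Xi(\rho\|\sigma)$.

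The only real obstacle is the ``crossed'' case $D_{\max}(\rho\|\sigma) < D_{\max}(\sigma\|\rho)$ of the Thompson bound, which requires the monotonicity of $h$; everything else is routine integration together with the standard inequalities $1-e^{-y}\leq y$ and $e^x \geq 1+x$.
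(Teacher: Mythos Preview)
Your proof is correct and follows the paper's approach: specialize Proposition~\ref{Prop:Rev-Q-Pinsker} to $f(x)=x\log x$, compute $\zeta_1$ and $\zeta_2$ explicitly, and use $1-e^{-y}\leq y$ for the Hilbert projective bound. The only point where you diverge is the Thompson bound~\eqref{Eq:NewRevPin-Thompson}: the paper simply evaluates the right-hand side of~\eqref{Eq:RevPin-f-1}, namely $\frac{f(e^{\Xi})+e^{\Xi}f(e^{-\Xi})}{e^{\Xi}-1}$, which for $f(x)=x\log x$ collapses immediately to $\Xi$, whereas you rederive $\zeta_1\leq\Xi$ from scratch via the two-case monotonicity argument on $g$ and $h$. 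Your route is valid but more laborious; in fact your monotonicity claims (that $h$ is increasing and $g$ is decreasing) are precisely the specialization to $f''(x)=1/x$ of the observation, used in the proof of Proposition~\ref{Prop:Rev-Q-Pinsker}, that each integral in $\zeta_1$ is monotone in its respective $D_{\max}$.
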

\begin{proof}
    The first set of inequalities follows by evaluating the bounds in Equation~\eqref{Eq:RevPin-f-1} in Proposition~\ref{Prop:Rev-Q-Pinsker}. The second set follows from evaluating Equation~\eqref{Eq:RevPin-f-2}, besides the last inequality, which  follows from $1-x\leq e^{-x}$.  
\end{proof}
We compare our bound on the relative entropy against known ones by example in Figure~\ref{Fig:ExamplesRevPin}. We observe that Equation~\eqref{Eq:NewRevPin-0} gives better bounds then previous results for our examples and, as expected, often even equals the relative entropy. The weaker bounds in Equation~\eqref{Eq:NewRevPin-1} and Equation~\eqref{Eq:NewRevPin-Thompson} are generally incomparable to the previous bound and also with each other, in the sense that for either one can find examples where they perform better than the others. 

\begin{figure}[t]
\centering
\begin{minipage}{.49\textwidth}
\scalebox{0.83}{\begin{tikzpicture}
    \node[anchor=south west,inner sep=0] (image) at (0,0) {\includegraphics[width=1.205\textwidth]{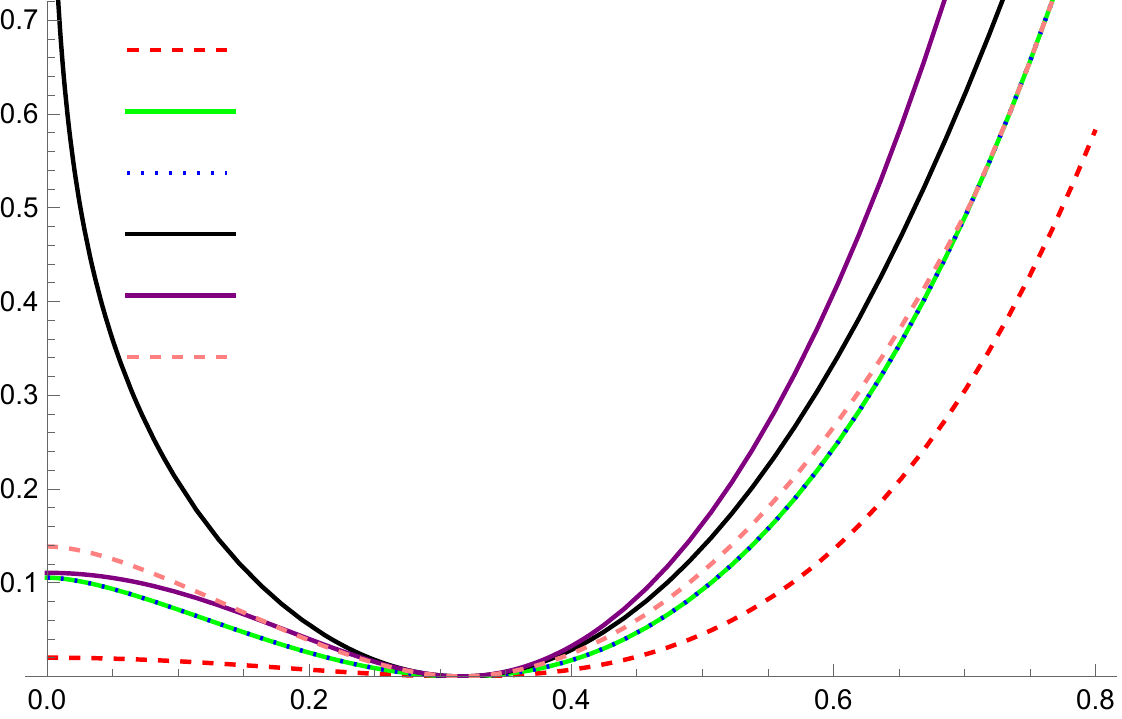}};
    \begin{scope}[x={(image.south east)},y={(image.north west)}]
        \node[] at (0.418,0.714){$\begin{aligned}
        &\text{Lower bound in Eq.~\eqref{Eq:QPinsker}} \\
        &D(\rho\|\sigma) \\
        &\text{Bound in Eq.~\eqref{Eq:NewRevPin-0}} \\
        &\text{Bound in Eq.~\eqref{Eq:NewRevPin-Thompson}} \\
        &\text{Bound in Eq.~\eqref{Eq:NewRevPin-1}} \\
        &\text{Bound in Eq.~\eqref{Eq:Aud-RevPinsker}}
        \end{aligned}$};
        \node[] at (0.42,0.287){$\begin{aligned}
        &\rho = \left( \begin{matrix} p^2 & 0 \\ 0 & 1-p^2 \end{matrix}\right) \\[1mm]
        &\sigma=\left( \begin{matrix} 0.1 & 0 \\  0 & 0.9 \end{matrix}\right) 
        \end{aligned}$};
    \end{scope}
\end{tikzpicture}}
\end{minipage}
\begin{minipage}{.49\textwidth}
\scalebox{0.83}{\begin{tikzpicture}
    \node[anchor=south west,inner sep=0] (image) at (0,0) {\includegraphics[width=1.205\textwidth]{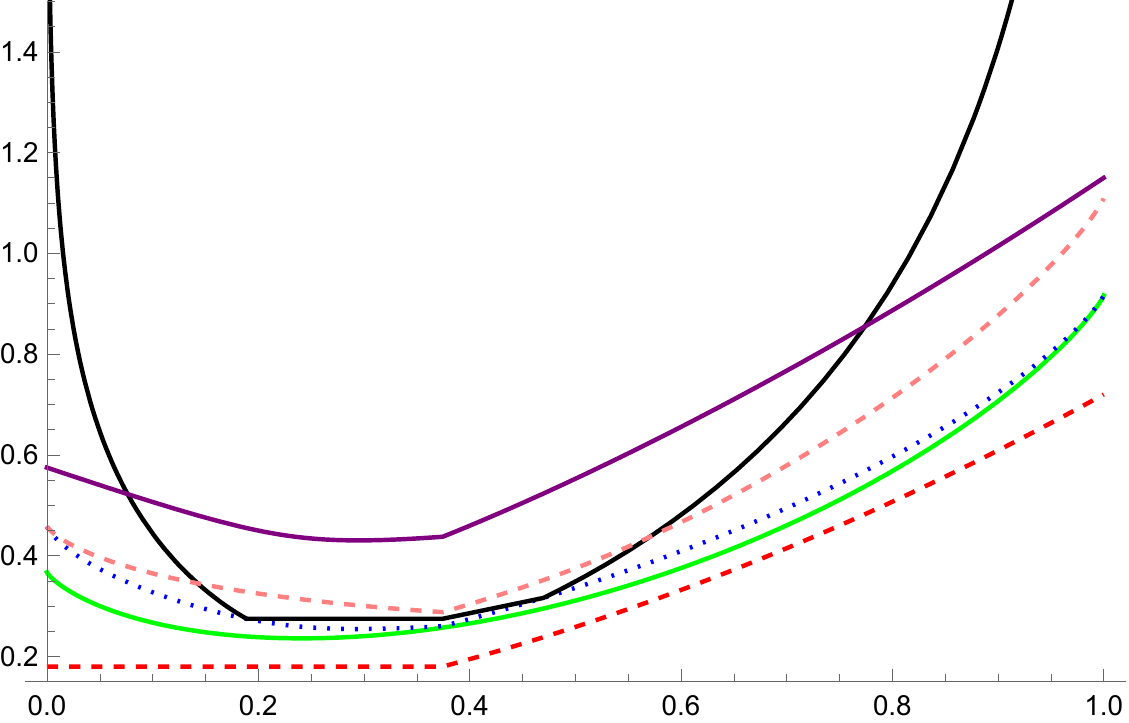}};
    \begin{scope}[x={(image.south east)},y={(image.north west)}]
        \node[] at (0.4,0.7){$\begin{aligned}
        &\rho = \frac18\left( \begin{matrix} 4p & \sqrt{p(1-p)} & 0 \\ \sqrt{p(1-p)} & 4(1-p) & 0 \\ 0 & 0 & 4 \end{matrix}\right) \\[1mm]
        &\sigma=\left( \begin{matrix} 0.2 & 0 & 0 \\  0 & 0.6 & 0 \\ 0 & 0 & 0.2 \end{matrix}\right) 
        \end{aligned}$};
    \end{scope}
\end{tikzpicture}}
\end{minipage}
\caption{\label{Fig:ExamplesRevPin} Plot of the relative entropy and several bounds on it for different states over the parameter $p$. (Left) Example for $d=2$. The states commute and hence our bound in Eq.~\eqref{Eq:NewRevPin-0} equals the relative entropy. Our bound in Eq.~\eqref{Eq:NewRevPin-1} is incomparable to the previous bound in Eq.~\eqref{Eq:Aud-RevPinsker}. (Right) Example for $d=3$. Eq.~\eqref{Eq:NewRevPin-0} gives the closest upper bound and is tight for some values of $p$.}
\end{figure}

The bounds based on the first inequality in Proposition~\ref{Prop:Rev-Q-Pinsker} are tight, but often a bit messy. In the remainder we focus on the somewhat simpler bounds based on the second inequality in Equation~\eqref{Eq:RevPin-f-2}, but we remark that one can always get tighter bounds by using the first inequality. We continue with the $f$-divergences discussed in Proposition~\ref{Prop:D_F_lambda-equality}. 
\begin{corollary}
For quantum states $\rho,\sigma$ and $F_\lambda=F_{\lambda,\lambda}$, $\hat F_\lambda=F_{\lambda,1-\lambda}$ with $F_{\lambda,\mu}$ defined in Equation~\eqref{Eq:def-F-lambda-mu}, we have, 
    \begin{align}
        D_{F_\lambda}(\rho\|\sigma) &= (1-\lambda) D(\rho\|\lambda\rho+(1-\lambda)\sigma) + \lambda D(\sigma\|\lambda\rho+(1-\lambda)\sigma) \\
        &\leq \left( \frac{1-2\lambda}{1-\lambda+\lambda e^{D_{\max}(\rho\|\sigma)}} + (1-\lambda) [ D_{\max}(\rho\|\sigma)-\log(1-\lambda + \lambda e^{D_{\max}(\rho\|\sigma)})] \right. \nonumber\\
        &\left. \quad- \frac{1-2\lambda}{\lambda + (1-\lambda)e^{D_{\max}(\sigma\|\rho)}} + \lambda [D_{\max}(\sigma\|\rho) - \log(\lambda+(1-\lambda)e^{D_{\max}(\sigma\|\rho)})] \right) E_1(\rho\|\sigma)
    \end{align}
    and 
    \begin{align}
        D_{\hat F_\lambda}(\rho\|\sigma) &= \lambda D(\rho\|\lambda\rho+(1-\lambda)\sigma) + (1-\lambda) D(\sigma\|\lambda\rho+(1-\lambda)\sigma) \\
        &\leq \left( \lambda [ D_{\max}(\rho\|\sigma) - \log(\lambda e^{D_{\max}(\rho\|\sigma)}+1-\lambda)] \right. \nonumber\\
        &\left. \quad+ (1-\lambda) [ D_{\max}(\sigma\|\rho) - \log( (1-\lambda) e^{D_{\max}(\sigma\|\rho)} +\lambda)] \right) E_1(\rho\|\sigma) \\
        &\leq h(\lambda) E_1(\rho\|\sigma), 
    \end{align} 
    where $h(\lambda)=-\lambda\log\lambda - (1-\lambda)\log(1-\lambda)$ is the binary entropy function. 
\end{corollary}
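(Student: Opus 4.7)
The plan is to combine two of the excerpt's results in sequence. The two equalities are immediate: apply Proposition~\ref{Prop:D_F_lambda-equality} to $f(x) = x\log x$, for which $D_f(\rho\|\sigma) = D(\rho\|\sigma)$, and instantiate with $\mu = \lambda$ (giving $F_\lambda$) and $\mu = 1-\lambda$ (giving $\hat F_\lambda$). The inequalities are obtained by applying the second inequality in Equation~\eqref{Eq:RevPin-f-2} of Proposition~\ref{Prop:Rev-Q-Pinsker}, namely $D_F(\rho\|\sigma) \leq \zeta_2(\rho,\sigma)\, E_1(\rho\|\sigma)$, with $F \in \{F_\lambda,\hat F_\lambda\}$.

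The first technical step is to simplify $\zeta_2$. The first summand of $\zeta_2$ integrates by the fundamental theorem: $\int_1^M F''(\gamma)\,d\gamma = F'(M) - F'(1)$. For the second summand, substitute $u = 1/\gamma$ to obtain $\int_{1/N}^{1} u\, F''(u)\,du$, then integrate by parts and use $F(1)=0$ to get $F'(1) - (1/N)F'(1/N) + F(1/N)$. Adding, the $F'(1)$ terms cancel and
\begin{align}
\zeta_2(\rho,\sigma) = F'\!\left(e^{D_{\max}(\rho\|\sigma)}\right) - e^{-D_{\max}(\sigma\|\rho)} F'\!\left(e^{-D_{\max}(\sigma\|\rho)}\right) + F\!\left(e^{-D_{\max}(\sigma\|\rho)}\right).
\end{align}

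The second technical step is to evaluate this explicit formula for $F_\lambda$ and $\hat F_\lambda$. Plugging $f(x) = x\log x$ into the definition \eqref{Eq:def-F-lambda-mu} and simplifying gives $F_{\lambda,\mu}(x) = -[\mu + (1-\mu)x]\log(1-\lambda+\lambda x) + (1-\mu) x \log x$, whence by direct differentiation $\hat F_\lambda'(x) = \lambda \log\frac{x}{1-\lambda+\lambda x}$ and $F_\lambda'(x) = (1-\lambda)\log\frac{x}{1-\lambda+\lambda x} + \frac{1-2\lambda}{1-\lambda+\lambda x}$. Substituting these closed forms into the $\zeta_2$ formula and carrying out the algebra, in particular noting that the two $\log N$ contributions from $F(1/N)$ and $(1/N)F'(1/N)$ cancel, yields exactly the claimed bounds after rewriting every term $-c\log(1-\lambda+\lambda/N)$ as $c\,D_{\max}(\sigma\|\rho) - c\log((1-\lambda)e^{D_{\max}(\sigma\|\rho)} + \lambda)$.

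For the final binary-entropy bound, observe
\begin{align}
\lambda\bigl[D_{\max}(\rho\|\sigma) - \log\bigl(\lambda e^{D_{\max}(\rho\|\sigma)} + 1-\lambda\bigr)\bigr] = -\lambda \log\bigl(\lambda + (1-\lambda)e^{-D_{\max}(\rho\|\sigma)}\bigr) \leq -\lambda \log \lambda,
\end{align}
since $D_{\max}(\rho\|\sigma)\geq 0$ ensures $\lambda + (1-\lambda)e^{-D_{\max}(\rho\|\sigma)} \geq \lambda$; the analogous bound holds for the second summand with $1-\lambda$ in place of $\lambda$. Summing gives $h(\lambda) = -\lambda\log\lambda - (1-\lambda)\log(1-\lambda)$. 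The main obstacle is the bookkeeping in step three: the expressions for $F_\lambda$ and $\hat F_\lambda$ and their derivatives are not complicated individually, but one must track the cancellations carefully to match the asymmetric coefficients $(1-\lambda)$ and $\lambda$ distributed across the $D_{\max}(\rho\|\sigma)$ and $D_{\max}(\sigma\|\rho)$ terms in the stated bounds.
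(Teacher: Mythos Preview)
Your proposal is correct and follows essentially the same approach as the paper: both obtain the equalities from Proposition~\ref{Prop:D_F_lambda-equality} with $f(x)=x\log x$, and both obtain the inequalities by evaluating the coefficient $\zeta_2(\rho,\sigma)$ from Proposition~\ref{Prop:Rev-Q-Pinsker}. The only cosmetic difference is that the paper computes $F_\lambda''$ and $\hat F_\lambda''$ explicitly and integrates them directly, whereas you first derive the closed form $\zeta_2 = F'(e^{D_{\max}(\rho\|\sigma)}) - e^{-D_{\max}(\sigma\|\rho)}F'(e^{-D_{\max}(\sigma\|\rho)}) + F(e^{-D_{\max}(\sigma\|\rho)})$ via the fundamental theorem and integration by parts and then evaluate $F$ and $F'$; this is slightly cleaner bookkeeping (the $\pm(1-2\lambda)$ constants that the paper cancels by hand are absorbed into your $F'(1)$ cancellation) but not a different argument.
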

\begin{proof}
    First, we check the second derivative of $F_{\lambda}(x)$ to be
    \begin{align}
        F_{\lambda}''(x) = \frac{\lambda^3(x-1)+3\lambda^2-3\lambda+1}{x(1-\lambda+\lambda x)^2}. 
    \end{align}
    From which one can get that 
    \begin{align}
        &\int_1^{e^{D_{\max}(\rho\|\sigma)}} F_{\lambda}''(\gamma) \d\gamma \\
        &= \frac{1-2\lambda}{1-\lambda+\lambda e^{D_{\max}(\rho\|\sigma)}} + (1-\lambda) [ D_{\max}(\rho\|\sigma) -\log(1-\lambda + \lambda e^{D_{\max}(\rho\|\sigma)})] - (1-2\lambda) 
    \end{align}
    and similarly,
    \begin{align}
        &\int_1^{e^{D_{\max}(\sigma\|\rho)}} \gamma^{-3} F_{\lambda}''(\gamma^{-1}) \d\gamma \\
        &= - \frac{1-2\lambda}{\lambda + (1-\lambda)e^{D_{\max}(\sigma\|\rho)}} + \lambda [D_{\max}(\sigma\|\rho) - \log(\lambda+(1-\lambda)e^{D_{\max}(\sigma\|\rho)})] + (1-2\lambda) .
    \end{align}
In the same spirit we can also derive the second derivative of $\hat F_{\lambda}(x)$ to be
\begin{align}
    \hat F_{\lambda}''(x) = \frac{\lambda(1-\lambda)}{\lambda(x-1)x+x}. 
\end{align}
This implies
    \begin{align}
        &\int_1^{e^{D_{\max}(\rho\|\sigma)}} \hat F_{\lambda}''(\gamma) \d\gamma 
        = \lambda [ D_{\max}(\rho\|\sigma) - \log(\lambda e^{D_{\max}(\rho\|\sigma)}+1-\lambda)] 
    \end{align}
    and similarly,
    \begin{align}
        &\int_1^{e^{D_{\max}(\sigma\|\rho)}} \gamma^{-3} \hat F_{\lambda}''(\gamma^{-1}) \d\gamma 
        = (1-\lambda) [ D_{\max}(\sigma\|\rho) - \log( (1-\lambda)e^{D_{\max}(\sigma\|\rho)} +\lambda)]. 
    \end{align}
\end{proof}
This generalizes and improves a couple of quite useful results. Specializing to $F_0$ we recover the bound for the relative entropy that we stated earlier in Equation~\eqref{Eq:NewRevPin-1}, and for $F_{\frac12}$ we get,
\begin{align}
        &D_{F_{\frac12}}(\rho\|\sigma) \\
        &\quad = JS(\rho\|\sigma)  \\
        &\quad \leq {\frac12}\left(   D_{\max}(\rho\|\sigma)+D_{\max}(\sigma\|\rho) -\log \left({\frac12} + {\frac12} e^{D_{\max}(\rho\|\sigma)} \right)  - \log \left({\frac12}+{\frac12}e^{D_{\max}(\sigma\|\rho)}\right) \right) E_1(\rho\|\sigma)\\ 
        &\quad \leq \log(2) E_1(\rho\|\sigma). \label{Eq:RP-JS-2}
\end{align}
The weaker upper bound on the quantum Jensen-Shannon divergence in Equation~\eqref{Eq:RP-JS-2} was previously proven in~\cite{briet2009properties}. 

We continue by briefly giving an extension of the reverse Pinsker type inequalities above to the new integral \Renyi divergences. 
\begin{corollary}\label{Cor:cH-RevPin}
    For quantum states $\rho,\sigma$ and $\alpha \in (0, 1) \cup (1, \infty)$, we have, 
    \begin{align}
        \Hell_\alpha(\rho\|\sigma) \leq \frac{1}{\alpha-1}\left( \alpha e^{(\alpha-1) D_{\max}(\rho\|\sigma)} - (\alpha-1)e^{-\alpha D_{\max}(\sigma\|\rho)} -1 \right) E_1(\rho\|\sigma), \label{Eq:cH-RevPin}
    \end{align}
    and therefore
    \begin{align}
        D_\alpha(\rho\|\sigma) \leq \frac{1}{\alpha-1} \log\left( 1 + \left( \alpha e^{(\alpha-1) D_{\max}(\rho\|\sigma)} - (\alpha-1)e^{-\alpha D_{\max}(\sigma\|\rho)} -1 \right) E_1(\rho\|\sigma)\right). 
    \end{align}
\end{corollary}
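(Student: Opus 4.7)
The plan is to follow directly from the integral representation of $\Hell_\alpha$ together with the pointwise bound on the hockey-stick divergence established in Lemma~\ref{Lem:Egamma-E1-sharp}. The key observation is that by that lemma, $E_\gamma(\rho\|\sigma) \leq E_1(\rho\|\sigma)$ for $1 \leq \gamma \leq e^{D_{\max}(\rho\|\sigma)}$ and $E_\gamma(\rho\|\sigma) = 0$ beyond that threshold, and analogously with $\rho$ and $\sigma$ exchanged (noting that $E_1(\sigma\|\rho) = E_1(\rho\|\sigma)$). This is essentially the same template used in Proposition~\ref{Prop:Rev-Q-Pinsker}, just specialized to the power-law weights $\gamma^{\alpha-2}$ and $\gamma^{-\alpha-1}$.

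Concretely, I would truncate the integral in the definition of $\Hell_\alpha$ at $e^{D_{\max}(\rho\|\sigma)}$ and $e^{D_{\max}(\sigma\|\rho)}$ in the two summands, factor out $E_1(\rho\|\sigma)$, and evaluate
\begin{align}
\int_1^{e^{D_{\max}(\rho\|\sigma)}} \gamma^{\alpha-2}\, \d\gamma = \frac{e^{(\alpha-1)D_{\max}(\rho\|\sigma)}-1}{\alpha-1}, \qquad
\int_1^{e^{D_{\max}(\sigma\|\rho)}} \gamma^{-\alpha-1}\, \d\gamma = \frac{1-e^{-\alpha D_{\max}(\sigma\|\rho)}}{\alpha}.
\end{align}
Multiplying by the leading $\alpha$, adding and collecting terms so that the constants $-\alpha/(\alpha-1)$ and $1$ combine into $-1/(\alpha-1)$, yields exactly the prefactor $\frac{1}{\alpha-1}(\alpha e^{(\alpha-1)D_{\max}(\rho\|\sigma)} - (\alpha-1)e^{-\alpha D_{\max}(\sigma\|\rho)} - 1)$ claimed in Equation~\eqref{Eq:cH-RevPin}.

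For the second inequality, I would invoke the fact that the map $x \mapsto \frac{1}{\alpha-1}\log(1+(\alpha-1)x)$ is monotonically increasing in $x$ on its domain of definition for every $\alpha \in (0,1)\cup(1,\infty)$, since its derivative $(1+(\alpha-1)x)^{-1}$ is positive whenever the argument of the logarithm is positive. Applying this monotonicity to the upper bound on $\Hell_\alpha$ just derived immediately gives the stated bound on $D_\alpha$.

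The only delicate point is the case $\alpha \in (0,1)$, where $(\alpha-1)$ is negative and one must check that the argument of the logarithm remains positive so the bound is meaningful; this follows from the finiteness of $\Hell_\alpha$ for $0<\alpha<1$ when $\rho\not\perp\sigma$ that was already established in Proposition~\ref{Prop:Da-properties}. Once that is noted, the argument is purely routine bookkeeping, with no real obstacle.
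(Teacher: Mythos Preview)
Your proposal is correct and follows essentially the same approach as the paper: the paper's proof simply invokes Proposition~\ref{Prop:Rev-Q-Pinsker} (specifically the $\zeta_2$ bound) with $f(x)=\frac{x^\alpha-1}{\alpha-1}$, so $f''(x)=\alpha x^{\alpha-2}$, and evaluates the resulting integrals---which is exactly what you unpack by hand. The only minor imprecision is in your ``delicate point'': for $\alpha\in(0,1)$ the issue is whether $1+(\alpha-1)U>0$ for the \emph{upper bound} $U$, not for $\Hell_\alpha$ itself; this does hold since $U\leq\frac{1}{1-\alpha}$ (with equality only when $\rho\perp\sigma$, in which case the bound on $D_\alpha$ is $+\infty$ and trivially true), but your justification via finiteness of $\Hell_\alpha$ addresses the wrong side of the inequality.
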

\begin{proof}
    This follows easily from Proposition~\ref{Prop:Rev-Q-Pinsker}, by choosing $f(x)=\frac{x^\alpha-1}{\alpha-1}$, resulting in $f''(x)=\alpha x^{\alpha-2}$, and evaluating the needed integrals. 
\end{proof}
In the classical setting, reverse Pinsker inequalities for \Renyi divergences were given in~\cite{sason2015upper}. 
We end this section by noting that the above can be used to give an improvement on the well-known Fuchs-van-de-Graaf inequality, bounding the quantum fidelity $F(\rho,\sigma) = \|\sqrt{\rho}\sqrt{\sigma}\|_1$. 
\begin{corollary}
    For any quantum states $\rho,\sigma$, we have, 
    \begin{align}
        F(\rho,\sigma) \geq 1 - \frac12\left(2-e^{-\frac12 D_{\max}(\rho\|\sigma)}-e^{-\frac12 D_{\max}(\sigma\|\rho)}\right)E_1(\rho\|\sigma). \label{Eq:FvdG-improv}
    \end{align}
\end{corollary}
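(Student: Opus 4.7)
The plan is to specialize Corollary~\ref{Cor:cH-RevPin} to $\alpha = 1/2$ and combine it with the standard variational characterization of the fidelity as an infimum over measurements. First I would evaluate the coefficient in~\eqref{Eq:cH-RevPin} at $\alpha = 1/2$: a short computation gives
\begin{align}
\frac{1}{\alpha-1}\Bigl(\alpha\, e^{(\alpha-1)D_{\max}(\rho\|\sigma)} - (\alpha-1)\,e^{-\alpha D_{\max}(\sigma\|\rho)} - 1\Bigr)\bigg|_{\alpha = 1/2}
= 2 - e^{-\frac12 D_{\max}(\rho\|\sigma)} - e^{-\frac12 D_{\max}(\sigma\|\rho)},
\end{align}
so Corollary~\ref{Cor:cH-RevPin} immediately yields the upper bound
\begin{align}
\Hell_{1/2}(\rho\|\sigma) \leq \bigl(2 - e^{-\frac12 D_{\max}(\rho\|\sigma)} - e^{-\frac12 D_{\max}(\sigma\|\rho)}\bigr) E_1(\rho\|\sigma).
\end{align}

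The second step is to obtain a matching lower bound that relates $\Hell_{1/2}(\rho\|\sigma)$ to the quantum fidelity $F(\rho,\sigma)$. By Proposition~\ref{Prop:Df-properties}(8) applied to $f(x) = \frac{x^{1/2}-1}{1/2-1} = 2(1-\sqrt{x}) \in \cF$, the new quantum $f$-divergence dominates the measured one, i.e.
\begin{align}
\Hell_{1/2}(\rho\|\sigma) \geq \widecheck{\Hell}_{1/2}(\rho\|\sigma) = \sup_M \Hell_{1/2}\bigl(P_{M,\rho} \,\big\|\, P_{M,\sigma}\bigr).
\end{align}
For any classical pair $(P,Q)$ a direct calculation shows $\Hell_{1/2}(P\|Q) = 2\bigl(1 - \sum_x \sqrt{P(x)Q(x)}\bigr)$, so the supremum over measurements evaluates to $\widecheck{\Hell}_{1/2}(\rho\|\sigma) = 2\bigl(1 - \inf_M \sum_x \sqrt{P_{M,\rho}(x)\,P_{M,\sigma}(x)}\bigr)$. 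By the Fuchs--Caves theorem, the infimum of the classical Bhattacharyya coefficient over quantum measurements equals $F(\rho,\sigma)$, giving the identity $\widecheck{\Hell}_{1/2}(\rho\|\sigma) = 2\bigl(1 - F(\rho,\sigma)\bigr)$.

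Chaining the two bounds, we obtain
\begin{align}
2\bigl(1 - F(\rho,\sigma)\bigr) \leq \Hell_{1/2}(\rho\|\sigma) \leq \bigl(2 - e^{-\frac12 D_{\max}(\rho\|\sigma)} - e^{-\frac12 D_{\max}(\sigma\|\rho)}\bigr) E_1(\rho\|\sigma),
\end{align}
and solving for $F(\rho,\sigma)$ recovers~\eqref{Eq:FvdG-improv} exactly. I do not expect a genuine obstacle here; the only nontrivial input beyond the reverse-Pinsker machinery already established for $\Hell_\alpha$ is the identification of $\widecheck{\Hell}_{1/2}$ with twice the fidelity gap via Fuchs--Caves, and the remainder is routine bookkeeping with the constants coming out of the integral representation.
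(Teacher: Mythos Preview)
Your proposal is correct and follows essentially the same approach as the paper: both combine the $\alpha=\tfrac12$ case of Corollary~\ref{Cor:cH-RevPin} with the fact that the measured $\alpha=\tfrac12$ divergence recovers the fidelity. The only cosmetic difference is that the paper phrases the latter via $\widecheck{D}_{1/2}(\rho\|\sigma)=\widetilde D_{1/2}(\rho\|\sigma)=-2\log F(\rho,\sigma)$, whereas you phrase it equivalently via $\widecheck{\Hell}_{1/2}(\rho\|\sigma)=2(1-F(\rho,\sigma))$ and Fuchs--Caves; either way one arrives at $F(\rho,\sigma)\geq 1-\tfrac12\Hell_{1/2}(\rho\|\sigma)$ and then plugs in the reverse-Pinsker bound.
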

\begin{proof}
    It is well known that 
    \begin{align}
        -2\log F(\rho,\sigma) = \widetilde D_{\frac12}(\rho\|\sigma) = \widecheck{D}_{\frac12}(\rho\|\sigma), 
    \end{align}
    where the last quantity is the measured \Renyi divergence defined earlier in Equation~\eqref{Eq:meas-RRE}, here for $\alpha=\frac12$. Note that the second equality is particular to the $\alpha=\frac12$ case, as one needs to regularize the right hand side to get equality for general $\alpha$. As a consequence, we have
    \begin{align}
        -2\log F(\rho,\sigma) = \widecheck{D}_{\frac12}(\rho\|\sigma) \leq D_{\frac12}(\rho\|\sigma)
    \end{align}
    and therefore, 
    \begin{align}
        F(\rho,\sigma) \geq 1 - \frac12 \Hell_{\frac12}(\rho\|\sigma). 
    \end{align}  
    The claim then follows directly by applying Equation~\eqref{Eq:cH-RevPin} from Corollary~\ref{Cor:cH-RevPin} for $\alpha=\frac12$. 
\end{proof}
Note that a different improvement on the Fuchs-van-de-Graaf inequality was given in~\cite{zhang2016lower} as
\begin{align}
    F(\rho,\sigma) \geq 1 - \left(\frac{e^{\frac12 D_{\max}(\rho\|\sigma)}}{1+e^{\frac12 D_{\max}(\rho\|\sigma)}}\right)E_1(\rho\|\sigma). \label{Eq:FvdG-improv-old}
\end{align}
Here we only briefly note that it is easy to find states showing that neither improvement is generally stronger than the other. A possibly useful observation is that our bound is naturally symmetric under exchanging $\rho$ and $\sigma$, just as the fidelity and trace distance themselves. 
Our bounds also implies 
\begin{align}
    F(\rho,\sigma) = 1 - E_1(\rho\|\sigma) \quad\implies\quad \rho=\sigma \quad \lor \quad D_{\max}(\rho\|\sigma)=\infty=D_{\max}(\sigma\|\rho), 
\end{align}
which is slightly stronger than the statement made in the conclusions of ~\cite{zhang2016lower}. See also~\cite{audenaert2012comparisons} for a discussion of equality scenarios.

\subsection{Continuity in the first argument} \label{Sec:ContinuityFirst}

In this section we are interested in the continuity of divergences in their first argument, i.e. in upper bounds on the difference
\begin{align}
    D_f(\rho\|\sigma) - D_f(\tau\|\sigma). 
\end{align}
For this we first need an observation about the continuity of hockey-stick divergences. 
\begin{corollary}\label{Cor:HS-continuity}
For any quantum states $\rho,\tau,\sigma$ and $\gamma\geq 1$, we have
\begin{align}
        E_\gamma(\rho\|\sigma)-E_\gamma(\tau\|\sigma) &\leq E_1(\rho\|\tau), \\
        E_\gamma(\sigma\|\rho)-E_\gamma(\sigma\|\tau) &\leq \gamma E_1(\tau\|\rho). 
\end{align}
\end{corollary}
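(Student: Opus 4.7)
My plan is to reduce both inequalities to the triangle inequality for the trace norm, using the variational representation from item~1 of Lemma~\ref{Lem:HS-Properties}. For quantum states $\rho,\sigma$ and $\gamma\geq 1$ the trace $\tr(\rho-\gamma\sigma)=1-\gamma$ is non-positive, so $|\tr(\rho-\gamma\sigma)|=\gamma-1$ and the first-line formula of item~1 simplifies to
\begin{align}
E_\gamma(\rho\|\sigma) \;=\; \tfrac12\|\rho-\gamma\sigma\|_1 + \tfrac12(\gamma-1) \,.
\end{align}
The constant $\tfrac12(\gamma-1)$ depends only on $\gamma$ and not on the first argument, so it cancels when we subtract.

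For the first inequality, I write
\begin{align}
E_\gamma(\rho\|\sigma) - E_\gamma(\tau\|\sigma)
\;=\; \tfrac12\bigl(\|\rho-\gamma\sigma\|_1 - \|\tau-\gamma\sigma\|_1\bigr)
\;\leq\; \tfrac12\|\rho-\tau\|_1 \;=\; E_1(\rho\|\tau),
\end{align}
where the inequality is the reverse triangle inequality and the final equality is the $\gamma=1$ case already noted in item~4 of Lemma~\ref{Lem:HS-Properties}.

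For the second inequality, I apply the same simplification with $\sigma$ in the first slot:
\begin{align}
E_\gamma(\sigma\|\rho) - E_\gamma(\sigma\|\tau)
\;=\; \tfrac12\bigl(\|\sigma-\gamma\rho\|_1 - \|\sigma-\gamma\tau\|_1\bigr)
\;\leq\; \tfrac12\|\gamma\rho - \gamma\tau\|_1
\;=\; \tfrac{\gamma}{2}\|\rho-\tau\|_1 \;=\; \gamma E_1(\tau\|\rho),
\end{align}
again using the reverse triangle inequality and the symmetry of $E_1$ in its arguments (since $E_1(\rho\|\tau)=\tfrac12\|\rho-\tau\|_1$).

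There is no real obstacle here: the content of the statement is essentially that $\rho\mapsto\tfrac12\|\rho-\gamma\sigma\|_1$ is $1$-Lipschitz in trace norm and $\sigma\mapsto\tfrac12\|\sigma-\gamma\rho\|_1$ is $\gamma$-Lipschitz in trace norm of the second argument. The only thing worth checking is that the explicit $\tfrac12(\gamma-1)$ correction term in the variational formula is independent of the first argument in both cases, which is what allows it to cancel cleanly in the subtraction.
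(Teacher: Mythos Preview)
Your proof is correct. The paper takes a slightly different route: it invokes the hockey-stick triangle inequality (item~10 of Lemma~\ref{Lem:HS-Properties}) directly, setting $\gamma_1=1,\gamma_2=\gamma$ for the first bound and $\gamma_1=\gamma,\gamma_2=1$ (with a relabelling of the states) for the second. You instead unpack $E_\gamma$ via the trace-norm expression and apply the reverse triangle inequality for $\|\cdot\|_1$. The two arguments are close cousins, since the hockey-stick triangle inequality is itself proved from the trace-norm triangle inequality; your version is marginally more self-contained because it does not rely on that intermediate lemma.

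One cosmetic remark: the correct specialisation of item~1 for states and $\gamma\geq 1$ is
\[
E_\gamma(\rho\|\sigma)=\tfrac12\|\rho-\gamma\sigma\|_1+\tfrac12(1-\gamma),
\]
i.e.\ with the opposite sign on the constant (the statement of item~1 in the paper carries a sign typo, which you faithfully reproduced). Since you only use that the additive term is the same for $E_\gamma(\rho\|\sigma)$ and $E_\gamma(\tau\|\sigma)$ and therefore cancels, this does not affect your argument at all.
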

\begin{proof}
Recall the triangle inequality for the quantum hockey-stick divergence, 
\begin{align}
    E_{\gamma_1\gamma_2}(\rho\|\sigma) \leq E_{\gamma_1}(\rho\|\tau) + \gamma_1 E_{\gamma_2}(\tau\|\sigma). 
\end{align}
Then the first inequality follows directly by choosing $\gamma_1=1$ and $\gamma_2=\gamma$. The second follows by choosing $\gamma_1=\gamma$ and $\gamma_2=1$ and appropriately relabeling the states. 
\end{proof}
With these we can make the following general statement. 
\begin{proposition}\label{Prop:Cont-general}
For any quantum states $\rho,\tau,\sigma$ and $f\in\cF$, we have,  
    \begin{align}
        D_f(\rho\|\sigma) - D_f(\tau\|\sigma) &\leq \left( \int_1^{e^{D_{\max}(\rho\|\sigma)}} f''(s)  ds + \int_1^{e^{D_{\max}(\sigma\|\rho)}} \frac{1}{s^2}f''\left(\frac1s\right) ds \right) E_1(\rho\|\tau).
    \end{align} 
\end{proposition}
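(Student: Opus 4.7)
The plan is to start from the integral representation and bound each of the two summands comprising $D_f(\rho\|\sigma) - D_f(\tau\|\sigma)$ separately. By the definition of $D_f$ and linearity of the integral,
\begin{align*}
D_f(\rho\|\sigma) - D_f(\tau\|\sigma) = \int_1^\infty f''(\gamma)\bigl[E_\gamma(\rho\|\sigma) - E_\gamma(\tau\|\sigma)\bigr]\, d\gamma + \int_1^\infty \gamma^{-3} f''(\gamma^{-1})\bigl[E_\gamma(\sigma\|\rho) - E_\gamma(\sigma\|\tau)\bigr]\, d\gamma.
\end{align*}

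The first step is to truncate each range of integration. By item 5 of Lemma~\ref{Lem:HS-Properties}, we have $E_\gamma(\rho\|\sigma)=0$ for all $\gamma \geq e^{D_{\max}(\rho\|\sigma)}$, so on the tail $[e^{D_{\max}(\rho\|\sigma)},\infty)$ the first integrand reduces to $-f''(\gamma) E_\gamma(\tau\|\sigma)$, which is non-positive since $f''\geq 0$ and $E_\gamma\geq 0$. Hence discarding that tail yields an upper bound. The second integral is treated identically using $E_\gamma(\sigma\|\rho)=0$ for $\gamma \geq e^{D_{\max}(\sigma\|\rho)}$.

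The second step is to bound the integrands on the now-bounded intervals using Corollary~\ref{Cor:HS-continuity}, which gives $E_\gamma(\rho\|\sigma) - E_\gamma(\tau\|\sigma) \leq E_1(\rho\|\tau)$ and $E_\gamma(\sigma\|\rho) - E_\gamma(\sigma\|\tau) \leq \gamma E_1(\tau\|\rho) = \gamma E_1(\rho\|\tau)$, where the last equality uses the symmetry of the trace distance. Substituting these bounds and pulling the (constant) factor $E_1(\rho\|\tau)$ out of both integrals, the factor $\gamma$ in the second bound cancels one power of $\gamma^{-3}$ in the weight, leaving $\gamma^{-2}f''(\gamma^{-1})$. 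Renaming the dummy variable $\gamma \mapsto s$ produces exactly the claimed bound.

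The only subtle point is the truncation step, which must be justified by the sign argument above rather than by a direct bound on the bracketed differences (these can have either sign individually). If either $D_{\max}(\rho\|\sigma)$ or $D_{\max}(\sigma\|\rho)$ is infinite, the right hand side of the proposition is $+\infty$ and the statement is trivial, so one may assume both are finite; then the truncated integrals are automatically finite since $f''$ is continuous on the compact intervals of integration, and no further analytic care is required.
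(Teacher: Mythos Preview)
Your proposal is correct and follows essentially the same route as the paper's proof: write the difference via the integral representation, truncate each integral at $e^{D_{\max}(\rho\|\sigma)}$ and $e^{D_{\max}(\sigma\|\rho)}$ respectively (discarding non-positive tails), and then apply Corollary~\ref{Cor:HS-continuity} to bound the bracketed differences. Your write-up is in fact more explicit than the paper's about why the truncation is an inequality and about the edge case where one of the $D_{\max}$ is infinite.
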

\begin{proof}
    \begin{align}
    &D_f(\rho\|\sigma) - D_f(\tau\|\sigma) \\
    &= \int_1^\infty \left( f''(s) \left(E_s(\rho\|\sigma)-E_s(\tau\|\sigma)\right) + \frac{1}{s^3}f''\left(\frac1s\right)\left(E_s(\sigma\|\rho)-E_s(\sigma\|\tau)\right)\right) ds \\ 
    &\leq \int_1^{e^{D_{\max}(\rho\|\sigma)}} f''(s) \left(E_s(\rho\|\sigma)-E_s(\tau\|\sigma)\right) ds + \int_1^{e^{D_{\max}(\sigma\|\rho)}} \frac{1}{s^3}f''\left(\frac1s\right)\left(E_s(\sigma\|\rho)-E_s(\sigma\|\tau)\right) ds \\
    &\leq \left( \int_1^{e^{D_{\max}(\rho\|\sigma)}} f''(s)  ds + \int_1^{e^{D_{\max}(\sigma\|\rho)}} \frac{1}{s^2}f''\left(\frac1s\right) ds \right) E_1(\rho\|\tau),
\end{align}
where the first inequality holds because we are only discarding negative contributions and the second by Corollary~\ref{Cor:HS-continuity}. 
\end{proof}
This we can specialize to some interesting divergences. We start with the relative entropy. 
\begin{corollary}\label{Cor:D-continuity}
For any quantum states $\rho,\tau,\sigma$, we have,  
    \begin{align}
        D(\rho\|\sigma) - D(\tau\|\sigma) &\leq \Omega(\rho\|\sigma) E_1(\rho\|\tau) \\
        &\leq  \log\left(\frac{\lambda_{\min}(\sigma)^{-1}}{\lambda_{\min}(\rho)} \right) E_1(\rho\|\tau), \label{Eq:D-continuity-2}
    \end{align}
    where $\Omega(\rho\|\sigma) := D_{\max}(\rho\|\sigma) + D_{\max}(\sigma\|\rho)$ is the Hilbert projective metric.
\end{corollary}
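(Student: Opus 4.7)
The plan is to apply Proposition~\ref{Prop:Cont-general} directly to the function $f(x)=x\log x$, for which $f''(x)=1/x$. With this choice the new quantum $f$-divergence reduces to the Umegaki relative entropy (as established earlier via Corollary~\ref{Cor:D-integral}), so the left-hand side of Proposition~\ref{Prop:Cont-general} is exactly $D(\rho\|\sigma) - D(\tau\|\sigma)$.

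The first integral in the Proposition evaluates as
\begin{align}
\int_1^{e^{D_{\max}(\rho\|\sigma)}} \frac{1}{s}\, ds = D_{\max}(\rho\|\sigma),
\end{align}
and the second, using $f''(1/s) = s$, gives
\begin{align}
\int_1^{e^{D_{\max}(\sigma\|\rho)}} \frac{1}{s^2}\cdot s \, ds = \int_1^{e^{D_{\max}(\sigma\|\rho)}} \frac{1}{s}\, ds = D_{\max}(\sigma\|\rho).
\end{align}
Summing yields $\Omega(\rho\|\sigma) = D_{\max}(\rho\|\sigma) + D_{\max}(\sigma\|\rho)$, which proves the first inequality.

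For the second inequality, I would use the standard estimate $D_{\max}(\rho\|\sigma) \leq -\log \lambda_{\min}(\sigma)$: since $\rho\leq \Id$, we have $\sigma^{-1/2}\rho\sigma^{-1/2}\leq \sigma^{-1}$, so $\lambda_{\max}(\sigma^{-1/2}\rho\sigma^{-1/2})\leq 1/\lambda_{\min}(\sigma)$ and taking logs gives the bound. Similarly $D_{\max}(\sigma\|\rho)\leq -\log\lambda_{\min}(\rho)$. Summing,
\begin{align}
\Omega(\rho\|\sigma) \leq -\log\lambda_{\min}(\sigma) - \log\lambda_{\min}(\rho) = \log\!\left(\frac{\lambda_{\min}(\sigma)^{-1}}{\lambda_{\min}(\rho)}\right),
\end{align}
which combined with the first inequality completes the proof.

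There is essentially no obstacle: the heavy lifting was done in Proposition~\ref{Prop:Cont-general}, and the only task here is a direct integral computation for $f(x)=x\log x$ followed by elementary bounds on $D_{\max}$ in terms of the smallest eigenvalues. The only point worth double-checking is the sign/normalization of the integrals, which is straightforward.
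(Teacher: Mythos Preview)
Your proposal is correct and matches the paper's proof essentially line for line: apply Proposition~\ref{Prop:Cont-general} with $f(x)=x\log x$, evaluate the two integrals to get $D_{\max}(\rho\|\sigma)+D_{\max}(\sigma\|\rho)=\Omega(\rho\|\sigma)$, and then invoke the standard bound $D_{\max}(\rho\|\sigma)\leq -\log\lambda_{\min}(\sigma)$ for the second inequality.
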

\begin{proof}
    Consider $f(x)=x\log(x)$, leading to $\frac{1}{s^2}f''\left(\frac1s\right)=f''(s)=\frac1s$, then
    \begin{align}
        \left( \int_1^{e^{D_{\max}(\rho\|\sigma)}} \frac1s  ds + \int_1^{e^{D_{\max}(\sigma\|\rho)}} \frac1s ds \right) = D_{\max}(\rho\|\sigma) + D_{\max}(\sigma\|\rho).
    \end{align}
    With Proposition~\ref{Prop:Cont-general} this gives the first inequality. The second follows from the usual upper bound on the $\max$-relative entropy. 
\end{proof}
These bounds can be compared to existing bounds, including those in~\cite{gour2020optimal,bluhm2023continuity}.  
Typically, continuity bounds for the relative entropy depend on the smallest eigenvalue of $\sigma$, $\lambda_{\min}(\sigma)$, as in our weaker second bound. However, while this is often necessary, one can get stronger bounds when the states in question are sufficiently similar. We note here that the Hilbert projective metric can also be expressed in the following equivalent ways, 
\begin{align}
\Omega(\rho\|\sigma) &:= D_{\max}(\rho\|\sigma) + D_{\max}(\sigma\|\rho) \\
&= \log \inf\left\{\frac{\lambda}{\mu} \,:\, \mu\sigma\leq\rho\leq\lambda\sigma \right\} \\
&= \log\sup\left\{ \frac{\tr A\sigma}{\tr B\sigma} : A,B\geq 0, \frac{\tr A\rho}{\tr B\rho}\leq 1 \right\},  
\end{align}
where the last equality is from~\cite{regula2022tight}. 
Our second, weaker, bound in Equation~\eqref{Eq:D-continuity-2} is the result of essentially ignoring the potential further similarity between the states. Before we move on to other quantities, we briefly state a consequence for the von-Neumann entropy. 
\begin{corollary}
    For quantum states $\rho,\sigma$, we have, 
    \begin{align}
        |S(\rho) - S(\sigma)| \leq \log\left(\max\left\{ \frac{\lambda_{\max}(\rho)}{\lambda_{\min}(\rho)},\frac{\lambda_{\max}(\sigma)}{\lambda_{\min}(\sigma)} \right\}\right) E_1(\rho\|\sigma). 
    \end{align}
\end{corollary}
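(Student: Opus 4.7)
The plan is to reduce the entropy difference to a difference of relative entropies against a well-chosen reference state, and then invoke Corollary~\ref{Cor:D-continuity}. The key identity is that for any state $\tau$,
\begin{align}
D(\rho\|\tau) - D(\sigma\|\tau) = S(\sigma) - S(\rho) - \tr\big((\rho-\sigma)\log\tau\big),
\end{align}
so choosing $\tau = \Id/d$ (the maximally mixed state in dimension $d$) makes the last term vanish because $\log\tau$ is a multiple of $\Id$ and $\tr(\rho-\sigma) = 0$. This gives the clean equality $D(\rho\|\Id/d) - D(\sigma\|\Id/d) = S(\sigma) - S(\rho)$.

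Next I would apply Corollary~\ref{Cor:D-continuity} with $\sigma$ in the corollary replaced by $\Id/d$ and $\tau$ in the corollary replaced by $\sigma$:
\begin{align}
S(\sigma) - S(\rho) = D(\rho\|\Id/d) - D(\sigma\|\Id/d) \leq \Omega(\rho\|\Id/d)\, E_1(\rho\|\sigma).
\end{align}
The Hilbert projective distance against the maximally mixed state has an explicit form: $D_{\max}(\rho\|\Id/d) = \log(d\,\lambda_{\max}(\rho))$ and $D_{\max}(\Id/d\|\rho) = \log(1/(d\,\lambda_{\min}(\rho)))$, so
\begin{align}
\Omega(\rho\|\Id/d) = \log\!\left(\frac{\lambda_{\max}(\rho)}{\lambda_{\min}(\rho)}\right).
\end{align}
Interchanging $\rho$ and $\sigma$ and using the symmetry $E_1(\rho\|\sigma) = E_1(\sigma\|\rho)$ yields the analogous bound $S(\rho) - S(\sigma) \leq \log(\lambda_{\max}(\sigma)/\lambda_{\min}(\sigma))\, E_1(\rho\|\sigma)$. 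Combining both one-sided estimates by taking the maximum produces the claimed bound on $|S(\rho)-S(\sigma)|$.

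There is no real obstacle here; the corollary is entirely a consequence of Corollary~\ref{Cor:D-continuity} combined with the trick of pushing the reference state to $\Id/d$ so that the linear term $\tr(\rho-\sigma)\log\tau$ cancels. The only small points to verify are the exact values of $D_{\max}(\rho\|\Id/d)$ and $D_{\max}(\Id/d\|\rho)$, which follow immediately from the definition $D_{\max}(\alpha\|\beta) = \log \lambda_{\max}(\beta^{-1/2}\alpha\beta^{-1/2})$.
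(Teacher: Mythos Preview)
Your proposal is correct and follows essentially the same approach as the paper: reduce to relative entropies against the maximally mixed state $\tau_d=\Id/d$, apply Corollary~\ref{Cor:D-continuity}, compute $\Omega(\rho\|\tau_d)=\log(\lambda_{\max}(\rho)/\lambda_{\min}(\rho))$, and then swap $\rho\leftrightarrow\sigma$ to get the other one-sided bound. The only cosmetic difference is that you spell out the cancellation of the linear term $\tr((\rho-\sigma)\log\tau_d)$ explicitly, whereas the paper simply invokes $S(\rho)=-D(\rho\|\tau_d)$ (up to the additive $\log d$).
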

\begin{proof}
    This follows from Corollary~\ref{Cor:D-continuity} by recalling $S(\rho)=-D(\rho\|\tau_d)$, where $\tau_d$ is the maximally mixed state of dimension $d$, and checking
    \begin{align}
        D_{\max}(\sigma\|\tau_d) &= \log(d \lambda_{\max}(\sigma)), \\
        D_{\max}(\tau_d\|\sigma) &= \log(\frac1{d \lambda_{\min}(\sigma)}), 
    \end{align}
    and similar for $\sigma$ exchanged by $\rho$. 
\end{proof}
This can be compared to the usual continuity bound for the von-Neumann entropy~\cite{audenaert2007sharp,petz2007quantum}. For either bound one can find states where that bound is tighter than the other. 

Now, similarly to our result for the relative entropy above, we can also derive continuity bounds for other divergences. We end this section with two more examples. 
\begin{corollary}
    For any quantum states $\rho,\tau,\sigma$, we have, 
    \begin{align}
    JS(\rho\|\sigma) - JS(\tau\|\sigma) &\leq \frac12 \left( D_{\max}(\rho\|\sigma) + \log\left(\frac{e^{D_{\max}(\sigma\|\rho)}+1}{e^{D_{\max}(\rho\|\sigma)}+1}\right) \right) E_1(\rho\|\tau).
\end{align} 
\end{corollary}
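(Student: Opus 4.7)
The plan is to realize $JS$ as one of our integral $f$-divergences and then apply the general continuity estimate from Proposition~\ref{Prop:Cont-general} directly. Recall that Proposition~\ref{Prop:D_F_lambda-equality} with $\lambda=\mu=\tfrac{1}{2}$ gives
\begin{align}
D_{F_{1/2,1/2}}(\rho\|\sigma) = \tfrac{1}{2} D\!\left(\rho \big\| \tfrac{\rho+\sigma}{2}\right) + \tfrac{1}{2} D\!\left(\sigma \big\| \tfrac{\rho+\sigma}{2}\right) = JS(\rho\|\sigma),
\end{align}
so $JS$ is the integral $f$-divergence generated by $F_{1/2}:=F_{1/2,1/2}$. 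Hence Proposition~\ref{Prop:Cont-general} immediately yields
\begin{align}
JS(\rho\|\sigma)-JS(\tau\|\sigma) \leq \Bigg( \int_1^{e^{D_{\max}(\rho\|\sigma)}} F_{1/2}''(s) \, \d s + \int_1^{e^{D_{\max}(\sigma\|\rho)}} \tfrac{1}{s^2} F_{1/2}''\!\left(\tfrac{1}{s}\right) \d s \Bigg) E_1(\rho\|\tau),
\end{align}
and everything reduces to evaluating these two integrals explicitly.

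Next I would compute the second derivative of $F_{1/2}$. Specializing the formula for $F_\lambda''(x)$ from the proof of the previous corollary at $\lambda=\tfrac{1}{2}$ gives, after a small simplification, the clean expression $F_{1/2}''(x) = \tfrac{1}{2x(1+x)}$. Using the partial fraction $\tfrac{1}{2x(1+x)} = \tfrac{1}{2}\!\left(\tfrac{1}{x}-\tfrac{1}{1+x}\right)$ the first integral becomes $\tfrac{1}{2}\log\tfrac{s}{1+s}\big|_1^{e^{D_{\max}(\rho\|\sigma)}}$. For the second integrand, substitute $y=1/s$ or directly compute $\tfrac{1}{s^2}F_{1/2}''(1/s) = \tfrac{1}{2(s+1)}$, whose integral is $\tfrac{1}{2}\log(s+1)\big|_1^{e^{D_{\max}(\sigma\|\rho)}}$.

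Finally, I would add the two contributions. The first integral evaluates to $\tfrac{1}{2}\bigl(D_{\max}(\rho\|\sigma)-\log(1+e^{D_{\max}(\rho\|\sigma)})+\log 2\bigr)$ and the second to $\tfrac{1}{2}\bigl(\log(1+e^{D_{\max}(\sigma\|\rho)})-\log 2\bigr)$. The two $\log 2$ terms cancel, leaving
\begin{align}
\tfrac{1}{2}\left( D_{\max}(\rho\|\sigma) + \log\frac{e^{D_{\max}(\sigma\|\rho)}+1}{e^{D_{\max}(\rho\|\sigma)}+1}\right),
\end{align}
which is precisely the claimed prefactor of $E_1(\rho\|\tau)$. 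There is no real obstacle here beyond bookkeeping; the only mildly delicate step is the simplification of $F_{1/2}''$ into the product form $\tfrac{1}{2x(1+x)}$, which is what makes the antiderivatives elementary and the $\log 2$ cancellation happen cleanly.
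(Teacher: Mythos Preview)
Your proof is correct and follows essentially the same approach as the paper: the paper's one-line proof simply says ``This follows from Proposition~\ref{Prop:Cont-general} by considering $f(x)=\frac12(1+x)\log(\frac{2}{1+x}) + \frac12 x\log(x)$'', which is precisely the function $F_{1/2,1/2}$ you invoke via Proposition~\ref{Prop:D_F_lambda-equality}. Your version spells out the computation of $F_{1/2}''(x)=\tfrac{1}{2x(1+x)}$ and the two integrals explicitly, which the paper leaves to the reader, but the route is identical.
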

\begin{proof}
    This follows from Proposition~\ref{Prop:Cont-general} by considering $f(x)=\frac12(1+x)\log(\frac{2}{1+x}) + \frac12 x\log(x)$. 
\end{proof}
\begin{corollary}
    For any quantum states $\rho,\tau,\sigma$ and $\alpha \in (0, 1) \cup (1, \infty)$, we have,  
    \begin{align}
    \Hell_\alpha(\rho\|\sigma) - \Hell_\alpha(\tau\|\sigma) &\leq \frac{\alpha}{\alpha-1} \left( e^{(\alpha-1)D_{\max}(\rho\|\sigma)} - e^{(1-\alpha)D_{\max}(\sigma\|\rho)}  \right) E_1(\rho\|\tau).
\end{align} 
\end{corollary}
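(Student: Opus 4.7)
The plan is to apply Proposition~\ref{Prop:Cont-general} directly, with the function $f$ chosen so that $D_f = \Hell_\alpha$. Recall that for $f(x) = \frac{x^\alpha - 1}{\alpha - 1}$, one has $f(1) = 0$, $f$ is convex and twice differentiable on $(0, \infty)$, so $f \in \cF$, and moreover $f''(x) = \alpha x^{\alpha - 2}$. Substituting into the definition of $D_f$ and noting that $\gamma^{-3} f''(\gamma^{-1}) = \alpha \gamma^{-\alpha - 1}$ one recovers exactly the integral defining $\Hell_\alpha(\rho\|\sigma)$, so $\Hell_\alpha = D_f$ for this choice of $f$.

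The main work is then to evaluate the two integrals appearing on the right-hand side of Proposition~\ref{Prop:Cont-general}. The first gives
\begin{align}
    \int_1^{e^{D_{\max}(\rho\|\sigma)}} \alpha s^{\alpha - 2} \, ds = \frac{\alpha}{\alpha - 1}\left( e^{(\alpha - 1)D_{\max}(\rho\|\sigma)} - 1 \right),
\end{align}
and, using $\frac{1}{s^2} f''(1/s) = \alpha s^{-\alpha}$, the second gives
\begin{align}
    \int_1^{e^{D_{\max}(\sigma\|\rho)}} \alpha s^{-\alpha} \, ds = \frac{\alpha}{1 - \alpha}\left( e^{(1 - \alpha)D_{\max}(\sigma\|\rho)} - 1 \right).
\end{align}

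Adding the two contributions, the constant terms $\frac{\alpha}{\alpha - 1}$ and $-\frac{\alpha}{\alpha - 1}$ cancel, leaving the factor $\frac{\alpha}{\alpha - 1}\big( e^{(\alpha - 1)D_{\max}(\rho\|\sigma)} - e^{(1 - \alpha)D_{\max}(\sigma\|\rho)} \big)$ in front of $E_1(\rho\|\tau)$, which is precisely the claimed bound. The argument is uniform for $\alpha \in (0,1) \cup (1,\infty)$ since the sign of $\alpha - 1$ is absorbed into the algebra and both antiderivatives are valid away from $\alpha = 1$. Should $D_{\max}(\rho\|\sigma)$ be infinite, the right-hand side is vacuously $+\infty$ (for $\alpha > 1$; the statement is then trivial), and for $\alpha < 1$ the second integral is always finite regardless. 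There is no real obstacle here beyond the bookkeeping of the $\alpha - 1$ signs; the statement reduces to a direct computation once Proposition~\ref{Prop:Cont-general} is in hand.
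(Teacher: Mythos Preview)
Your proof is correct and follows exactly the approach of the paper: apply Proposition~\ref{Prop:Cont-general} with $f(x)=\frac{x^\alpha-1}{\alpha-1}$ and evaluate the resulting integrals. The paper's proof is a one-line invocation of that proposition, and your write-up simply fills in the elementary calculus that the paper leaves implicit.
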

\begin{proof}
    This follows from Proposition~\ref{Prop:Cont-general} by considering $f(x)=\frac{x^\alpha-1}{\alpha-1}$. 
\end{proof}

\subsection{Quantum differential privacy}

Differential privacy is criteria that gives provable privacy guarantees for the outputs of noisy channels. Quantum differential privacy was first defined in~\cite{zhou2017differential} and has since been investigated in numerous publications. Here we give a definition based on an equivalence result in~\cite{hirche2022quantum}. Consider a set of quantum states $\cD$ equipped with some neighbouring relation, denoted $\rho\sim\sigma$. We say a quantum channel $\cA$ is $(\epsilon,\delta)$-differentially private, in short $(\epsilon,\delta)$-DP, if its output hockey-stick divergence is bounded as below: 
\begin{align}\label{Eq:Def-edDP}
    \cA \text{ is $(\epsilon,\delta)$-DP } \iff \,\sup_{\rho\sim \sigma} E_{e^\epsilon}(\cA(\rho)\|\cA(\sigma) \leq\delta,
\end{align} 
where the supremum is over all neighbouring quantum states in $\cD$. 
A related stronger criterion is that of local differential privacy (LDP), a quantum generalization of which was introduced in~\cite{hirche2022quantum}. In short, 
\begin{align}\label{Eq:LDP-def}
    \text{$\cA$ is $(\epsilon,\delta)$-LDP} \iff \sup_{\rho,\sigma} E_{e^\epsilon}(\cA(\rho) \| \cA(\sigma) ) \leq \delta\,.
\end{align}
Here the supremum is taken over all states. We remark that, as pointed out in~\cite{hirche2022quantum}, one could similarly define it with a supremum over all pure states, in which case all following arguments remain the same.  

In this section we briefly discuss two applications of the integral representations in our work under differentially private quantum channels. 

\subsubsection{Hypothesis testing under local differential privacy} 

For convenience let us define the set
\begin{align}
    \cL_{\epsilon,\delta} := \{ \cA \,|\, \text{$\cA$ is $(\epsilon,\delta)$-LDP} \}. 
\end{align}
We begin by giving a input dependent bound on the $f$-divergence of a locally private channel. 
\begin{corollary}
Let $f\in\cF$ and $\varphi(\epsilon,\delta)=1-e^{-\epsilon}(1-\delta)$. Then 
    \begin{align}
        \sup_{\cA\in\cL_{\epsilon,\delta}} D_f(\cA(\rho) \| \cA(\sigma)) \leq \varphi(\epsilon,\delta) D_f(\rho \| \sigma), 
    \end{align}
    where the supremum is over all quantum channels $\cA$ that are $(\epsilon,\delta)$-LDP. 
\end{corollary}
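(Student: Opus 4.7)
The plan is to reduce the corollary to a bound on the Dobrushin contraction coefficient under local differential privacy. Specifically, I aim to show that every $\cA\in\cL_{\epsilon,\delta}$ satisfies $\eta_{\tr}(\cA)\leq\varphi(\epsilon,\delta)$, after which Lemma~\ref{Lem:ConBoundTrace} gives $\eta_f(\cA)\leq\eta_{\tr}(\cA)\leq\varphi(\epsilon,\delta)$, so that
\begin{align*}
D_f(\cA(\rho)\|\cA(\sigma))\leq\eta_f(\cA)\,D_f(\rho\|\sigma)\leq\varphi(\epsilon,\delta)\,D_f(\rho\|\sigma)
\end{align*}
uniformly over $\cA\in\cL_{\epsilon,\delta}$, and the supremum on the left-hand side is controlled by the same bound.

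The technical heart is the claim that for every $\cA\in\cL_{\epsilon,\delta}$ and every pair of states $\rho,\sigma$,
\begin{align*}
E_\gamma(\cA(\rho)\|\cA(\sigma))\leq\varphi(\epsilon,\delta)\qquad\text{for all }\gamma\geq 1.
\end{align*}
I would derive this by invoking the LDP condition with the roles of $\rho$ and $\sigma$ swapped: $E_{e^\epsilon}(\cA(\sigma)\|\cA(\rho))\leq\delta$ yields the operator inequality $\cA(\sigma)\leq e^\epsilon\cA(\rho)+(\cA(\sigma)-e^\epsilon\cA(\rho))_+$, which after dividing by $e^\epsilon$ gives the one-sided lower bound $\cA(\rho)\geq e^{-\epsilon}\cA(\sigma)-e^{-\epsilon}(\cA(\sigma)-e^\epsilon\cA(\rho))_+$. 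Using the variational formula $\tr X_+=\max_{0\leq M\leq\Id}\tr MX$ (which implies $\tr X_+\leq\tr Y_+$ whenever $X\leq Y$) together with the trace identity $\tr(\cA(\rho)-e^{-\epsilon}\cA(\sigma))=1-e^{-\epsilon}$, this produces $\tr(\cA(\rho)-e^{-\epsilon}\cA(\sigma))_+\leq 1-e^{-\epsilon}(1-\delta)=\varphi(\epsilon,\delta)$. For any $\gamma\geq 1\geq e^{-\epsilon}$, the operator inequality $\cA(\rho)-\gamma\cA(\sigma)\leq\cA(\rho)-e^{-\epsilon}\cA(\sigma)$ then upgrades this to the desired bound on $E_\gamma(\cA(\rho)\|\cA(\sigma))$.

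Finally, combining this bound with the representation $\eta_\gamma(\cA)=\sup_{\ket\Psi\perp\ket\Phi}E_\gamma(\cA(\ket\Psi\!\bra\Psi)\|\cA(\ket\Phi\!\bra\Phi))$ from~\cite{hirche2022quantum} and the fact that $E_\gamma(\ket\Psi\!\bra\Psi\|\ket\Phi\!\bra\Phi)=1$ for any $\gamma\geq 1$ and orthogonal pure states delivers $\eta_{\tr}(\cA)=\eta_1(\cA)\leq\varphi(\epsilon,\delta)$, which closes the argument.

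The main obstacle is extracting the sharp constant $\varphi(\epsilon,\delta)$: the direct decomposition $\cA(\rho)-\gamma\cA(\sigma)=(\cA(\rho)-e^\epsilon\cA(\sigma))+(e^\epsilon-\gamma)\cA(\sigma)$ combined with only $E_{e^\epsilon}(\cA(\rho)\|\cA(\sigma))\leq\delta$ yields only $E_\gamma\leq\delta+(e^\epsilon-\gamma)$ for $\gamma\leq e^\epsilon$, which is strictly weaker than $\varphi(\epsilon,\delta)$. The correct $e^{-\epsilon}(1-\delta)$ scaling only emerges once the \emph{swapped} LDP condition is used to obtain an operator lower bound of $\cA(\rho)$ in terms of $\cA(\sigma)$.
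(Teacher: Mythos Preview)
Your approach is essentially the same as the paper's: both combine Lemma~\ref{Lem:ConBoundTrace} with the bound $\eta_{\tr}(\cA)\leq\varphi(\epsilon,\delta)$ for $(\epsilon,\delta)$-LDP channels. The only difference is that the paper simply cites this last inequality from~\cite[Corollary~V.1]{hirche2022quantum}, whereas you supply a self-contained derivation of it via the swapped LDP condition and the pure-state formula for $\eta_\gamma(\cA)$; your argument for $\tr(\cA(\rho)-e^{-\epsilon}\cA(\sigma))_+\leq\varphi(\epsilon,\delta)$ is correct and indeed recovers exactly that cited result.
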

\begin{proof}
    The proof follows directly from combining our earlier Lemma~\ref{Lem:ConBoundTrace} with~\cite[Corollary V.1]{hirche2022quantum} which states that
    \begin{align}
        \eta_{\tr}(\cA) \leq \varphi(\epsilon,\delta)
    \end{align}
    for all $\cA$ that are $(\epsilon,\delta)$-LDP. 
\end{proof}
This includes in particular the relative entropy as a special case, 
    \begin{align}
        \sup_{\cA\in\cL_{\epsilon,\delta}} D(\cA(\rho) \| \cA(\sigma)) \leq \varphi(\epsilon,\delta) D(\rho \| \sigma). \label{Eq:RelEnt-LDP-bound} 
    \end{align}
In hypothesis testing we are concerned with determining which of a set of possible options is the true one. Here we will in particular look at asymmetric state discrimination. Given $n$ samples of either a state $\rho$ or a state $\sigma$ our task is to determine which one is present. Attempting to do so by the means of measuring by a POVM $\{M_i\}_{i=0,1}$ leads to two possible errors,
\begin{align}
\alpha_n = \tr (\Id-M_0)\rho^{\otimes n} \quad\text{and}\quad \beta_n=\tr M_0\sigma^{\otimes n},
\end{align}
called the Type 1 and Type 2 error, respectively. In the asymmetric setting, we are interested in the minimum Type 2 error given that the Type 1 error is bounded by a constant $0<\epsilon<1$, meaning the quantity
\begin{align}
    \beta_{n,\epsilon}(\rho,\sigma) := \min\{ \beta_n \;|\; 0\leq M_0 \leq \Id,\, \alpha_n\leq\epsilon \}. 
\end{align}
One can easily show that this quantity decays exponentially in $n$ and the optimal rate at which it does is given by the quantum Stein's Lemma~\cite{hiai1991proper,ogawa2000strong} which says that
\begin{align}
    \lim_{n\rightarrow\infty} -\frac{1}{n} \log \beta_{n,\epsilon}(\rho,\sigma) = D(\rho \| \sigma). 
\end{align}
The result in this work now allow for a converse bound on the achievable rate under local differential privacy. In particular this result is independent of the exact structure of the quantum channel, besides that it fulfils the LDP criterion. 
\begin{corollary}
    For any $(\epsilon,\delta)$-LDP quantum channel $\cA$, we have 
 \begin{align}
    \lim_{n\rightarrow\infty} -\frac{1}{n} \log \beta_{n,\epsilon}(\cA(\rho),\cA(\sigma)) \leq \varphi(\epsilon,\delta) D(\rho \| \sigma). 
\end{align}   
\end{corollary}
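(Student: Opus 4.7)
The plan is to combine the quantum Stein's lemma, as quoted just above the statement, with the output relative entropy bound from Equation~\eqref{Eq:RelEnt-LDP-bound}. Note that $\beta_{n,\epsilon}(\cA(\rho),\cA(\sigma))$ is defined using the $n$-fold tensor product of the fixed output states $\cA(\rho)$ and $\cA(\sigma)$ themselves; no tensor power of the channel appears, so no privacy amplification or group-privacy argument is required.

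First, I would apply quantum Stein's lemma~\cite{hiai1991proper,ogawa2000strong} directly to the pair $\omega_1 := \cA(\rho)$, $\omega_2 := \cA(\sigma)$ to conclude
\begin{align}
\lim_{n\rightarrow\infty} -\frac{1}{n} \log \beta_{n,\epsilon}(\cA(\rho),\cA(\sigma)) = D\big(\cA(\rho)\big\|\cA(\sigma)\big),
\end{align}
for every fixed $0<\epsilon<1$. Second, I would invoke Equation~\eqref{Eq:RelEnt-LDP-bound}, which asserts $D(\cA(\rho)\|\cA(\sigma)) \leq \varphi(\epsilon,\delta)\, D(\rho\|\sigma)$ whenever $\cA$ is $(\epsilon,\delta)$-LDP. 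Chaining the two gives the claimed inequality.

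There is essentially no obstacle at this step; the nontrivial content has already been packaged into Equation~\eqref{Eq:RelEnt-LDP-bound}, which in turn rests on the trace-distance contraction bound $\eta_{\tr}(\cA) \leq \varphi(\epsilon,\delta)$ for LDP channels from~\cite{hirche2022quantum} combined with our general comparison Lemma~\ref{Lem:ConBoundTrace} applied to $f(x) = x\log x$. The only expository choice is whether to present the result as a two-line composition or to emphasize that the $(\epsilon,\delta)$ parameters enter solely through the one-shot output relative entropy bound and therefore appear as an $n$-independent multiplicative factor in front of $D(\rho\|\sigma)$.
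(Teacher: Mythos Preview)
Your proposal is correct and matches the paper's own proof essentially verbatim: the paper simply states that the result is a direct consequence of combining the quantum Stein's lemma with Equation~\eqref{Eq:RelEnt-LDP-bound}. Your additional remarks tracing Equation~\eqref{Eq:RelEnt-LDP-bound} back to Lemma~\ref{Lem:ConBoundTrace} and the LDP contraction bound from~\cite{hirche2022quantum} are accurate and helpful context, though not strictly needed for the proof.
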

\begin{proof}
    This is direct consequence of combining the quantum Stein's Lemma with Equation~\eqref{Eq:RelEnt-LDP-bound}. 
\end{proof}
This implies that if states $\rho$ and $\sigma$ are subject to locally private noise, we need to scale $n$ by at least $\varphi(\epsilon,\delta)$ to reach the same error exponent as in the noiseless setting. The above generalizes the analogous classical result first shown in~\cite{asoodeh2021local}.

\subsubsection{Inequalities between divergences under $\epsilon$-differential privacy}

In this section we give some examples to illustrate how the inequalities between different divergences we gave in the previous sections can sometimes be very tight. As the setting choice we consider $\epsilon$-differentially private channels $\cA$. In this case the defining condition on the hockey-stick divergence in Equation~\eqref{Eq:Def-edDP} can be similarly expressed in terms of the $\max$-relative entropy~\cite{hirche2022quantum}, 
\begin{align}\label{Eq:Def-eDP}
    \cA \text{ is $\epsilon$-DP } &\iff \,\sup_{\rho\sim \sigma} E_{e^\epsilon}(\cA(\rho)\|\cA(\sigma)) =0 \\
    &\iff \,\sup_{\rho\sim \sigma} D_{\max}(\cA(\rho)\|\cA(\sigma)) \leq \epsilon. 
\end{align} 
In the following we will assume that the neighbouring condition is symmetric, meaning that $\rho\sim\sigma$ if and only if $\sigma\sim\rho$. This is the case for all definitions commonly used in the literature. 
We start by considering a reverse Pinsker-type inequality for the relative entropy under differentially private noise. 
\begin{corollary}\label{Cor:q-rel-pinsker-DP}
Let $\cA$ be $\epsilon$-differentially private and $\rho\sim\sigma$, then
\begin{align}
    D(\cA(\rho)\|\cA(\sigma)) \leq  \epsilon\, E_1(\cA(\rho)\|\cA(\sigma)). 
\end{align}
\end{corollary}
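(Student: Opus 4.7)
The plan is to apply the new reverse Pinsker inequality stated in Equation~\eqref{Eq:NewRevPin-Thompson} to the output states $\cA(\rho)$ and $\cA(\sigma)$, and then bound the Thompson metric directly using the differential privacy hypothesis.

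First I would recall that Equation~\eqref{Eq:NewRevPin-Thompson} gives, for any pair of quantum states $\alpha, \beta$,
\begin{align}
    D(\alpha\|\beta) \leq \Xi(\alpha\|\beta)\, E_1(\alpha\|\beta),
\end{align}
where $\Xi(\alpha\|\beta) = \max\{D_{\max}(\alpha\|\beta), D_{\max}(\beta\|\alpha)\}$. Specializing to $\alpha = \cA(\rho)$ and $\beta = \cA(\sigma)$ reduces the problem to bounding $\Xi(\cA(\rho)\|\cA(\sigma))$ by $\epsilon$.

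Next I would invoke the characterization of $\epsilon$-DP in Equation~\eqref{Eq:Def-eDP}, which says that $D_{\max}(\cA(\rho)\|\cA(\sigma)) \leq \epsilon$ whenever $\rho \sim \sigma$. Using the assumption (stated in the text preceding the corollary) that the neighbouring relation is symmetric, we also have $\sigma \sim \rho$, and hence $D_{\max}(\cA(\sigma)\|\cA(\rho)) \leq \epsilon$. Taking the maximum yields $\Xi(\cA(\rho)\|\cA(\sigma)) \leq \epsilon$, and plugging this into the reverse Pinsker bound above gives the claim. There is no real obstacle here — the corollary is essentially just the observation that $\epsilon$-DP controls both directions of the $\max$-relative entropy simultaneously, which is exactly the quantity appearing in the Thompson-metric form of the reverse Pinsker inequality derived earlier in Section~\ref{Sec:RevPinskerIneq}.
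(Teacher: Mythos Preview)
Your proposal is correct and follows exactly the same route as the paper: apply the Thompson-metric reverse Pinsker inequality~\eqref{Eq:NewRevPin-Thompson} to the output states and use the $\epsilon$-DP characterization~\eqref{Eq:Def-eDP} together with symmetry of the neighbouring relation to bound $\Xi(\cA(\rho)\|\cA(\sigma)) \leq \epsilon$. The paper's own proof is a one-line remark to this effect.
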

\begin{proof}
    The proof follows directly from Equation~\eqref{Eq:NewRevPin-Thompson}, more precisely by bounding $\Xi(\cA(\rho)\|\cA(\sigma)\leq\epsilon$.  
\end{proof}
In the same way also all the other reverse Pinsker-type inequalities for different divergences in Section~\ref{Sec:RevPinskerIneq} can be specialized to differentially private noise. 
We can also apply the same idea to the continuity bounds in Section~\ref{Sec:ContinuityFirst}, giving us results akin to the following. 
\begin{corollary}\label{Cor:D-continuity-DP}
Let $\cA$ be $\epsilon$-differentially private, $\tau$ some quantum state and $\rho\sim\sigma$, then  
    \begin{align}
        D(\cA(\rho)\|\cA(\sigma)) - D(\tau\|\cA(\sigma)) &\leq 2\epsilon E_1(\cA(\rho)\|\tau).
    \end{align} 
\end{corollary}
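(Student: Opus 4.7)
The plan is to invoke the general continuity bound from Corollary~\ref{Cor:D-continuity}, applied to the triple of states $(\cA(\rho), \cA(\sigma), \tau)$, which immediately yields
\begin{align}
    D(\cA(\rho)\|\cA(\sigma)) - D(\tau\|\cA(\sigma)) \leq \Omega(\cA(\rho)\|\cA(\sigma))\, E_1(\cA(\rho)\|\tau).
\end{align}
The only remaining task is to bound the Hilbert projective metric $\Omega(\cA(\rho)\|\cA(\sigma)) = D_{\max}(\cA(\rho)\|\cA(\sigma)) + D_{\max}(\cA(\sigma)\|\cA(\rho))$ by $2\epsilon$.

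By the characterization of $\epsilon$-differential privacy recalled in the beginning of this subsection (see Equation~\eqref{Eq:Def-eDP}), for any pair of neighbouring inputs we have $D_{\max}(\cA(\rho)\|\cA(\sigma)) \leq \epsilon$. I would then use the assumption that the neighbouring relation is symmetric (explicitly stated just before the corollary) to also conclude $D_{\max}(\cA(\sigma)\|\cA(\rho)) \leq \epsilon$, simply by swapping the roles of $\rho$ and $\sigma$ in the DP condition. Summing the two gives $\Omega(\cA(\rho)\|\cA(\sigma)) \leq 2\epsilon$ and substituting into the previous display completes the proof.

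There is really no obstacle here: both ingredients, namely the continuity bound of Corollary~\ref{Cor:D-continuity} and the $D_{\max}$-based reformulation of $\epsilon$-DP, have already been established in the excerpt. The only point worth stating carefully is the appeal to the symmetry of the neighbouring relation, since without it one would only control one of the two $D_{\max}$ terms and obtain the weaker constant $\epsilon + D_{\max}(\cA(\sigma)\|\cA(\rho))$ rather than $2\epsilon$.
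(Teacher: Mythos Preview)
Your proof is correct and matches the paper's approach exactly: the paper simply says ``This follows directly from Corollary~\ref{Cor:D-continuity},'' and you have spelled out precisely that application together with the bound $\Omega(\cA(\rho)\|\cA(\sigma))\leq 2\epsilon$ coming from the $D_{\max}$ formulation of $\epsilon$-DP and the symmetry of the neighbouring relation.
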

\begin{proof}
    This follows directly from Corollary~\ref{Cor:D-continuity}. 
\end{proof}
We leave it to the reader to apply this idea to their other favourite inequalities from the previous sections.

\subsection{Bounds on amortized channel divergences}

Channel divergences take the role of the usual divergences when we are interested in distance measures between channels. While they are defined generally for arbitrary quantum state divergences, we consider here primarily versions based on the new $f$-divergences. The most common way of defining a channel divergence is as
\begin{align}
    D_f(\cN\|\cM) := \sup_{\rho} D_f(\cN\otimes\id(\rho)\|\cM\otimes\id(\rho)),
\end{align}
where the supremum is over all bipartitie states $\rho$ with arbitrary sized reference system. Note that this is also often called a stabilized  channel divergence, because of the inclusion of an additional reference system. Analogous, one can define non-stabilized versions of this and all the following definitions. This does not substantially change the results presented here and we omit that setting for brevity.  

An alternative way of defining a channel divergence are the so-called amortized channel divergences that have recently come up in the context of channel discrimination~\cite{berta2019stein,wilde2020amortized}. For the case of the new $f$-divergence,  
\begin{align}
    D_f^A(\cN\|\cM) := \sup_{\rho,\sigma}D_f(\cN\otimes\id(\rho)\|\cM\otimes\id(\sigma))-D_f(\rho\|\sigma).
\end{align}
While for the standard channel divergence one can often give bounds on the size of the needed reference system, this is much more complicated for the amortized channel divergence. In fact, even for common divergences like the relative entropy no such dimension bound is known when amortization is used. 
One sees directly that
\begin{align}
    D_f(\cN\|\cM) \leq D_f^A(\cN\|\cM),
\end{align}
however finding upper bounds on the amortized channel divergences can often be challenging. First, note that one can easily get bounds on $D_f(\cN\|\cM)$ in terms of $E_1(\cN\|\cM)$ by extending our earlier reverse Pinsker-type inequalities to quantum channels. 
We remark that $E_1(\cN\|\cM)$ is the well known diamond norm that plays a fundamental role in quantum channel discrimination, i.e.
\begin{align}
    E_1(\cN\|\cM) = \sup_{\rho} \|\cN\otimes\id(\rho) - \cM\otimes\id(\rho)\|_{\tr} =: \|\cN-\cM\|_{\diamond}. 
\end{align} 
In the following we will see how the earlier results can be extended to also bound amortized channel divergences. First note that it was already shown in~\cite{wilde2020amortized} that amortization does not increase the trace distance, meaning, in our notation, 
\begin{align}
    E^{A}_1(\cN\|\cM) = E_1(\cN\|\cM). 
\end{align}
The first observation is the following. 
\begin{lemma}
    For quantum channels $\cN,\cM$ and $\gamma\geq 1$, we have, 
    \begin{align}
        E_\gamma(\cN(\rho)\|\cM(\sigma)) &\leq E_1(\cN(\rho)\|\cM(\rho)) + E_\gamma(\rho\|\sigma), \label{Eq:DataProcTriangle}
    \end{align}
    and as a consequence
    \begin{align}
        E^A_\gamma(\cN\|\cM) \leq E_1(\cN\|\cM).
    \end{align}
\end{lemma}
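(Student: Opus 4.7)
The first inequality is a direct application of the triangle inequality for the hockey-stick divergence (item 10 of Lemma~\ref{Lem:HS-Properties}) followed by data processing. Specifically, I would invoke the triangle inequality
\begin{align}
    E_{\gamma_1\gamma_2}(A\|C) \leq E_{\gamma_1}(A\|B) + \gamma_1 E_{\gamma_2}(B\|C)
\end{align}
with $\gamma_1 = 1$, $\gamma_2 = \gamma$, $A = \cN(\rho)$, $B = \cM(\rho)$, $C = \cM(\sigma)$. This yields
\begin{align}
    E_{\gamma}(\cN(\rho)\|\cM(\sigma)) \leq E_{1}(\cN(\rho)\|\cM(\rho)) + E_{\gamma}(\cM(\rho)\|\cM(\sigma)),
\end{align}
and then the DPI (item 6) applied to $\cM$ bounds $E_\gamma(\cM(\rho)\|\cM(\sigma))\leq E_\gamma(\rho\|\sigma)$, giving \eqref{Eq:DataProcTriangle}.

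For the amortized consequence, the plan is to apply the first inequality to the extended channels $\cN\otimes\id$ and $\cM\otimes\id$, evaluated on arbitrary bipartite states $\rho_{AR}, \sigma_{AR}$. This gives
\begin{align}
    E_\gamma\bigl((\cN\otimes\id)(\rho)\,\big\|\,(\cM\otimes\id)(\sigma)\bigr) - E_\gamma(\rho\|\sigma) \leq E_1\bigl((\cN\otimes\id)(\rho)\,\big\|\,(\cM\otimes\id)(\rho)\bigr).
\end{align}
Since the right-hand side is bounded by the diamond-norm-type stabilized quantity $E_1(\cN\|\cM)$ by definition (the supremum being over all reference systems and input states), taking the supremum over $\rho,\sigma$ on the left yields $E^A_\gamma(\cN\|\cM) \leq E_1(\cN\|\cM)$.

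No step looks like a genuine obstacle: the entire argument is a clean combination of the triangle inequality and DPI for the hockey-stick divergence, both of which have already been recorded in Lemma~\ref{Lem:HS-Properties}. The only small thing to watch is that the triangle inequality is stated for $\gamma_1,\gamma_2 \geq 1$, which is exactly satisfied by our choice $\gamma_1=1$, $\gamma_2=\gamma\geq 1$, so no extension of prior results is needed.
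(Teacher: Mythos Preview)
Your proposal is correct and matches the paper's proof essentially line for line: the paper also applies the triangle inequality with $\gamma_1=1$, $\gamma_2=\gamma$ and the intermediate state $\cM(\rho)$, followed by DPI for $\cM$, and then deduces the amortized bound directly from the definition. Your added remark about applying the first inequality to $\cN\otimes\id$ and $\cM\otimes\id$ just spells out what the paper leaves implicit.
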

\begin{proof}
    The first statement follows from the triangle inequality for the hockey-stick divergence and data processing, 
    \begin{align}
        E_\gamma(\cN(\rho)\|\cM(\sigma)) &\leq E_1(\cN(\rho)\|\cM(\rho))+ E_\gamma(\cM(\rho)\|\cM(\sigma)) \\
        &\leq E_1(\cN(\rho)\|\cM(\rho))+ E_\gamma(\rho\|\sigma). 
    \end{align}
    The second statement follows by the definition of amortization. 
\end{proof}
We can use this to make the following general statement. 
\begin{corollary}
    For any quantum channels $\cN,\cM$ and  $f\in\cF$, we have, 
    \begin{align}
    D^{A}_f(\cN\|\cM) &\leq E_1(\cN\|\cM) \, \xi(f), 
    \end{align}
    with 
    \begin{align}
        \xi(f) = \int_1^{\infty} f''(\gamma) + \gamma^{-3} f''(\gamma^{-1}) \d\gamma
    \end{align}
\end{corollary}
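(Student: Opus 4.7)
The plan is to combine the integral representation of $D_f$ with the preceding lemma applied pointwise in $\gamma$, which will make the $D_f(\rho\|\sigma)$ term on the right-hand side appear naturally and cancel the subtraction in the definition of the amortized divergence.

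First I would fix arbitrary states $\rho, \sigma$ on the joint input system (input plus reference) and write
\begin{align}
D_f(\cN\otimes\id(\rho)\|\cM\otimes\id(\sigma)) = \int_1^\infty \!f''(\gamma)\,E_\gamma(\cN\otimes\id(\rho)\|\cM\otimes\id(\sigma))\,\d\gamma + \int_1^\infty\!\gamma^{-3}f''(\gamma^{-1})\,E_\gamma(\cM\otimes\id(\sigma)\|\cN\otimes\id(\rho))\,\d\gamma .
\end{align}
To each hockey-stick term I apply Equation~\eqref{Eq:DataProcTriangle}, inserting the intermediate state $\cM\otimes\id(\rho)$ in the first integrand and $\cN\otimes\id(\sigma)$ in the second, so that the ``channel mismatch'' piece becomes a trace-distance term and the remaining piece is a hockey-stick divergence of the inputs (after data processing through $\cM$ or $\cN$ respectively). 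This yields
\begin{align}
E_\gamma(\cN\otimes\id(\rho)\|\cM\otimes\id(\sigma)) &\leq E_1(\cN\otimes\id(\rho)\|\cM\otimes\id(\rho)) + E_\gamma(\rho\|\sigma) \leq E_1(\cN\|\cM) + E_\gamma(\rho\|\sigma),\\
E_\gamma(\cM\otimes\id(\sigma)\|\cN\otimes\id(\rho)) &\leq E_1(\cN\|\cM) + E_\gamma(\sigma\|\rho),
\end{align}
where the final bound in each line uses the definition of the diamond-norm style quantity $E_1(\cN\|\cM)$ and the symmetry $E_1(\alpha\|\beta) = E_1(\beta\|\alpha)$ for states.

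Plugging these two estimates back into the integral representation, the two $E_1(\cN\|\cM)$ contributions combine into $E_1(\cN\|\cM) \cdot \xi(f)$, while the remaining hockey-stick terms reassemble exactly into $D_f(\rho\|\sigma)$. Rearranging gives
\begin{align}
D_f(\cN\otimes\id(\rho)\|\cM\otimes\id(\sigma)) - D_f(\rho\|\sigma) \leq E_1(\cN\|\cM)\,\xi(f),
\end{align}
and taking the supremum over $\rho,\sigma$ delivers the claim. The only subtle point is to ensure the rearrangement is legitimate; this is guaranteed whenever $D_f(\rho\|\sigma)$ is finite, and when it is infinite the desired inequality is either trivial (if $D_f$ is also infinite on the output side, in which case the difference is ill-defined and excluded from the supremum by convention) or irrelevant, so the amortized bound is unaffected.

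The main obstacle is a bookkeeping one: one has to be certain that Equation~\eqref{Eq:DataProcTriangle} is applied with the right intermediate state so that the $E_\gamma$ piece left behind is between the inputs $\rho$ and $\sigma$ (not between the channel outputs, which would be useless), and one must check that $\xi(f)$ is the correct normalisation after combining the two integrals—this just amounts to pulling the $\gamma$-independent constant $E_1(\cN\|\cM)$ out of each integral. No further analytic input is needed; the proof is essentially a direct consequence of the triangle inequality for $E_\gamma$ combined with joint data processing.
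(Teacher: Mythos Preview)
Your argument is correct and matches the paper's approach: the paper's proof is the one-line remark that the claim follows from Equation~\eqref{Eq:DataProcTriangle} and the definition of $D^A_f$, and you have simply written out the details. The only additional point you supply beyond the paper is the symmetric variant of Equation~\eqref{Eq:DataProcTriangle} needed for the $E_\gamma(\cM(\sigma)\|\cN(\rho))$ term (intermediate state $\cN\otimes\id(\sigma)$ together with $E_1(\cM(\sigma)\|\cN(\sigma))=E_1(\cN(\sigma)\|\cM(\sigma))\leq E_1(\cN\|\cM)$), which is indeed required and follows by the same triangle-inequality-plus-DPI reasoning.
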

\begin{proof}
    The claim is a direct consequence of Equation~\ref{Eq:DataProcTriangle} and the definition of $D^{A}_f(\cN\|\cM)$. This is similar to the proof of Proposition~\ref{Prop:Rev-Q-Pinsker}.
\end{proof}
We point out that while on the left hand side we generally do not have a bound on the needed size for the reference system, on the right hand side the optimization over states can be restricted to pure states and hence the dimension of the reference system can be bounded. 

Based on the corollary one simply needs to check for functions for which the integral above is finite. A potentially interesting example is given, for $0<\alpha< 1$, by 
\begin{align}
    \Hell^A_\alpha(\cN\|\cM) \leq \frac{1}{1-\alpha} E_1(\cN\|\cM).
\end{align}
Recall the close relationship between this divergence and the Petz \Renyi divergence. However, whether this could be helpful in finding bounds on the amortized channel Petz \Renyi divergence remains unclear. Similarly, we can use the same approach to give a bound on the armortized Jensen Shannon divergence, 
\begin{align}
    JS^A(\cN\|\cM) \leq \log{(2)}\, E_1(\cN\|\cM).
\end{align}

\section{Conclusions} 

In this work we gave a new definition of a quantum $f$-divergence, inspired by an integral representation of the classical $f$-divergence in terms of hockey-stick divergences. Perhaps surprisingly, the new $f$-divergence has a number of desirable properties. In particular, the standard Umegaki relative entropy is a special case of our definition for $f(x)=x\log x$ and the standard construction of R\'enyi divergences using $f_\alpha(x)=\frac{x^\alpha-1}{\alpha-1}$ together with regularization yields the Petz \Renyi divergence for $0\leq\alpha\leq1$ and the sandwiched \Renyi divergence for $\alpha\geq 1$. Especially the latter observation is remarkable as it unifies the two different quantum \Renyi divergences for the respective parameter ranges where they are commonly found useful. 

Additionally, the new $f$-divergences mirror the behaviour of their classical counterpart when it comes to SDPI constants and contraction coefficients in many regards. Previous definitions of $f$-divergences are not known to have many of these properties, despite significant effort to prove them. Finally, we were able to exploit the new integral representations to prove a plethora of new inequalities between different divergences that we hope are of independent interest.

Thus, at least in some ways, our initially naive way of defining a new quantum $f$-divergence has proven to be a more natural generalization of the classical quantity than those previously considered and we expect it will find additional useful applications in the future.

Nevertheless, there are many open problems left to investigate.  In particular, any results that hold for classical $f$-divergences but have so far avoided generalization to the quantum setting are a natural starting point. Two particular open problems of a more technical nature we would like to mention are the following. First, from our discussion it appears that the relative entropy plays a somewhat special role. We get a closed additive expression directly from the integral representation, while the new $D_\alpha$ is not additive for non-commuting states. Finding more functions for which the new $f$-divergence has a closed form expression would be interesting. Also a direct proof of the additivity of the relative entropy solely based on the integral representation might be enlightening. 
The second open problem is concerned with hypothesis testing. Clearly our proof techniques in relating the new $f$-divergence to the Petz and sandwiched \Renyi divergences are closely connected to the study of error and strong converse exponents in asymptotic binary quantum hypothesis testing. Could one give a proof of the error and strong converse exponents directly by using $D_\alpha$ and have the usual \Renyi divergences emerge from its regularization? This would give an operational significance to the new quantum \Renyi divergence.

\section*{Acknowledgments} 
We would like to thank Salman Beigi for helpful discussions. CH also thanks Daniel Stilck Fran\c ca for insightful exchanges. 
This project has received funding from the European Union's Horizon 2020 research and innovation programme under the Marie Sklodowska-Curie Grant Agreement No. H2020-MSCA-IF-2020-101025848. MT and CH are supported by the National Research Foundation, Singapore and A*STAR under its CQT
Bridging Grant. MT is also supported by the National Research Foundation, Singapore and A*STAR under its Quantum Engineering Programme (NRF2021-QEP2-01-P06).

\appendix

\section{Appendix: Direct proof of classical integral representation}\label{App:Direct-Proof}

As described in the main text,~\cite{sason2016f} gives a an integral representation of $f$-divergences with $f\in\cF$ based on an earlier representation in~\cite{liese2006divergences}. The latter was originally shown without the twice differentiable condition that is part of $\cF$ and has therefore a more complicated form. 
The goal is to give a simple self-contained argument for the readers convenience. Here we want to briefly show that the exact argument from~\cite{liese2006divergences} leads to the integral representation in~\cite{sason2016f} when specialized to twice differentiable functions. The special case of the formulation in~\cite{liese2006divergences} can then be found by appropriate substitution, essentially reversing the proof in~\cite{sason2016f}. 

The main ingredient is the Taylor formula for a twice continuously differentiable function $f$
\begin{align}
    f(b) = f(a) + f'(a)(b-a) + \int_a^b (b-s) f''(s) \,\d s. 
\end{align}
We write the $f$-divergence as
\begin{align}
    D_f(P,Q) &= \int q f\left(\frac{p}{q}\right) \d\mu \\
    &= \int q \int_1^\frac{p}{q} \left(\frac{p}{q}-s\right) f''(s)\, \d s\, \d\mu \\
    &= \int  \int_1^\frac{p}{q} (p-sq) f''(s)\, \d s\, \d\mu \\
    &= \int \left[ \int_1^\infty (p-sq)_+ f''(s)\, \d s + \int_0^1 (sq-p)_+ f''(s) ds \right] \, \d\mu \\
    &= \int_1^\infty E_s(P\|Q) f''(s)\, \d s + \int_0^1 s E_{\frac{1}{s}}(Q\|P) f''(s) \,\d s \\
    &= \int_1^\infty \left[ E_s(P\|Q) f''(s) + \frac{1}{s^3} E_{s}(Q\|P) f''\left(\frac{1}{s}\right) \right] \d s,
\end{align}
which is the integral representation given in~\cite{sason2016f}. In the above the main work happens in the third equality where we split by cases $p < q$ and $p>q$, noting that $(p-sq)_+=0$ for all $s\geq \frac{p}{q}$ and similarly $(sq-p)_+=0$ for all $s\leq \frac{p}{q}$. The forth equality is by definition of the hockey-stick divergence. For the final equality we substitute $t=\frac1s$ in the second integral.

\section{Appendix: Scaling of $D_f(\rho\|\sigma)$} \label{App:Scaling}

Throughout the discussion in the main text we have considered the $f$-divergence $D_f(\rho\|\sigma)$ with two quantum states as inputs. In particular, this restricts the inputs to be normalized as $\tr\rho=\tr\sigma=1$. In this section we discuss dropping this restriction and consider two, possibly non-normalized, positive semi-definite hermitian operators $A$ and $B$. In this case we propose to generalize the definition of the new $f$-divergence as follows, 
\begin{align}
    D_f(A\| B) = \tr(A-B) + \int_1^\infty \left(f''(\gamma) \widetilde E_\gamma(A\|B) + \gamma^{-3} f''(\gamma^{-1}) \widetilde E_\gamma(B\|A)\right) \d\gamma,
\end{align}
where $\widetilde E_\gamma(A\|B)=E_\gamma(A\|B)+(\tr(A-\gamma B))_+$. Note that this substitution is necessary because of how we defined the Hockey-Stick divergence in Equation~\eqref{Eq:Def-HockeyStick}. Dropping the second term there would allow us here to use $E_\gamma$ directly, but would also lead to a more complicated symmetry relation for the Hockey-Stick divergence. 
That this definition captures correctly the expected behaviour of the relative entropy is a direct consequence of the result in~\cite{frenkel2022integral}. 

A common applications is the scaling of the inputs. Meaning for positive constants $a$, $b$ we are interested in the behaviour of $D_f(a\,\rho\| b\,\sigma)$. This also serves as a sanity check for the proposed generalization. 
In the following we give a convenient alternative expression of this divergence. 
\begin{proposition}
   For $a,b>0$, $f\in\cF$ and $\rho, \sigma$ quantum states, we have
   \begin{align}
       D_f(a\,\rho\| b\,\sigma) = (a-b) + b\, D_{F_{a,b}}(\rho\|\sigma) + b \int_{\frac{b}{a}}^1 F_{a,b}''\left(\hat\gamma\right)  (1-\gamma) d\hat\gamma , 
   \end{align}
   where $F_{a,b}(x)=f(\frac{a}{b}x)$. 
\end{proposition}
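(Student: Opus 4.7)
The plan is to start from the generalized definition
\begin{align}
D_f(a\rho\|b\sigma) = (a-b) + \int_1^\infty f''(\gamma)\widetilde{E}_\gamma(a\rho\|b\sigma)\,\d\gamma + \int_1^\infty \gamma^{-3}f''(\gamma^{-1})\widetilde{E}_\gamma(b\sigma\|a\rho)\,\d\gamma,
\end{align}
and reduce each integral to one involving only the states $\rho,\sigma$ and the function $F_{a,b}$. The positive homogeneity $\widetilde{E}_\gamma(a\rho\|b\sigma) = a\,\tr(\rho-\tfrac{\gamma b}{a}\sigma)_+$ suggests the substitution $u=\gamma b/a$ in the first integral. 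Combined with the chain rule $F_{a,b}''(u) = (a/b)^2 f''(au/b)$, this turns the first integral into $I_1 = b\int_{b/a}^\infty F_{a,b}''(u)\,\tr(\rho-u\sigma)_+\,\d u$.

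For the second integral, I would substitute $v=\gamma a/b$ and then $u=1/v$, and use the exchange-of-arguments identity $E_\gamma(\sigma\|\rho)=\gamma E_{1/\gamma}(\rho\|\sigma)$ from Equation~\eqref{Eq:HS-exchange} (which, as noted in the proof of Lemma~\ref{Lem:HS-Properties}, extends to the positive-operator setting relevant here). After simplification this collapses to $I_2 = b\int_0^{b/a} F_{a,b}''(u)\,\tr(\rho-u\sigma)_+\,\d u$ when the resulting expression is written in terms of $\tr(\rho-u\sigma)_+$ (equivalently, one can work with $\tr(\sigma-v\rho)_+$ and then reflect).

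The next step is to convert $\tr(\rho-u\sigma)_+ = E_u(\rho\|\sigma)+(1-u)_+$ inside both pieces. The $E_u(\rho\|\sigma)$ contributions from $I_1$ (over $u\in[b/a,\infty)$) and from $I_2$ (over $u\in[0,b/a]$) glue to $b\int_0^\infty F_{a,b}''(u)\,E_u(\rho\|\sigma)\,\d u = b\, D_{F_{a,b}}(\rho\|\sigma)$, by the short form of the $f$-divergence established right after the definition. The $(1-u)_+$ contribution is nonzero only on $u\in[\min(b/a,1),1]$, producing exactly the residual term $b\int_{b/a}^1 F_{a,b}''(\hat\gamma)(1-\hat\gamma)\,\d\hat\gamma$ (with the convention that this is a signed integral when $b/a>1$).

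The main obstacle is the case distinction $a\geq b$ versus $a<b$: the split point $u=1$ sits on opposite sides of $u=b/a$ in the two regimes, so whether the correction $(1-u)_+$ appears in $I_1$ or in $I_2$ flips. A careful accounting shows that after collecting signs, both cases produce the same residual $b\int_{b/a}^1 F_{a,b}''(u)(1-u)\,\d u$, where the integral is interpreted as $-\int_1^{b/a}$ when $a<b$. Verifying that the trace-preserving constant $a-b$ is unchanged under these manipulations is immediate, and the rest of the computation is a change-of-variables exercise.
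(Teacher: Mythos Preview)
Your approach is essentially the paper's: the same substitutions $\hat\gamma=\tfrac{b}{a}\gamma$ and $\hat\gamma=\tfrac{a}{b}\gamma$, the same exchange-of-arguments identity, and the same separation into $b\,D_{F_{a,b}}(\rho\|\sigma)$ plus a residual. The only difference is ordering: the paper splits $\widetilde E_\gamma$ into its $E_\gamma$ part and its $(\tr(\cdot))_+$ part at the outset and treats them separately, whereas you keep $\widetilde E_\gamma=\tr(\cdot)_+$ intact and split at the end. That is cosmetic.

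There is, however, one concrete slip in your sketch. The exchange identity in Equation~\eqref{Eq:HS-exchange} is for $E_\gamma$, not for $\widetilde E_\gamma=\tr(\cdot)_+$. After your substitutions the second integral becomes
\[
b\int_0^{b/a} F_{a,b}''(u)\,u\,\tr\!\big(\sigma-\tfrac{1}{u}\rho\big)_+\,\d u,
\]
and since $\tr(u\sigma-\rho)_+=\tr(\rho-u\sigma)_+-(1-u)$ one actually obtains
\[
I_2=b\int_0^{b/a} F_{a,b}''(u)\big[\tr(\rho-u\sigma)_+-(1-u)\big]\,\d u,
\]
not the formula you stated. The missing $-(1-u)$ is precisely what localises the residual to $b\int_{b/a}^{1}F_{a,b}''(u)(1-u)\,\d u$; with your stated $I_2$ the residual would come out as $b\int_0^1 F_{a,b}''(u)(1-u)\,\d u$, independent of $a,b$, which is wrong. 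Once this term is restored, your gluing and case analysis go through exactly as you describe.
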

\begin{proof}
First consider, 
\begin{align}
     &\int_1^\infty \left(f''(\gamma) E_\gamma(a\,\rho\|b\,\sigma) + \gamma^{-3} f''(\gamma^{-1})E_\gamma(b\,\sigma\|a\,\rho)\right) \d\gamma \\
     &= \int_1^\infty \left(f''(\gamma) a E_{\frac{b}{a}\gamma}(\rho\|\sigma) + \gamma^{-3} f''(\gamma^{-1})b E_{\frac{a}{b}\gamma}(\sigma\|\rho)\right) \d\gamma \\
     &= \int_{\frac{b}{a}}^\infty f''(\frac{a}{b}\hat\gamma) \frac{a^2}{b} E_{\hat\gamma}(\rho\|\sigma) d\hat\gamma + \int_{\frac{a}{b}}^\infty \frac{a^2}{b} \hat\gamma^{-3} f''(\frac{a}{b}\hat\gamma^{-1}) E_{\hat\gamma}(\sigma\|\rho) d\hat\gamma \\
     &=  \frac{a^2}{b} \int_{1}^\infty \left( f''(\frac{a}{b}\hat\gamma) E_{\hat\gamma}(\rho\|\sigma)  +  \hat\gamma^{-3} f''(\frac{a}{b}\hat\gamma^{-1}) E_{\hat\gamma}(\sigma\|\rho) \right) d\hat\gamma \nonumber\\
     &\qquad\quad + \int_{\frac{b}{a}}^1 f''(\frac{a}{b}\hat\gamma) \frac{a^2}{b} E_{\hat\gamma}(\rho\|\sigma) d\hat\gamma + \int_{\frac{a}{b}}^1 \frac{a^2}{b} \hat\gamma^{-3} f''(\frac{a}{b}\hat\gamma^{-1}) E_{\hat\gamma}(\sigma\|\rho) d\hat\gamma \\
     &= b D_{F_{a,b}}(\rho\|\sigma) + \int_{\frac{b}{a}}^1 f''(\frac{a}{b}\hat\gamma) \frac{a^2}{b} E_{\hat\gamma}(\rho\|\sigma) d\hat\gamma - \int_{\frac{b}{a}}^1 \frac{a^2}{b} \gamma f''(\frac{a}{b}\gamma) E_{\gamma^{-1}}(\sigma\|\rho) \d\gamma     \\
     &=  b D_{F_{a,b}}(\rho\|\sigma) 
\end{align}
where the first equality is by definition, the second because $a,b>0$, the third by substituting $\hat\gamma=\frac{b}{a}\gamma$ and $\hat\gamma=\frac{a}{b}\gamma$ in the first and second integral, respectively. The forth is by shifting the integral boarders. In the fifth equality we have defined $F_{a,b}(x)=f(\frac{a}{b}x)$ and substituted $\gamma=\hat\gamma^{-1}$ in the final integral. For the sixth equality we have used $E_\gamma(\rho\|\sigma) = \gamma E_{\frac1{\gamma}}(\sigma\|\rho)$, which was shown in Lemma~\ref{Lem:HS-Properties}, see also~\cite{hirche2022quantum}. 
Next, consider, 
\begin{align}
     &\int_1^\infty \left(f''(\gamma) (\tr(a\rho-\gamma b\sigma))_+ + \gamma^{-3} f''(\gamma^{-1})(\tr(b\sigma-\gamma a\rho))_+\right) \d\gamma \\
     &= \int_1^\infty \left(f''(\gamma) (a-\gamma b)_+ + \gamma^{-3} f''(\gamma^{-1})(b-\gamma a)_+\right) \d\gamma \\
     &= \int_1^\infty f''(\gamma) (a-\gamma b)_+ + \int_0^1 f''(\gamma)(\gamma b- a)_+ \d\gamma.
\end{align}
Now if $a\leq b$ then the first term is always zero, and if $a\geq b$ then the second term is always zero. We continue with the second case and the first one works similarly, 
\begin{align}
    \int_1^\infty f''(\gamma) (a-\gamma b)_+ = \int_1^\frac{a}{b} f''(\gamma) (a-\gamma b)
    = \frac{a^2}{b} \int_\frac{b}{a}^1 f''(\frac{a}{b}\gamma) (1-\gamma ) 
    = b \int_\frac{b}{a}^1 F_{a,b}''(\gamma) (1-\gamma ).
\end{align}
Putting all terms together gives the claimed result. 
\end{proof}
To provide some examples we give the following corollary. 
\begin{corollary}
    For $a,b>0$ and $\rho, \sigma$ quantum states, we have
    \begin{align}
        D(a\,\rho\| b\,\sigma) &= a D(\rho\|\sigma) + a\log\frac{a}{b}, \\
        \Hell_\alpha(a\,\rho\| b\,\sigma) &= \frac1{\alpha-1}[a^\alpha b^{1-\alpha}-a]+a^\alpha b^{1-\alpha} \Hell_\alpha(\rho\| \sigma). 
    \end{align}
\end{corollary}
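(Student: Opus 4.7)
The plan is to specialize the preceding proposition, which gives
\[
D_f(a\,\rho\|b\,\sigma) = (a-b) + b\, D_{F_{a,b}}(\rho\|\sigma) + b \int_{b/a}^1 F_{a,b}''(\hat\gamma)\,(1-\hat\gamma)\, \d\hat\gamma,
\]
with $F_{a,b}(x) = f(\tfrac{a}{b}x)$, to the two functions $f(x) = x\log x$ and $f(x) = (x^\alpha-1)/(\alpha-1)$. In each case the strategy is the same: compute $F_{a,b}''$, identify the bulk term $b\, D_{F_{a,b}}(\rho\|\sigma)$ as a multiple of the corresponding divergence of $\rho$ and $\sigma$, evaluate the correction integral explicitly, and combine.

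For the relative entropy, I would use $f''(x) = 1/x$ to get $F_{a,b}''(x) = a/(bx)$. The bulk term then satisfies $b\,D_{F_{a,b}}(\rho\|\sigma) = a\,D(\rho\|\sigma)$, since every factor of $a/b$ pulls cleanly through the integral representation of $D(\rho\|\sigma)$. The correction integral is elementary:
\[
b\int_{b/a}^1 \frac{a}{b\hat\gamma}(1-\hat\gamma)\,\d\hat\gamma = a\bigl[\log\hat\gamma - \hat\gamma\bigr]_{b/a}^1 = -a - a\log(b/a) + b,
\]
and adding the prefactor $(a-b)$ yields the claimed $a\,D(\rho\|\sigma) + a\log(a/b)$.

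For the Hellinger divergence, I would use $f''(x) = \alpha x^{\alpha-2}$ so that $F_{a,b}''(x) = \alpha (a/b)^\alpha x^{\alpha-2}$; this immediately gives $b\, D_{F_{a,b}}(\rho\|\sigma) = a^\alpha b^{1-\alpha}\,H_\alpha(\rho\|\sigma)$. The correction integral splits into two standard power-law integrals; using the identities $a^\alpha b^{1-\alpha}(b/a)^{\alpha-1} = a$ and $a^\alpha b^{1-\alpha}(b/a)^\alpha = b$, it simplifies to $\frac{1}{\alpha-1}\bigl[a^\alpha b^{1-\alpha} - \alpha a + (\alpha-1)b\bigr]$. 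Combining with $(a-b)$ gives $\frac{1}{\alpha-1}\bigl[a^\alpha b^{1-\alpha} - a\bigr] + a^\alpha b^{1-\alpha}\, H_\alpha(\rho\|\sigma)$, as claimed.

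Neither step is hard, so there is no real obstacle beyond bookkeeping. The one subtlety worth flagging is the convention on the correction integral when $a<b$: the lower limit $b/a$ then exceeds $1$, so $\int_{b/a}^1$ must be read as $-\int_1^{b/a}$. Because both sides of the claimed identities are analytic in $a$ and $b$ on $(0,\infty)^2$ and agree on the open set $a\geq b$ (where the case split in the proof of the proposition directly applies), the identities extend by analytic continuation to all $a,b>0$, so no separate case analysis is needed in the final write-up.
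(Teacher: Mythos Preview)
Your proposal is correct and follows exactly the same approach as the paper: compute $F_{a,b}''$ to identify the bulk term as a scalar multiple of the original divergence, evaluate the correction integral, and combine with the $(a-b)$ prefactor. The paper's proof is essentially a two-line sketch of precisely this computation, so your write-up simply fills in the bookkeeping it omits; your remark on the $a<b$ case is a harmless clarification, since the proposition is already stated for all $a,b>0$ and the oriented integral convention handles it automatically.
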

\begin{proof}
    In the first case we have $f(x)=x\log x$ and therefore $F''_{a,b}(x)=\frac{a}{b}f''(x)$ from which the claim can easily be checked. 
    Similarly, in the second case we have $f_\alpha(x)=\frac{x^\alpha-1}{\alpha-1}$ and hence $F''_{a,b}(x)= \left(\frac{a}{b}\right)^\alpha f''(x)$ from which the claim follows after some algebra. 
\end{proof}
Defining the generalized quantities
\begin{align}
    Q_\alpha(A\| B) &= \tr[A] + (\alpha-1)\Hell(A\| B) \qquad \textnormal{and} \\
    D_\alpha(A\| B) &= \frac1{\alpha-1}\log\frac{Q_\alpha(A\| B)}{\tr A}, 
\end{align}
we get the usual scaling of $Q_\alpha(a\rho\| b\sigma) = a^\alpha b^{1-\alpha} Q_\alpha(\rho\| \sigma)$ and $D_\alpha(a\rho\| b\sigma) = D_\alpha(\rho\| \sigma) + \log\frac{a}{b}$.

\bibliographystyle{ultimate}
\bibliography{lib}

\begin{thebibliography}{10}

\bibitem{asoodeh2021local}
S.~Asoodeh, M.~Aliakbarpour, and F.~P. Calmon.
\newblock {\em ``Local Differential Privacy Is Equivalent to Contraction of an
  $ f $-Divergence''}.
\newblock In {\em 2021 IEEE International Symposium on Information Theory
  (ISIT)}, pages 545--550, (2021).

\bibitem{audenaert2007sharp}
K.~M. Audenaert.
\newblock {\em ``A sharp continuity estimate for the von Neumann entropy''}.
\newblock Journal of Physics A: Mathematical and Theoretical {\bf
  40}(28):\,8127, (2007).

\bibitem{audenaert2012comparisons}
K.~M. Audenaert.
\newblock {\em ``Comparisons between quantum state distinguishability
  measures''}.
\newblock Preprint, \href{http://arxiv.org/abs/1207.1197}{arXiv:\,1207.1197}
  (2012).

\bibitem{audenaert2014quantum}
K.~M. Audenaert.
\newblock {\em ``Quantum skew divergence''}.
\newblock Journal of Mathematical Physics {\bf 55}(11):\,112202, (2014).

\bibitem{audenaert2007discriminating}
K.~M. Audenaert, J.~Calsamiglia, R.~Munoz-Tapia, E.~Bagan, L.~Masanes, A.~Acin,
  and F.~Verstraete.
\newblock {\em ``Discriminating states: The quantum Chernoff bound''}.
\newblock Physical review letters {\bf 98}(16):\,160501, (2007).

\bibitem{audenaert2005continuity}
K.~M. Audenaert and J.~Eisert.
\newblock {\em ``Continuity bounds on the quantum relative entropy''}.
\newblock Journal of mathematical physics {\bf 46}(10):\,102104, (2005).

\bibitem{audenaert2011continuity}
K.~M. Audenaert and J.~Eisert.
\newblock {\em ``Continuity bounds on the quantum relative entropy—II''}.
\newblock Journal of Mathematical Physics {\bf 52}(11):\,112201, (2011).

\bibitem{audenaert2008asymptotic}
K.~M. Audenaert, M.~Nussbaum, A.~Szko{\l}a, and F.~Verstraete.
\newblock {\em ``Asymptotic error rates in quantum hypothesis testing''}.
\newblock Communications in Mathematical Physics {\bf 279}:\,251--283, (2008).

\bibitem{berta2019stein}
M.~Berta, C.~Hirche, E.~Kaur, and M.~M. Wilde.
\newblock {\em ``Stein’s lemma for classical-quantum channels''}.
\newblock In {\em 2019 IEEE International Symposium on Information Theory
  (ISIT)}, pages 2564--2568, (2019).

\bibitem{binette2019note}
O.~Binette.
\newblock {\em ``A note on reverse Pinsker inequalities''}.
\newblock IEEE Transactions on Information Theory {\bf 65}(7):\,4094--4096,
  (2019).

\bibitem{bluhm2023continuity}
A.~Bluhm, {\'A}.~Capel, P.~Gondolf, and A.~P{\'e}rez-Hern{\'a}ndez.
\newblock {\em ``Continuity of quantum entropic quantities via almost
  convexity''}.
\newblock IEEE Transactions on Information Theory , (2023).

\bibitem{briet2009properties}
J.~Bri{\"e}t and P.~Harremo{\"e}s.
\newblock {\em ``Properties of classical and quantum Jensen-Shannon
  divergence''}.
\newblock Physical review A {\bf 79}(5):\,052311, (2009).

\bibitem{choi1994equivalence}
M.-D. Choi, M.~B. Ruskai, and E.~Seneta.
\newblock {\em ``Equivalence of certain entropy contraction coefficients''}.
\newblock Linear algebra and its applications {\bf 208}:\,29--36, (1994).

\bibitem{cohen1998comparisons}
J.~Cohen, J.~H. Kempermann, and G.~Zbaganu.
\newblock {\em Comparisons of stochastic matrices with applications in
  information theory, statistics, economics and population}.
\newblock Springer Science \& Business Media (1998).

\bibitem{cohen1993relative}
J.~E. Cohen, Y.~Iwasa, G.~Rautu, M.~B. Ruskai, E.~Seneta, and G.~Zbaganu.
\newblock {\em ``Relative entropy under mappings by stochastic matrices''}.
\newblock Linear algebra and its applications {\bf 179}:\,211--235, (1993).

\bibitem{csiszar63}
I.~Csisz{\'{a}}r.
\newblock {\em ``{Eine informationstheoretische Ungleichung und ihre Anwendung
  auf den Beweis der Ergodizitat von Markoffschen Ketten}''}.
\newblock Magyar. Tud. Akad. Mat. Kutato Int. Kozl {\bf 8}:\,85--108, (1963).

\bibitem{csiszar2011information}
I.~Csisz{\'a}r and J.~K{\"o}rner.
\newblock {\em Information theory: coding theorems for discrete memoryless
  systems}.
\newblock Cambridge University Press (2011).

\bibitem{csiszar2006context}
I.~Csisz{\'a}r and Z.~Talata.
\newblock {\em ``Context tree estimation for not necessarily finite memory
  processes, via BIC and MDL''}.
\newblock IEEE Transactions on Information theory {\bf 52}(3):\,1007--1016,
  (2006).

\bibitem{degroot1962uncertainty}
M.~H. DeGroot.
\newblock {\em ``Uncertainty, information, and sequential experiments''}.
\newblock The Annals of Mathematical Statistics {\bf 33}(2):\,404--419, (1962).

\bibitem{frenkel2022integral}
P.~E. Frenkel.
\newblock {\em ``Integral formula for quantum relative entropy implies data
  processing inequality''}.
\newblock Preprint, \href{http://arxiv.org/abs/2208.12194}{arXiv:\,2208.12194}
  (2022).

\bibitem{gour2020optimal}
G.~Gour and M.~Tomamichel.
\newblock {\em ``Optimal extensions of resource measures and their
  applications''}.
\newblock Physical Review A {\bf 102}(6):\,062401, (2020).

\bibitem{gyorfi2001class}
L.~Gy{\"o}rfi and I.~Vajda.
\newblock {\em ``A class of modified Pearson and Neyman statistics''}.
\newblock Statistics \& Risk Modeling {\bf 19}(3):\,239--252, (2001).

\bibitem{hiai2011quantum}
F.~Hiai, M.~Mosonyi, D.~Petz, and C.~B{\'e}ny.
\newblock {\em ``Quantum f-divergences and error correction''}.
\newblock Reviews in Mathematical Physics {\bf 23}(07):\,691--747, (2011).

\bibitem{hiai1991proper}
F.~Hiai and D.~Petz.
\newblock {\em ``The proper formula for relative entropy and its asymptotics in
  quantum probability''}.
\newblock Communications in mathematical physics {\bf 143}:\,99--114, (1991).

\bibitem{hiai2016contraction}
F.~Hiai and M.~B. Ruskai.
\newblock {\em ``Contraction coefficients for noisy quantum channels''}.
\newblock Journal of Mathematical Physics {\bf 57}(1):\,015211, (2016).

\bibitem{hirche2022bounding}
C.~Hirche and F.~Leditzky.
\newblock {\em ``Bounding Quantum Capacities via Partial Orders and
  Complementarity''}.
\newblock IEEE Transactions on Information Theory {\bf 69}(1):\,283--297,
  (2022).

\bibitem{hirche2022contraction}
C.~Hirche, C.~Rouz{\'e}, and D.~S. Fran{\c{c}}a.
\newblock {\em ``On contraction coefficients, partial orders and approximation
  of capacities for quantum channels''}.
\newblock Quantum {\bf 6}:\,862, (2022).

\bibitem{hirche2022quantum}
C.~Hirche, C.~Rouz{\'e}, and D.~S. Fran{\c{c}}a.
\newblock {\em ``Quantum differential privacy: An information theory
  perspective''}.
\newblock Preprint, \href{http://arxiv.org/abs/2202.10717}{arXiv:\,2202.10717}
  (2022).

\bibitem{jenvcova2023recoverability}
A.~Jen{\v{c}}ov{\'a}.
\newblock {\em ``Recoverability of quantum channels via hypothesis testing''}.
\newblock Preprint, \href{http://arxiv.org/abs/2303.11707}{arXiv:\,2303.11707}
  (2023).

\bibitem{kastoryano2013quantum}
M.~J. Kastoryano and K.~Temme.
\newblock {\em ``Quantum logarithmic Sobolev inequalities and rapid mixing''}.
\newblock Journal of Mathematical Physics {\bf 54}(5):\,052202, (2013).

\bibitem{le2012asymptotic}
L.~Le~Cam.
\newblock {\em Asymptotic methods in statistical decision theory}.
\newblock Springer Science \& Business Media (2012).

\bibitem{lee1999measures}
L.~Lee.
\newblock {\em ``Measures of distributional similarity''}.
\newblock In {\em Proceedings of the 37th annual meeting of the Association for
  Computational Linguistics on Computational Linguistics}, pages 25--32,
  (1999).

\bibitem{lesniewski1999monotone}
A.~Lesniewski and M.~B. Ruskai.
\newblock {\em ``Monotone Riemannian metrics and relative entropy on
  noncommutative probability spaces''}.
\newblock Journal of Mathematical Physics {\bf 40}(11):\,5702--5724, (1999).

\bibitem{liyao23}
K.~Li, Y.~Yao, and M.~Hayashi.
\newblock {\em ``{Tight Exponential Analysis for Smoothing the Max-Relative
  Entropy and for Quantum Privacy Amplification}''}.
\newblock \href{http://dx.doi.org/10.1109/TIT.2022.3217671}{IEEE Transactions
  on Information Theory {\bf 69}(3):\,1680--1694} (2023).

\bibitem{liese1987convex}
F.~Liese and I.~Vajda.
\newblock {\em Convex Statistical Distances}.
\newblock Teubner (1987).

\bibitem{liese2006divergences}
F.~Liese and I.~Vajda.
\newblock {\em ``On divergences and informations in statistics and information
  theory''}.
\newblock IEEE Transactions on Information Theory {\bf 52}(10):\,4394--4412,
  (2006).

\bibitem{lintomamichel14}
S.~M. Lin and M.~Tomamichel.
\newblock {\em ``{Investigating Properties of a Family of Quantum R{\'{e}}nyi
  Divergences}''}.
\newblock \href{http://dx.doi.org/10.1007/s11128-015-0935-y}{Quantum
  Information Processing {\bf 14}(4):\,1501--1512} (2015).

\bibitem{makur2015linear}
A.~Makur and L.~Zheng.
\newblock {\em ``Linear Bounds between Contraction Coefficients for $ f
  $-Divergences''}.
\newblock Preprint, \href{http://arxiv.org/abs/1510.01844}{arXiv:\,1510.01844}
  (2015).

\bibitem{matsumoto2018new}
K.~Matsumoto.
\newblock {\em ``A new quantum version of f-divergence''}.
\newblock In {\em Reality and Measurement in Algebraic Quantum Theory: NWW
  2015, Nagoya, Japan, March 9-13}, pages 229--273. Springer (2018).

\bibitem{mosonyi2015quantum}
M.~Mosonyi and T.~Ogawa.
\newblock {\em ``Quantum hypothesis testing and the operational interpretation
  of the quantum R{\'e}nyi relative entropies''}.
\newblock Communications in Mathematical Physics {\bf 334}:\,1617--1648,
  (2015).

\bibitem{muller2018sandwiched}
A.~M{\"u}ller-Hermes and D.~S. Franca.
\newblock {\em ``Sandwiched R{\'e}nyi convergence for quantum evolutions''}.
\newblock Quantum {\bf 2}:\,55, (2018).

\bibitem{muller2016relative}
A.~M{\"u}ller-Hermes, D.~Stilck~Fran{\c{c}}a, and M.~M. Wolf.
\newblock {\em ``Relative entropy convergence for depolarizing channels''}.
\newblock Journal of Mathematical Physics {\bf 57}(2):\,022202, (2016).

\bibitem{muller2013quantum}
M.~M{\"u}ller-Lennert, F.~Dupuis, O.~Szehr, S.~Fehr, and M.~Tomamichel.
\newblock {\em ``On quantum R{\'e}nyi entropies: A new generalization and some
  properties''}.
\newblock Journal of Mathematical Physics {\bf 54}(12):\,122203, (2013).

\bibitem{nussbaum2009chernoff}
M.~Nussbaum and A.~Szko{\l}a.
\newblock {\em ``The Chernoff Lower Bound for Symmetric Quantum Hypothesis
  Testing''}.
\newblock The Annals of Statistics pages 1040--1057, (2009).

\bibitem{ogawa2000strong}
T.~Ogawa and H.~Nagaoka.
\newblock {\em ``Strong converse and Stein's lemma in quantum hypothesis
  testing''}.
\newblock IEEE Transactions on Information Theory {\bf 46}(7):\,2428--2433,
  (2000).

\bibitem{petz1985quasi}
D.~Petz.
\newblock {\em ``Quasi-entropies for states of a von Neumann algebra''}.
\newblock Publications of the Research Institute for Mathematical Sciences {\bf
  21}(4):\,787--800, (1985).

\bibitem{petz1986quasi}
D.~Petz.
\newblock {\em ``Quasi-entropies for finite quantum systems''}.
\newblock Reports on mathematical physics {\bf 23}(1):\,57--65, (1986).

\bibitem{petz2007quantum}
D.~Petz.
\newblock {\em Quantum information theory and quantum statistics}.
\newblock Springer Science \& Business Media (2007).

\bibitem{petz1998contraction}
D.~Petz and M.~B. Ruskai.
\newblock {\em ``Contraction of generalized relative entropy under stochastic
  mappings on matrices''}.
\newblock Infinite Dimensional Analysis, Quantum Probability and Related Topics
  {\bf 1}(01):\,83--89, (1998).

\bibitem{polyanskiy2010channel}
Y.~Polyanskiy, H.~V. Poor, and S.~Verd{\'u}.
\newblock {\em ``Channel coding rate in the finite blocklength regime''}.
\newblock IEEE Transactions on Information Theory {\bf 56}(5):\,2307--2359,
  (2010).

\bibitem{raginsky2016strong}
M.~Raginsky.
\newblock {\em ``Strong data processing inequalities and $\Phi$-Sobolev
  inequalities for discrete channels''}.
\newblock IEEE Transactions on Information Theory {\bf 62}(6):\,3355--3389,
  (2016).

\bibitem{regula2022tight}
B.~Regula.
\newblock {\em ``Tight constraints on probabilistic convertibility of quantum
  states''}.
\newblock Quantum {\bf 6}:\,817, (2022).

\bibitem{regula2022postselected}
B.~Regula, L.~Lami, and M.~M. Wilde.
\newblock {\em ``Postselected quantum hypothesis testing''}.
\newblock arXiv preprint arXiv:2209.10550 , (2022).

\bibitem{renyi61}
A.~R{\'{e}}nyi.
\newblock {\em ``{On Measures of Information and Entropy}''}.
\newblock In {\em Proc. 4th Berkeley Symposium on Mathematical Statistics and
  Probability}, volume~1, pages 547--561, Berkeley, California, USA(1961).

\bibitem{ruskai1994beyond}
M.~B. Ruskai.
\newblock {\em ``Beyond strong subadditivity? Improved bounds on the
  contraction of generalized relative entropy''}.
\newblock Reviews in Mathematical Physics {\bf 6}(05a):\,1147--1161, (1994).

\bibitem{sason2015reverse}
I.~Sason.
\newblock {\em ``On reverse Pinsker inequalities''}.
\newblock Preprint, \href{http://arxiv.org/abs/1503.07118}{arXiv:\,1503.07118}
  (2015).

\bibitem{sason2015upper}
I.~Sason and S.~Verd{\'u}.
\newblock {\em ``Upper bounds on the relative entropy and R{\'e}nyi divergence
  as a function of total variation distance for finite alphabets''}.
\newblock In {\em 2015 IEEE Information Theory Workshop-Fall (ITW)}, pages
  214--218, (2015).

\bibitem{sason2016f}
I.~Sason and S.~Verd{\'u}.
\newblock {\em ``$ f $-divergence Inequalities''}.
\newblock IEEE Transactions on Information Theory {\bf 62}(11):\,5973--6006,
  (2016).

\bibitem{sharma2012strong}
N.~Sharma and N.~A. Warsi.
\newblock {\em ``On the strong converses for the quantum channel capacity
  theorems''}.
\newblock Preprint, \href{http://arxiv.org/abs/1205.1712}{arXiv:\,1205.1712}
  (2012).

\bibitem{temme2010chi}
K.~Temme, M.~J. Kastoryano, M.~B. Ruskai, M.~M. Wolf, and F.~Verstraete.
\newblock {\em ``The $\chi$ 2-divergence and mixing times of quantum Markov
  processes''}.
\newblock Journal of Mathematical Physics {\bf 51}(12):\,122201, (2010).

\bibitem{thompson1963certain}
A.~C. Thompson.
\newblock {\em ``On certain contraction mappings in a partially ordered vector
  space.''}.
\newblock Proceedings of the American Mathematical Society {\bf
  14}(3):\,438--443, (1963).

\bibitem{tomamichel2015quantum}
M.~Tomamichel.
\newblock {\em Quantum information processing with finite resources:
  mathematical foundations}.
\newblock volume~5, Springer (2015).

\bibitem{vershynina2019upper}
A.~Vershynina.
\newblock {\em ``Upper continuity bound on the quantum quasi-relative
  entropy''}.
\newblock \href{http://dx.doi.org/10.1063/1.5114619}{Journal of Mathematical
  Physics {\bf 60}(10):\,102201} (2019).

\bibitem{watanabe2012private}
S.~Watanabe.
\newblock {\em ``Private and quantum capacities of more capable and less noisy
  quantum channels''}.
\newblock Physical Review A {\bf 85}(1):\,012326, (2012).

\bibitem{wilde2018optimized}
M.~M. Wilde.
\newblock {\em ``Optimized quantum f-divergences and data processing''}.
\newblock \href{http://dx.doi.org/10.1088/1751-8121/aad5a1}{Journal of Physics
  A: Mathematical and Theoretical {\bf 51}(37):\,374002} (2018).

\bibitem{wilde2020amortized}
M.~M. Wilde, M.~Berta, C.~Hirche, and E.~Kaur.
\newblock {\em ``Amortized channel divergence for asymptotic quantum channel
  discrimination''}.
\newblock \href{http://dx.doi.org/10.1007/s11005-020-01297-7}{Letters in
  Mathematical Physics {\bf 110}:\,2277--2336} (2020).

\bibitem{wilde2014strong}
M.~M. Wilde, A.~Winter, and D.~Yang.
\newblock {\em ``Strong converse for the classical capacity of
  entanglement-breaking and Hadamard channels via a sandwiched R{\'e}nyi
  relative entropy''}.
\newblock \href{http://dx.doi.org/10.1007/s00220-014-2122-x}{Communications in
  Mathematical Physics {\bf 331}:\,593--622} (2014).

\bibitem{Zamanlooy2023}
B.~Zamanlooy and S.~Asoodeh.
\newblock {\em ``Strong Data Processing Inequalities for Locally Differentially
  Private Mechanisms''}.
\newblock In {\em 2023 IEEE International Symposium on Information Theory
  (ISIT)}, (2023).

\bibitem{zhang2016lower}
L.~Zhang, K.~Bu, and J.~Wu.
\newblock {\em ``A lower bound on the fidelity between two states in terms of
  their trace-distance and max-relative entropy''}.
\newblock \href{http://dx.doi.org/10.1080/03081087.2015.1057098}{Linear and
  Multilinear Algebra {\bf 64}(5):\,801--806} (2016).

\bibitem{zhou2017differential}
L.~Zhou and M.~Ying.
\newblock {\em ``Differential privacy in quantum computation''}.
\newblock In {\em 2017 IEEE 30th Computer Security Foundations Symposium
  (CSF)}, pages 249--262, (2017).

\end{thebibliography}

\end{document}